	\newtheorem{prop}{Proposition}[section]
\theoremstyle{definition}
	\newtheorem{definition}{Definition}[section]
	\newcommand*{\vb}[1]{\boldsymbol{#1}} %vector bold
	\newcommand*{\dd}{\,\mathrm{d}} %differential
\newcommand*{\sub}[1]{_{\mathrm{#1}}}
\newcommand*{\ie}{i.e.\ }
\newcommand*{\eg}{e.g.\ }
\newcommand*{\R}{\mathbb{R}}
\newcommand*{\C}{\mathbb{C}}
\newcommand*{\N}{\mathbb{N}}
\newcommand*{\Z}{\mathbb{Z}}
\newcommand*{\iu}{\mathrm{i}\mkern1mu} %imaginary unit
\newcommand*{\pg}{\pi} %pi
\newcommand*{\Leb}{L} %Lebesgue space
\newcommand*{\dom}{\mathfrak{D}} %domain
\newcommand*{\Bb}{\vb{B}} 						%B-bold
\newcommand*{\Eb}{\vb{\mathcal{E}}} 			%E(lectric)-bold
\newcommand*{\Ef}{\mathcal{E}} 					%E(lectric)-field
\newcommand*{\A}{\vb{A}}						%Vector potential A
\newcommand*{\p}{\vb{p}}						%Canonical momentum p
\newcommand*{\Os}{\Omega\sub{s}}					%Omega_s(trip)
\newcommand*{\dOs}{\partial\Omega\sub{s}}		%de Omega_s(trip)
\newcommand*{\kad}{\mathsf{k}}
\newcommand*{\Bad}{\mathsf{B}}
\newcommand*{\Ead}{\mathsf{E}}
\newcommand*{\yy}{\zeta}
\newcommand*{\grad}{\vb{\nabla}}	
\newcommand*{\gradA}{\vb{\nabla}_{\!\A}}	%covariant derivative nabla_A
\newcommand*{\Fx}{\mathcal{F}_x} %partial Fourier transform
\newcommand*{\Cinf}{ \mathrm{C}_0^{\infty} }
\newcommand*{\AbsC}{\mathrm{AC}}
\newcommand*{\Schw}{\mathcal{S}} %Schwartz space
\newcommand*{\Herm}{\mathrm{H}}	%Hermite polynomials
\newcommand*{\UU}{\mathrm{U}}	
\newcommand*{\htheta}{\uptheta}	%Heaviseide theta
\newcommand*{\ddelta}{\updelta}	%Dirac delta
	\definecolor{Verde}{cmyk}{1,0.21,1,0.2}
	\definecolor{Blu}{cmyk}{1,0.6,0,0.2}
	\definecolor{Rosso}{cmyk}{0.3,1,1,0.2}	
	\definecolor{Arancione}{cmyk}{0,0.8,1,0}	
	\definecolor{darkred}{cmyk}{0,1,1,0.3}	
	\definecolor{darkdarkred}{cmyk}{0,1,1,0.6}
	\pgfplotsset{/pgf/number  format/1000 sep={\,}, compat=newest}
\pgfplotsset{
	basicoptions/.style={
				tick label style={font=\small},
				label style={font=\small},
				legend style={font=\small, fill=none},
				legend style={draw=none}, legend cell align=left,},	
}
\begin{document}

\title{Boundary conditions for the quantum Hall effect}

\author{Giuliano Angelone$^{1,2,*}$, Manuel Asorey$^{3}$, Paolo Facchi$^{1,2}$, Davide Lonigro$^{1,2}$, Yisely Martinez$^{3}$ }

\address{$^1$ Dipartimento di Fisica, Universit\`a di Bari, I-70126 Bari, Italy}
\address{$^2$ INFN, Sezione di Bari, I-70126 Bari, Italy}
\address{$^3$ Centro de Astropart\'{\i}culas y F\'{\i}sica de Altas Energ\'{\i}as, Departamento de F\'{\i}sica Te\'orica,
Universidad de Zaragoza, E-50009 Zaragoza, Spain}
\ead{$^*$giuliano.angelone@ba.infn.it}

\begin{abstract}
We formulate a self-consistent model of the integer quantum Hall effect on an infinite strip, using boundary conditions to investigate the influence of finite-size effects on the Hall conductivity. By exploiting the translation symmetry along the strip, we determine both the general spectral properties of the system for a large class of boundary conditions respecting such symmetry, and the full spectrum for (fibered) Robin boundary conditions. In particular, we find that the latter introduce a new kind of states with no classical analogues, and add a finer structure to the quantization pattern of the Hall conductivity. Moreover, our model also predicts the breakdown of the quantum Hall effect at high values of the applied electric field.
\end{abstract}

\noindent{\it Keywords}: Quantum Hall effect, Quantum boundary conditions, Self-adjoint extensions, Edge states

%\submitto{\JPA}
\maketitle
%\tableofcontents
%\newpage

\section{Introduction}
Since its discovery, dating back to 1980~\cite{KliDoPe80}, the quantum Hall effect (QHE) has generated huge interest in the scientific community, both from the theoretical and experimental perspective. The quantization of the Hall conductivity, and in particular its surprising robustness (i.e.\ its independence from the details of the experimental setup), has stimulated many theoretical physicists to find a compelling explanation of the phenomenon.

A large number of approaches have been indeed considered, ranging from the phenomenological description of edge currents~\cite{Halp82, Butt88}, to the application of topology concepts in both lattice and continuous models~\cite{Kohmo85, Hatsu93, Hatsu93b, Hatsu97}, to random Hamiltonians~\cite{Wang97} and effective quantum field theories~\cite{ZhaHan89, CaCha91}. Besides, the QHE can be surely considered as the progenitor of topological matter, a subject nowadays very active that has recently expanded in many interesting directions~\cite{HaKa10, QiZha11}. An exhaustive survey of the literature is well beyond our scopes, and we refer the reader to the reviews~\cite{AvrOsa03, KliCha20}, as well as to the books~\cite{ChaPie95, Yo02} and to the lecture notes~\cite{Tong16}.

In this work we formulate a self-consistent model of the integer QHE, describing a bounded quantum Hall system by using suitable boundary conditions, instead of resorting to a confining potential, as it is usually done. Boundary conditions have already been applied in the past for the description of the QHE, with a focus on finite-size effects and on the phenomenology of the edge states~\cite{NiuTho87, JJV95, AANS98, DeBP99, CoHiSo02, MU17}, but their influence on the quantization of the Hall conductivity is still scarcely investigated. For our purposes boundary conditions provide a simple effective framework to describe the QHE, reducing the latter to its key ingredients: a charged particle, a bounded two-dimensional system, and a strong magnetic field. Despite its simplicity, our model will prove remarkably powerful, as we will predict both the quantization of the Hall conductivity as well as its breakdown at high values of the applied electric field. Moreover, as we will extensively discuss, novel phenomena appear in the quantum regime when certain boundary conditions are applied.

More generally, boundary conditions represent a versatile tool which allows one to model the interaction between the bulk of a system and its boundary, also providing a clear physical intuition of what is going on. In recent years, quantum boundary conditions have accordingly attracted an increasing interest in different branches of quantum physics, such as for the description of the Casimir effect~\cite{AsoAlvMun06, AsoAlvMun07, AsMu13, AsBaPe16}, topology change~\cite{BaBiMa95, IboPere15}, geometric phases~\cite{FaGaMa16}, topological insulators and QCD~\cite{AsBaPe13}, isospectrality and inverse spectral problems~\cite{isob, LaKu21}, and other phenomena such as spontaneous symmetry breaking, quantum anomalies~\cite{AsoMun12}, and entanglement generation~\cite{IbMaPe14}.

The present paper is organized as follows. We start with an extensive theoretical study of the model: in Section~\ref{sec-model} we introduce the Hall Hamiltonian with fibered, but otherwise arbitrary, boundary conditions. In Section~\ref{sec-generalspectrum} we present some general results regarding the general structure of the spectrum  and its asymptotic properties. In Section~\ref{sec-transport} we then move to  transport properties, defining the velocity operator and deriving a formula to compute the Hall conductivity of the system.

Armed with these instruments, in Section~\ref{sec-saext} we complete the discussion of the model. We first identify a  family of self-adjoint extensions, related to the Robin boundary conditions, which turns out to be relevant for the description of the QHE. We then determine the spectrum of the system and its dependence on the external fields and on boundary conditions, propose a semi-classical picture, and investigate the finite-size effects on the quantization of the Hall conductivity.

\section{Quantum Hall Hamiltonian with fibered boundary conditions}\label{sec-model}
By a  closed \emph{quantum Hall syste} we denote  an ensemble of  electrons confined in an effective two-dimensional substrate, the \emph{Hall device}, that is kept at a sufficiently low temperature and is subjected to a magnetic field, perpendicular to the device, and to a (possibly vanishing) electric field perpendicular to the magnetic one. For our purposes, we will assume the electrons to be  non-relativistic and non-interacting. As it turns out, interactions seems to be necessary in order to describe the fractional regime of the QHE, but can be neglected in the integer regime, to which we are interested~\cite{ChaPie95, Yo02}. On the other hand, we mention that relativistic effects do actually play a role in some quantum Hall systems such as, for example, when the substrate is made of graphene~\cite{JZ07, No07}. Moreover, for the sake of simplicity, we will model electrons as spinless particles, although spin can be added to our model without substantial effort. 

By neglecting the mutual interaction between the electrons, the properties of the total Hamiltonian can  be easily derived from those of the single-particle Hamiltonian $H$, which can be written as $H=H\sub{fields}+V\sub{device}$, with $H\sub{fields}$ and $V\sub{device}$ describing the interaction of the electron with the external fields and with the Hall device, respectively. In particular, the latter term contains information  on both the microscopical (lattice) structure of the Hall device and its macroscopic shape. In the following we are going to consider a \emph{coarse-grained} and \emph{hard-wall} model, by assuming \ie that $V\sub{device}$ vanishes inside the Hall device and is (formally) infinite on its exterior. In this way, electrons are effectively constrained in a certain region $\Omega$ of the plane, which macroscopically corresponds to the geometry of the Hall device, and the microscopic details of $V\sub{device}$ are encoded into suitable boundary conditions. Further details about some possible spectral effects of $V\sub{device}$ are given in~\ref{sec:floquet}.

\begin{figure}[tb]
%\tikzsetnextfilename{fig1}
\centering
\includegraphics{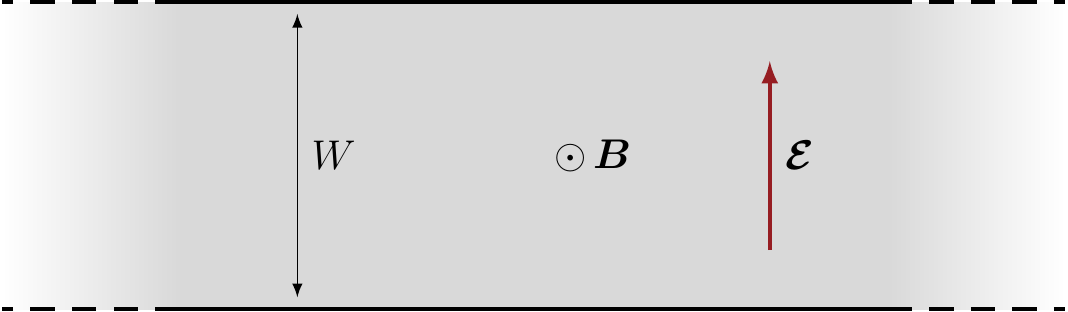}
\caption{Sketch of a Hall strip, consisting in an infinite strip of width $W$ subjected to the external crossed fields $\Bb=B\hat{\vb{z}}$ and $\Eb=\Ef\hat{\vb{y}}$.}
\label{fig-strip}
\end{figure}

\subsection{The Hall Hamiltonian}
Consider a non-relativistic spinless particle with mass $m$ and negative electric charge $q=-e$, constrained in an infinite strip of width $W>0$, 
\begin{equation}\label{eq-Omegas}
\Os \equiv \{(x,y)\in\R\times I_W\}\subset\R^2\,,\qquad I_W \equiv ({-}W/2,W/2)\,,
\end{equation}
and subjected to the external  fields
\begin{equation}
\Bb=(0,0,B)=B\hat{\vb{z}}\qquad\textnormal{and}\qquad \Eb=(0,\Ef,0)=\Ef\hat{\vb{y}}\,,\label{eq-BEfields}
\end{equation}
\ie to a uniform magnetic field orthogonal to the surface of the strip and to a uniform transversal electric field. This setup, which we will hereafter denote as a \emph{Hall strip}, is depicted in Figure~\ref{fig-strip}. The Hamiltonian of the system, which will be referred to as the \emph{Hall Hamiltonian}, takes thus the form
\begin{equation}\label{eq-HAE}
H_{\A}\equiv -\frac{\hbar^2}{2m}\gradA^{2}+e\Ef y\,,
\end{equation}
where $\hbar\equiv h/(2\pg)$ is the reduced Planck constant,
\begin{equation}\label{eq-gradA}
\gradA\equiv \grad+\iu\frac{e}{\hbar}\A=\Biggl(\frac{\partial}{\partial x}+\iu\frac{e}{\hbar}A_x, \frac{\partial}{\partial y}+\iu\frac{e}{\hbar}A_y\Biggr)
\end{equation}
denotes the covariant derivative, and $\A=(A_x,A_y)$ is a (suitably regular) vector field satisfying  
\begin{equation}\label{eq-curlA}
B=\frac{\partial A_y}{\partial x}-
\frac{\partial A_x}{\partial y} \,,
\end{equation}
and represents the vector potential associated with the magnetic field $\Bb=B\hat{\vb{z}}$. Since the electron is constrained in the strip $\Os$, the Hamiltonian $H_{\A}$ acts on the Hilbert space   $\Leb^2(\Os)$
%\begin{equation}
%\Leb^2(\Os)=\Bigl\{ \psi\colon\Os\to\C : \int_{\Os} \left|\psi(x,y)\right|^2 \dd{x} \dd{y} <\infty \Bigr\}
%\end{equation}
of complex square-integrable functions on $\Os$, endowed with the %usual $\Leb^2$ 
scalar product
%namely:
\begin{equation}
 \braket{\psi|\phi}%_{\Leb^2(\Os)}\equiv  
=\int_{\Os} \psi^{*}(x,y)\,\phi(x,y) \dd{x}\dd{y}
%\,,\qquad\| \psi\|^2_{\Leb^2(\Os)}\equiv \braket{\psi|\psi}_{\Leb^2(\Os)}\,.
\end{equation}
and its associated norm, $\| \psi\|^2= \braket{\psi|\psi}$.

For later use, we note that, since $\Os=\R\times I_W$ is a Cartesian product, $\Leb^2(\Os)$ naturally decomposes into the tensor product
\begin{equation}
\Leb^2(\Os)\cong \Leb^2(\R)\otimes \Leb^2(I_W)\,.
\end{equation}

The vector potential $\A$ is not completely fixed by the relation~\eref{eq-curlA}: as a matter of fact, this introduces a gauge freedom, \ie an ambiguity in the choice of the Hamiltonian. Despite the fact that many physically observable quantities, such as the spectrum of $H_{\A}$, are independent of the gauge choice~\cite{Lein83}, such a choice is needed in order to perform calculations. More importantly, one should be aware that, generally, even boundary conditions depend on the gauge choice: see~\cite{quanBill} for a detailed discussion. For the Hall strip \eref{eq-Omegas}--\eref{eq-BEfields}, a profitable choice is the Landau gauge
\begin{equation}\label{eq-Landau}
\A=(-By, 0)\,,
\end{equation}
which clearly satisfies Eq.~\eref{eq-curlA}. In this gauge, the Hall Hamiltonian reads
\begin{equation}\label{eq-HBE}
H=H(B,\Ef)=-\frac{\hbar^2}{2m}\Biggl(\frac{\partial}{\partial x} - \iu \frac{e}{\hbar}By\Biggr)^2 -\frac{\hbar^2}{2m} \frac{\partial^2}{\partial y^2} +e\Ef y\,,
\end{equation}
and it is invariant under longitudinal translations, \ie we formally have that
\begin{equation}\label{eq-xInv}
[ p_x, H]=0\,,
\end{equation} 
where $p_x$ is the first component of the  momentum operator $\p = -\iu\hbar\grad$. A precise meaning of the above equation is given in~\ref{sec-symmetry}.

Notice that Eq.~\eref{eq-HBE} (as well as Eq.~\eref{eq-HAE}) does not suffice to define an operator on $\Leb^2(\Os)$, since the formal expression $H\psi$ does not give a square-integrable function for each $\psi\in \Leb^2(\Os)$. Accordingly, the specification of a domain is needed in order to have a well-defined unbounded operator on $\Leb^2(\Os)$. Besides, in order to correspond to a physical observable and to generate a unitary evolution, the Hall Hamiltonian must be a (essentially) \emph{self-adjoint} operator on the chosen domain. For differential operators, as in our case, a domain specification substantially corresponds to a proper choice of \textit{boundary conditions}. Physically, no legitimate observable can indeed be obtained without specifying the behavior of wavefunctions at the boundary. The following strategy is usually applied when dealing with such operators:
\begin{itemize}
	\item first, a suitable domain $\dom_0$ of smooth functions that vanish in a neighborhood of  the boundary is chosen for the operator, thus obtaining a symmetric, but not self-adjoint operator, which is referred to as the \textit{minimal realization};
	\item then, one computes its adjoint, which is the \textit{maximal realization} of the operator, and imposes suitable boundary conditions on the domain of the latter.
\end{itemize}

In this paper we will follow a slightly different route that, while less general, will fully enable us to describe the physical situation that we have in mind. Our construction will crucially rely on the decomposition of $H$ into the direct integral of a certain family of operators (\emph{fibers}), each of them acting on the reduced Hilbert space $\Leb^2(I_W)$. This will allow us to reduce a two-dimensional problem to a continuous family of one-dimensional problems, which can be individually solved and then glued back together. As a matter of fact this procedure drastically simplifies both the search for a suitable domain which renders the full Hamiltonian $H$ self-adjoint and the computation of its spectrum. It will be indeed sufficient to analyze the self-adjointness and the spectrum of each fiber operator separately.

From a physical point of view, the direct integral decomposition is essentially a consequence of  translational invariance of the system we are studying. This symmetry is in turn ensured both by our gauge choice and by a suitable choice of  boundary conditions: not all the self-adjoint extensions of $H$ do indeed satisfy the invariance encoded in Eq.~\eref{eq-xInv}, as its right-hand side depends non-trivially on the actual domain of $H$. In other words, boundary conditions can eventually break the translational symmetry of the system.\footnote{We explicitly prove this statement in~\ref{sec-symmetry}. Instead in the main text we adopt a bottom-up (\ie constructive) approach.} Accordingly, in the following we will only consider boundary conditions (and thus self-adjoint extensions) which preserve the translational symmetry, being thus compatible with the direct integral decomposition: we shall refer to them as \emph{fibered} boundary conditions.

\subsection{Minimal realization and its fibers}\label{subsec-minim}
We  start by defining $H$ on the minimal domain
\begin{equation}\label{eq-dom0}
\dom_0\equiv\Schw(\R)\otimes\Cinf(I_W)\subset \Leb^2(\R)\otimes\Leb^2(I_W)\,.
\end{equation}
Here $\Cinf(I_W)$ is the space of smooth functions compactly supported on the segment $I_W$, and thus vanishing in a neighborhood of the boundary $\dOs$,
whereas $\Schw(\R)$ denotes the Schwartz space of rapidly decreasing smooth functions. 

As it turns out, $H$ is symmetric but not self-adjoint when defined on $\dom_0$ (see~e.g.~\cite{AIM15}).

Recall that any function $\psi\in\Schw(\R)$ admits a Fourier transform,
\begin{equation}
(\Fx\psi)(k)=\frac{1}{\sqrt{2\pg}}\int_{\R} \e^{-\iu kx}\psi(x)\dd{x}\,,
\end{equation}
where $k$, the Fourier conjugate of $x$, represents the \emph{wavenumber}. Physically, it is equal to $1/\hbar$ times the momentum of the particle $p_x$. Moreover:
\begin{itemize}
	\item $\Fx\colon\Schw(\R)\to\Schw(\R)$ is bijective and isometric;
	\item $\Fx$ can be extended to an unitary operator on the whole space $\Leb^2(\R)$.
\end{itemize}
In the following, we will write $\Fx\Schw(\R)=\Schw(\hat{\R})$ when we need to emphasize the distinction between the position representation ($\R$) and the momentum  one ($\hat\R$). Besides, we implicitly extend $\Fx$ to ${\Fx}\otimes{1}$ when acting on (subsets of) the full Hilbert space $\Leb^2(\R)\otimes \Leb^2(I_W)$. In this case, it corresponds to a partial Fourier transform. 

To take advantage from  translational invariance, it is convenient to switch from the original $(x, y)$-representation to the mixed $(k,y)$-representation: 
\begin{equation}
\Schw(\R)\otimes\Cinf(I_W)\quad\stackrel{\Fx}{\longrightarrow}\quad\Schw(\hat\R)\otimes\Cinf(I_W)\,.
\end{equation}
In the $(k,y)$-representation, $H$ acts as the unitarily equivalent operator
\begin{equation}\label{eq-hatHBE}
\hat H\equiv\Fx H\Fx^{-1}\,,
\end{equation}
which is thus defined on
\begin{equation}\label{eq-hatHBEdom}
\Fx\dom_0=\Schw(\hat\R)\otimes\Cinf(I_W)\,.
\end{equation}%\hatHBE
Moreover, as it is customary to do, we henceforth canonically identify the tensor product space $\Leb^2(\hat{\R})\otimes \Leb^2(I_W)$ with  $\Leb^2(\hat{\R}; L^2(I_W))$.  Recall that  $\Leb^2(\hat{\R}; L^2(I_W))$ is defined as the space of (equivalence classes of) functions $\psi\colon\R\to\Leb^2(I_W)$ which are strongly measurable and such that 
\begin{equation}
\int_{\R} \|\psi(k)\|_{\Leb^2(I_W)}^2\dd{k}<\infty\,,
\end{equation}
see~\cite{HNVW} for more details. At this point, it is easy to show that $\hat H$ acts fiber-wise as 
\begin{equation}\label{eq-actiontildeHBE}
	(\hat H \psi)(k)=h(k)\psi(k)
\end{equation} 
on every component $\psi(k)=\psi_k(\cdot)\in\Cinf(I_W)$ of $\psi\in  \Fx\dom_0\subset \Leb^2(\hat{\R}; L^2(I_W))$, each fiber operator $h(k)$ being defined on $\Cinf(I_W)$ and having the expression
\begin{equation}\label{eq-Hk}
	h(k)=h(k; B, \Ef)= -\frac{\hbar^2}{2m}\frac{\mathrm{d}^2}{\mathrm{d} y^2}+V_k(y)\,,
\end{equation}
where the potential $V_k(y)$ is given by
\begin{equation}\label{eq-Vk}
	V_k(y)\equiv\frac{1}{2}m\omega^2_B(y- kl_B^2)^2+e\Ef y
\end{equation}
and where the quantities $\omega_B\equiv eB/m$ and $l_B\equiv \sqrt{\hbar/eB}$ are respectively known as the \emph{cyclotron frequency} and the \emph{magnetic length}~\cite{Tong16}. The physical significance of the fiber operator $h(k)$ will be clarified in Subsection~\ref{sec-scaling}.

\subsection{Self-adjoint extensions of the fibers}
Again, each fiber $h(k)$ is symmetric but not self-adjoint on $\Cinf(I_W)$ (as it must be since it is defined on a domain of functions vanishing near the boundary). The Schr\"{o}dinger operator $h(k)$ represents a very particular case of a second-order differential strongly elliptic operator. Self-adjoint realizations of such operators can be completely characterized in terms of boundary conditions~\cite{Gru68, FaGaLi18, Gru12}. In general, such characterization is subjected to non-trivial technical intricacies, involving \eg the introduction of Sobolev spaces of negative order and the definition of suitable trace operators.

Luckily, however, the discussion is greatly simplified when taking into account differential operators on one-dimensional regions (and in particular when $\Omega$ is a finite interval), which is precisely the case of each fiber $h(k)$ of the Hall Hamiltonian. In this case, indeed, the boundary $\partial\Omega$ is just given by the points $\{-W/2, W/2\}$, and the \emph{boundary data} are then completely encoded in two $\C^2$ vectors
\begin{equation}\label{eq-boundarydata}
\Psi\equiv
\left(\begin{array}{@{}c@{}}
\psi(-W/2)\\ \psi(W/2)
\end{array}\right)
\qquad\textnormal{and}\qquad \Psi' \equiv l_0
\left(\begin{array}{@{}c@{}}
-\psi'(-W/2)\\ \psi'(W/2)
\end{array}\right)\,,
\end{equation}
where $l_0>0$ is an arbitrary reference length. Moreover, for all $k\in\mathbb{R}$, the adjoint $h^{\dagger}(k)$ of the operator $h(k)$ is defined on the larger (common) domain
\begin{comment}
\begin{eqnarray}
\dom^{\dagger}&\equiv \bigl\{ \psi\in \Leb^2(I_W):\psi,\psi'\in \AbsC(I_W)\,, -\case{\hbar^2}{2m}\psi''+V_k\psi\in\Leb^2(I_W)\bigr\}\\
&=\bigl\{ \psi\in \Leb^2(I_W):\psi,\psi'\in \AbsC(I_W)\,, \psi''\in\Leb^2(I_W)\bigr\}\label{eq-Ddagger}
\end{eqnarray}
where $\AbsC(I_W)$ denotes the space of complex function which are \emph{absolutely continuous} in every closed interval $[a,b]\subset I_W$, see Proposition~2.3.20 of~\cite{deO08}, and in the second line we have used the fact that $V_k$ is bounded on $I_W$ for each $k\in\R$.\footnote{Notice that Eq.~\eref{eq-Ddagger} is nothing but a characterization of the Sobolev space $H^2(I_W)$.} 
We recall that $\psi\colon I_W\to\C$ is absolutely continuous on $[a,b]$ if it admits (almost everywhere) a derivative $\psi'$ which is Lebesgue integrable and if
\begin{equation}
\psi(x)=\psi(a)+\int_{a}^{x}\psi'(x)\dd{x}\,,\qquad\forall\, x \in[a,b]\,.
\end{equation}
\end{comment}
\begin{equation}\label{eq-Ddagger}
\dom\bigl(h^{\dagger}(k)\bigr)= H^2(I_W)\,,
\end{equation}
\ie on the Sobolev space of second order on the segment $I_W$ (see \eg Proposition~2.3.20 of~\cite{deO08}). The following result can then be proved~\cite{AIM05, BFV01}; see also Theorem~7.2.9 of~\cite{deO08}.
\begin{prop}\label{prop-characteriz}
	For each $k\in\mathbb{R}$, all self-adjoint extensions of $h(k)$ are in one-to-one correspondence with the set of $2\times 2$ unitary matrices $U(k)$ as follows: the matrix $U(k)$ defines a self-adjoint extension of $h(k)$ via the restriction
\begin{equation}\label{eq-HUk}
		h_{U}(k)\equiv h^\dag(k)\big|_{\dom(h_{U}(k))}\,,
\end{equation}
where
\begin{equation}\label{eq-bc-kdep}
\dom(h_{U}(k))\equiv \bigl\{\psi\in H^2(I_W) : \iu(I+U(k))\Psi=(I-U(k))\Psi'\bigr\}
\end{equation}
and where $I$ denotes the $2\times 2$ identity matrix.
\end{prop}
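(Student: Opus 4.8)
The plan is to reduce the self-adjointness question to a finite-dimensional, linear-algebraic statement about the boundary data $(\Psi,\Psi')$, via Green's identity, and then to classify the admissible conditions by a Cayley-type argument. First I would invoke the general principle of extension theory: since $h(k)\subseteq h^{\dagger}(k)$ with $\dom(h^{\dagger}(k))=H^2(I_W)$ by \eref{eq-Ddagger}, every self-adjoint extension is a restriction $h^{\dagger}(k)\big|_{\dom}$ to some $\dom$ with $\Cinf(I_W)\subseteq\dom\subseteq H^2(I_W)$ on which the boundary form $\braket{h^{\dagger}(k)\psi|\phi}-\braket{\psi|h^{\dagger}(k)\phi}$ vanishes identically, and which is maximal with this property. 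An integration by parts, in which the real and bounded potential $V_k$ contributes symmetrically and hence cancels, leaves only the kinetic term and gives
\begin{equation}
\braket{h^{\dagger}(k)\psi|\phi}-\braket{\psi|h^{\dagger}(k)\phi}=-\frac{\hbar^2}{2ml_0}\bigl[(\Psi')^{\dagger}\Phi-\Psi^{\dagger}\Phi'\bigr]\,,
\end{equation}
where $(\Phi,\Phi')$ are the boundary data of $\phi$ defined as in \eref{eq-boundarydata}.

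Next I would set up the correspondence between extensions and subspaces of $\C^4$. The boundary-data map $\psi\mapsto(\Psi,\Psi')$ from $H^2(I_W)$ to $\C^4$ is surjective --- by the one-dimensional Sobolev embedding $H^2(I_W)\hookrightarrow C^1(\overline{I_W})$ the four numbers $\psi(\pm W/2)$ and $\psi'(\pm W/2)$ are well defined and may be prescribed independently --- and its kernel is precisely the closure of $\Cinf(I_W)$, i.e.\ the domain of the minimal closed operator. Consequently, self-adjoint extensions are in bijection with the maximal isotropic subspaces of $\C^4$ for the skew-Hermitian form $\Omega\bigl((\Psi,\Psi'),(\Phi,\Phi')\bigr)=(\Psi')^{\dagger}\Phi-\Psi^{\dagger}\Phi'$, whose associated Hermitian form $\iu\Omega$ has signature $(2,2)$. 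That such extensions exist and form a four-real-parameter family is confirmed by the deficiency indices: on the finite interval every solution of $h^{\dagger}(k)\psi=z\psi$ with $\mathrm{Im}\,z\neq0$ is square-integrable, so $\ker\bigl(h^{\dagger}(k)-z\bigr)$ is two-dimensional and the indices are $(2,2)$, the extensions thus forming a $\UU(2)$ family. These structural facts I would take from the cited literature.

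The core of the proof is then the Cayley computation, which I expect to be clean. Introducing the combinations $w_{\pm}=\iu\Psi\pm\Psi'$ (and $v_{\pm}=\iu\Phi\pm\Phi'$), the boundary condition \eref{eq-bc-kdep} rearranges to $w_{-}=-U(k)\,w_{+}$, while a direct substitution into $\Omega$ yields
\begin{equation}
(\Psi')^{\dagger}\Phi-\Psi^{\dagger}\Phi'=\frac{1}{2\iu}\bigl(w_{+}^{\dagger}v_{+}-w_{-}^{\dagger}v_{-}\bigr)\,.
\end{equation}
On the graph $w_{-}=-U(k)w_{+}$ one has $w_{-}^{\dagger}v_{-}=w_{+}^{\dagger}U^{\dagger}(k)U(k)\,v_{+}$, which coincides with $w_{+}^{\dagger}v_{+}$ exactly when $U(k)$ is unitary; this simultaneously proves that the condition defines an isotropic subspace and, running the equivalence backwards on an arbitrary Lagrangian graph $w_{-}=S\,w_{+}$, forces $S^{\dagger}S=\I$, so that $U(k)=-S$ is unitary. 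Since $w_{+}$ ranges freely over $\C^2$ and then determines $(\Psi,\Psi')$ bijectively, the subspace has complex dimension two and is therefore maximal in the signature-$(2,2)$ form; distinct unitaries give distinct graphs, hence distinct domains, which yields injectivity.

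The step I expect to be the real obstacle is not this linear algebra but the structural reduction of the second paragraph: establishing that self-adjoint extensions correspond bijectively to the maximal isotropic subspaces of the boundary form. This rests on von Neumann's extension theory (equivalently, the boundary-triple formalism), on the surjectivity of the trace map together with the identification of its kernel, and on the $k$-independence of $\dom(h^{\dagger}(k))=H^2(I_W)$, which itself uses the boundedness of $V_k$ on $I_W$. Granting these ingredients, the one-to-one correspondence of Proposition~\ref{prop-characteriz} follows.
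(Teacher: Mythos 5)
Your proposal is correct. Note that the paper itself contains no proof of Proposition~\ref{prop-characteriz}: it is quoted from the literature (\cite{AIM05,BFV01} and Theorem~7.2.9 of~\cite{deO08}), and your sketch is essentially the argument contained in those references — Green's identity reducing self-adjointness to linear algebra for the boundary form, classification by Lagrangian subspaces of $\C^4$, and the Cayley change of variables identifying these with graphs of unitaries. Your computations check out: with the sign convention of \eref{eq-boundarydata} (the minus sign on $\psi'(-W/2)$ makes both endpoints enter uniformly) the boundary term is indeed $-\frac{\hbar^2}{2ml_0}\bigl[(\Psi')^{\dagger}\Phi-\Psi^{\dagger}\Phi'\bigr]$; the identity $(\Psi')^{\dagger}\Phi-\Psi^{\dagger}\Phi'=\frac{1}{2\iu}\bigl(w_+^{\dagger}v_+-w_-^{\dagger}v_-\bigr)$ holds; and \eref{eq-bc-kdep} is equivalent to $w_-=-U(k)\,w_+$. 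The two structural points you defer to the literature are exactly the right ones to flag: maximal isotropy yields self-adjointness rather than mere maximal symmetry only because $\iu\Omega$ has split signature $(2,2)$ — equivalently because the deficiency indices are equal, $(2,2)$, by the limit-circle property on the finite interval — so that a two-dimensional isotropic $M$ coincides with its $\Omega$-orthogonal complement; and the passage from domains to boundary subspaces uses that every self-adjoint extension contains $\dom(\overline{h(k)})=H^2_0(I_W)$, the kernel of the surjective trace map, so that domains are pulled back from subspaces of $\C^4$. Since you state both (signature and kernel identification) explicitly, the reduction is sound and the proposal stands.
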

In other words, each matrix $U(k)$ implements the boundary condition
\begin{equation}\label{eq-Uboundary}
\iu(I+U(k))\Psi=(I-U(k))\Psi'\,,
\end{equation}
which in turn prescribes a linear relation between the boundary data $\Psi$ and $\Psi'$. 
For example, $U(k)=I$ corresponds to Dirichlet boundary conditions $\Psi=0$, while $U(k)=-I$ describes Neumann boundary conditions $\Psi'=0$. See Eq.~\eref{eq-boundarydata}. In general, since any $2\times 2$ unitary matrix can be parametrized by four real angles, all admissible boundary conditions for $h(k)$ will be parametrized by certain $k$-depending angles, say $\{\theta_i(k)\}_{i=1,\dots,4}$.
%Besides, if $I-U(k)$ is invertible (\ie the matrix $U(k)$ does not have $1$ as an eigenvalue), the boundary condition~\eref{eq-Uboundary} can be equivalently parametrized in terms of a Hermitian matrix $L(k)$ given by the \emph{Cayley transform} of $U(k)$:
%\begin{equation}\label{eq-cayley}
%\dot\Psi=L(k)\Psi\,,\qquad  L(k)\equiv \iu(I+U(k))(I-U(k))^{-1}\,.
%\end{equation}
We stress that, in general, every self-adjoint realization of $h(k)$ will depend on the parameter $k$ \emph{both} via its expression, given by Eq.~\eref{eq-Hk}, \emph{and} via its domain, through the boundary condition.

\subsection{Direct integral and fibered boundary conditions}
Up to now, we have shown that each fiber $h(k)$ can be made a self-adjoint operator by choosing a suitable boundary condition. This property, together with Eq.~\eref{eq-actiontildeHBE}, will enable us to associate a self-adjoint extension of $\hat H$ (and thus of $H$) with each suitably regular function
\begin{equation}
	k\in\mathbb{R}\mapsto U(k)\in\UU(2)\,.
\end{equation}
The notion of direct integral will be of primary importance here. We thus recall some basic notions, referring \eg to~\cite{RS4} for more details. 
\begin{definition}\label{def-directint}
Let $\mathcal{H}$ be a Hilbert space and $\{a(k)\}_{k\in\mathbb{R}}$ a collection of self-adjoint operators on $\mathcal{H}$, each with domain $\dom_k$, such that the function
\begin{equation}
	k\in\mathbb{R}\mapsto\left(a(k)+\iu\right)^{-1}\in\mathcal{B}(\mathcal{H})
\end{equation}
is weakly measurable. The \emph{direct integral}
\begin{equation}
A=\int_{\mathbb{R}}^\oplus a(k)\,\dd k
\end{equation}
is defined as an operator on  $L^2(\mathbb{R}; \mathcal{H})$ with domain
\begin{equation}
		\hskip-50pt %\eqalign{
		\dom(A)=\Biggl\{\psi\in \Leb^2(\R; \mathcal{H}):
		\psi(k)\in \dom_k \textnormal{ for a.e. } k\in\R\,,
		\int_{\R} \|a(k)\psi(k)\|^2_{\mathcal{H}}\dd{k}<\infty\Biggr\}\,
	%}
\end{equation}
and such that $(A\psi)(k)=a(k)\psi(k)$ for almost every $k\in\mathbb{R}$.
\end{definition}
In particular, the following properties hold (see Theorem~XIII.85 of~\cite{RS4}).
\begin{prop}\label{prop-dirint}
The direct integral $A$ of a family of self-adjoint operators $\{a(k)\}_{k\in\mathbb{R}}$ on $\mathcal{H}$ is a self-adjoint operator on $L^2(\mathbb{R}; \mathcal{H})$. Moreover, its spectrum is given by
\begin{equation}\label{eq-specdirectint}
\hskip-30pt	\sigma(A)=\bigl\{\lambda\in\mathbb{R}:\forall\epsilon>0,\,\mu\bigl(\{k\in\R:{\sigma(a(k))}\cap(\lambda-\epsilon,\lambda+\epsilon)\neq\emptyset\}\bigr)>0\bigr\}\,,
\end{equation}
where $\mu(\cdot)$ denotes the Lebesgue measure.
\end{prop}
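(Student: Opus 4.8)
The plan is to route both claims through the resolvent of $A$, exploiting that each fiber $a(k)$ is self-adjoint, so that $(a(k)\pm\iu)^{-1}$ is a bounded operator on $\mathcal{H}$ with $\|(a(k)\pm\iu)^{-1}\|\leq 1$; throughout I use that $\mathcal{H}$ is separable, as is the case for $\mathcal{H}=\Leb^2(I_W)$. For self-adjointness I would first check that $A$ is symmetric: for $\psi,\phi\in\dom(A)$ one has $\langle A\psi,\phi\rangle=\int_\R\langle a(k)\psi(k),\phi(k)\rangle_{\mathcal{H}}\dd k=\int_\R\langle\psi(k),a(k)\phi(k)\rangle_{\mathcal{H}}\dd k=\langle\psi,A\phi\rangle$ by symmetry of each $a(k)$, the interchange being legitimate thanks to the integrability conditions built into $\dom(A)$. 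Then I would invoke the basic criterion that a symmetric operator is self-adjoint precisely when $\mathrm{Ran}(A+\iu)=\mathrm{Ran}(A-\iu)=\Leb^2(\R;\mathcal{H})$.

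To prove surjectivity, given $f\in\Leb^2(\R;\mathcal{H})$ I would set $\psi(k)\equiv(a(k)+\iu)^{-1}f(k)$ and verify $\psi\in\dom(A)$ with $(A+\iu)\psi=f$. Membership rests on two bounds: $\|\psi(k)\|\leq\|f(k)\|$ gives $\psi\in\Leb^2(\R;\mathcal{H})$, while the resolvent identity $a(k)(a(k)+\iu)^{-1}=I-\iu(a(k)+\iu)^{-1}$ yields $\|a(k)\psi(k)\|\leq 2\|f(k)\|$, securing the second integrability condition in Definition~\ref{def-directint}. The same computation with $+\iu$ replaced by $-\iu$ handles $\mathrm{Ran}(A-\iu)$, completing the self-adjointness proof.

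For the spectrum I would argue that $\lambda\notin\sigma(A)$ iff the decomposable operator $\int_\R^\oplus(a(k)-\lambda)^{-1}\dd k$ is a well-defined bounded operator, in which case (by the resolvent computation above, now with $+\iu$ replaced by $-\lambda$) it equals $(A-\lambda)^{-1}$. The norm of such a fiber-wise multiplication operator equals $\mathrm{ess\,sup}_k\|(a(k)-\lambda)^{-1}\|$, and the spectral theorem for the self-adjoint $a(k)$ gives $\|(a(k)-\lambda)^{-1}\|=\mathrm{dist}(\lambda,\sigma(a(k)))^{-1}$. Hence $(A-\lambda)^{-1}$ is bounded iff $\mathrm{dist}(\lambda,\sigma(a(k)))$ is bounded away from $0$ for a.e.\ $k$, i.e.\ iff there is some $\epsilon>0$ with $\sigma(a(k))\cap(\lambda-\epsilon,\lambda+\epsilon)=\emptyset$ for a.e.\ $k$. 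Negating this statement reproduces exactly the characterization \eref{eq-specdirectint}.

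The main obstacle is measure-theoretic rather than algebraic: one must ensure that each fiber-wise prescription yields a genuine strongly measurable element of $\Leb^2(\R;\mathcal{H})$, respectively a genuine bounded decomposable operator. This is where the weak-measurability hypothesis on $k\mapsto(a(k)+\iu)^{-1}$ is indispensable. Combined with separability of $\mathcal{H}$ and the Pettis measurability theorem, it guarantees that $k\mapsto(a(k)+\iu)^{-1}f(k)$ is strongly measurable and that $\{(a(k)-\lambda)^{-1}\}_{k}$ is a measurable field of operators admitting a direct integral. The two key identities---the equality of the decomposable-operator norm with the essential supremum of fiber norms, and $\|(a(k)-\lambda)^{-1}\|=\mathrm{dist}(\lambda,\sigma(a(k)))^{-1}$---both ultimately reduce to measurability of $k\mapsto\|(a(k)-\lambda)^{-1}\|$ and of the set-valued map $k\mapsto\sigma(a(k))$; establishing these measurability facts is the technical heart of the argument, worked out in the general theory of direct integrals, for which I would cite Theorem~XIII.85 of~\cite{RS4}.
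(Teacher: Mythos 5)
Your sketch is sound, but note first that the paper does not prove Proposition~\ref{prop-dirint} at all: it quotes the statement from Theorem~XIII.85 of~\cite{RS4}, and your proposal is essentially a reconstruction of the standard proof of that cited theorem (the same one you invoke at the end). The self-adjointness half is fine as written: symmetry plus the range criterion $\mathrm{Ran}(A\pm\iu)=\Leb^2(\R;\mathcal{H})$, with the fiberwise bounds $\|(a(k)\pm\iu)^{-1}f(k)\|\leq\|f(k)\|$ and $\|a(k)(a(k)\pm\iu)^{-1}f(k)\|\leq 2\|f(k)\|$ securing membership of $\psi(k)=(a(k)\pm\iu)^{-1}f(k)$ in $\dom(A)$, and Pettis measurability (with separability of $\mathcal{H}$) upgrading weak to strong measurability of $k\mapsto\psi(k)$.

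Two points in the spectral half deserve care. First, your ``iff'' hides an asymmetry: your resolvent computation only establishes one direction, namely that if $\mathrm{ess\,sup}_k\|(a(k)-\lambda)^{-1}\|<\infty$ then the decomposable operator with these fibers is a bounded inverse of $A-\lambda$. The converse --- that $\lambda\notin\sigma(A)$ \emph{forces} the fiber resolvents to exist a.e.\ and be essentially bounded --- does not follow from that computation. One must either argue that $(A-\lambda)^{-1}$ commutes with the diagonal multiplication algebra and is therefore itself decomposable with fibers $(a(k)-\lambda)^{-1}$ (Theorem~XIII.84 of~\cite{RS4}), or prove the contrapositive by a Weyl-sequence construction: if for every $\epsilon>0$ the set $S_\epsilon=\{k:\sigma(a(k))\cap(\lambda-\epsilon,\lambda+\epsilon)\neq\emptyset\}$ has positive measure, a measurable selection of unit vectors $e_k\in\mathrm{Ran}\,P_{(\lambda-\epsilon,\lambda+\epsilon)}(a(k))$ over a finite-positive-measure subset of $S_\epsilon$ (measurability of the spectral projections coming from the weak measurability of the resolvent, e.g.\ via Stone's formula) yields normalized $\psi_\epsilon\in\dom(A)$ with $\|(A-\lambda)\psi_\epsilon\|\leq\epsilon$, whence $\lambda\in\sigma(A)$; this is exactly the inclusion that puts the right-hand side of \eref{eq-specdirectint} \emph{inside} $\sigma(A)$. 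Second, your closing paragraph delegates precisely these measurability facts to Theorem~XIII.85 itself, which is circular if the sketch is meant as a self-contained proof of that theorem; as a justification of the proposition for the purposes of the paper, however, citing~\cite{RS4} is exactly what the authors do, so the proposal is acceptable in the same sense the paper's treatment is.
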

At this point, we are finally ready to attack our problem.%, the direct integral construction will now enable us to define a large class of self-adjoint extensions of $\tildeHBE$.
\begin{prop}\label{prop-dirintbis}
	Let $\{U(k)\}_{k\in\mathbb{R}}$ be a family of $2\times 2$ unitary matrices such that the function
	\begin{equation}
		k\in\mathbb{R}\mapsto U(k)\in\UU(2)
	\end{equation}
	is continuous, and let $\{h_{U}(k)\}_{k\in\R}$ be the family of self-adjoint operators defined by Eqs.~\eref{eq-HUk}--\eref{eq-bc-kdep}. Then the direct integral
	\begin{equation}\label{eq-directint}
	\hat{H}_{U}\equiv \int_{\R}^{\oplus}h_{U}(k)\dd{k}
	\end{equation}
	is a self-adjoint extension of the operator $\hat H$ on $\Fx\dom_0\subset \Leb^2(\hat{\mathbb{R}};L^2(I_W))$.
\end{prop}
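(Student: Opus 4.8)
The plan is to verify the two hypotheses of Proposition~\ref{prop-dirint} (namely Definition~\ref{def-directint}) for the family $\{h_U(k)\}_{k\in\R}$, and then to check that the resulting direct integral $\hat H_U$ is genuinely an extension of $\hat H$. The first task is the measurability requirement: I would show that $k\mapsto(h_U(k)+\iu)^{-1}$ is weakly (equivalently strongly) measurable. Since Proposition~\ref{prop-characteriz} already guarantees that each $h_U(k)$ is self-adjoint on its $k$-dependent domain, the resolvent $(h_U(k)+\iu)^{-1}$ is a well-defined bounded operator on $\Leb^2(I_W)$ for every $k$. To get measurability I would exploit the assumed \emph{continuity} of $k\mapsto U(k)\in\UU(2)$ together with the explicit $k$-dependence of the symbol $V_k$ in Eq.~\eref{eq-Vk}, and argue that $k\mapsto(h_U(k)+\iu)^{-1}$ is in fact \emph{norm-continuous}; continuity trivially implies weak measurability. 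This is the step I expect to require the most care.

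For that norm-continuity I would use the second-resolvent (resolvent difference) identity to compare $h_U(k)$ with $h_U(k')$. The two operators differ in two ways simultaneously: their potentials differ by $V_k-V_{k'}$, which is a bounded multiplication operator with norm tending to zero as $k'\to k$ (directly from Eq.~\eref{eq-Vk}, since $V_k-V_{k'}$ is a polynomial in $y$ with coefficients continuous in $k$ on the compact interval $I_W$); and their domains differ because the boundary condition \eref{eq-Uboundary} depends on $U(k)$. The cleanest way to disentangle these is to introduce a fixed unitary transformation, or a $k$-dependent bounded invertible map of $\Leb^2(I_W)$ onto itself that carries $\dom(h_U(k'))$ onto $\dom(h_U(k))$ and depends continuously (in operator norm) on $U(k')$ near $U(k)$; conjugating $h_U(k')$ by this map produces an operator on the \emph{fixed} domain $\dom(h_U(k))$ differing from $h_U(k)$ by a relatively bounded perturbation whose coefficients are continuous in $k'$. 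Feeding this into the resolvent identity, and using that the resolvent is bounded in norm by $1$ (self-adjointness plus the spectral parameter $\iu$ lying at distance $1$ from the real spectrum), yields $\|(h_U(k')+\iu)^{-1}-(h_U(k)+\iu)^{-1}\|\to 0$.

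Once measurability is established, Proposition~\ref{prop-dirint} immediately delivers that $\hat H_U=\int_\R^\oplus h_U(k)\dd k$ is self-adjoint on $\Leb^2(\hat\R;\Leb^2(I_W))$, so no further work is needed for self-adjointness. It then remains to check the extension claim: that $\hat H_U$ extends $\hat H$ as defined on $\Fx\dom_0=\Schw(\hat\R)\otimes\Cinf(I_W)$. For this I would take an arbitrary $\psi\in\Fx\dom_0$ and verify that $\psi\in\dom(\hat H_U)$ and that $\hat H_U\psi=\hat H\psi$. By Eq.~\eref{eq-actiontildeHBE}, $\hat H$ acts fiberwise as $(\hat H\psi)(k)=h(k)\psi(k)$, and each fiber component $\psi(k)\in\Cinf(I_W)$ vanishes together with all its derivatives near the endpoints $\pm W/2$, so the boundary data $\Psi$ and $\Psi'$ of Eq.~\eref{eq-boundarydata} both vanish; hence $\psi(k)$ satisfies the boundary condition \eref{eq-Uboundary} for every $U(k)$, giving $\psi(k)\in\dom(h_U(k))$ for all $k$. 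Finally I would confirm the integrability condition $\int_\R\|h_U(k)\psi(k)\|^2_{\Leb^2(I_W)}\dd k<\infty$ appearing in the domain of the direct integral, which follows from the rapid decay in $k$ encoded in $\psi\in\Schw(\hat\R)\otimes\Cinf(I_W)$ together with the fact that $h_U(k)\psi(k)=h(k)\psi(k)$ grows only polynomially in $k$ through $V_k$. Since $h_U(k)$ agrees with the maximal operator $h^\dagger(k)$ on its domain and $h(k)\subset h^\dagger(k)$, the two actions coincide on $\Fx\dom_0$, establishing $\hat H\subset\hat H_U$ and completing the proof.
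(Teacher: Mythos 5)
Your overall skeleton is the same as the paper's: check weak measurability of $k\mapsto(h_U(k)+\iu)^{-1}$, invoke Proposition~\ref{prop-dirint} for self-adjointness, and verify the extension property by noting that each fiber $\psi(k)\in\Cinf(I_W)$ has vanishing boundary data $\Psi=\Psi'=0$, hence lies in $\dom(h_U(k))$ for \emph{every} $U(k)$, with square-integrability of $k\mapsto\|h_U(k)\psi(k)\|$ following from Schwartz decay against the polynomial growth of $V_k$ in $k$. That last part is correct and, if anything, slightly more explicit than the paper, which simply observes that $\dom(\hat H_U)$ clearly contains $\Fx\dom_0$.

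The genuine gap is in your mechanism for resolvent continuity. You posit a $k$-dependent bounded invertible map of $\Leb^2(I_W)$ onto itself carrying $\dom(h_U(k'))$ onto $\dom(h_U(k))$, depending norm-continuously on $U(k')$ and yielding, after conjugation, a relatively bounded perturbation on a fixed domain. This family is never constructed, and the natural candidates fail precisely because the boundary traces $\psi\mapsto\psi(\pm W/2)$, $\psi\mapsto\psi'(\pm W/2)$ are unbounded on $\Leb^2(I_W)$: a correction of the form $S=I+\sum_j\eta_j\,\ell_j(\cdot)$ that adjusts boundary data requires the $\ell_j$ to be trace functionals, so $S$ is not $\Leb^2$-bounded; while multiplication operators $\ee^{\beta g(y)}$, which do shift Robin parameters continuously, cannot reach or leave boundary conditions of Dirichlet type (where $I-U(k)$ is singular), and vanishing multipliers are not invertible. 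Since the proposition allows arbitrary continuous $U(\cdot)$, including Dirichlet values, your argument breaks down exactly at those points, and even where a bounded similarity exists abstractly you have no control guaranteeing $S\to I$ or that $S\,h_U(k')\,S^{-1}-h_U(k)$ is relatively bounded with small coefficients. The paper sidesteps this entirely (see~\ref{sec-resolvent}) via the Krein resolvent formula: $(h_U(k)-z)^{-1}=(h_I(k)-z)^{-1}-K(k;z)$, comparing every extension with the \emph{fixed-domain} Dirichlet extension $h_I(k)$, whose resolvent is strongly continuous in $k$ by standard perturbation of the potential $V_k$ on a common core (Lemma~6.36 of~\cite{Teschl}), plus an explicit finite-rank correction $K(k;z)=\gamma(k;z)\bigl(B(k)Q(k;z)-A(k)\bigr)^{-1}B(k)\gamma^{\dagger}(z^*)$ built from Weber-equation solutions (regular in $k$) and the matrices $A(k)=\iu(I+U(k))$, $B(k)=I-U(k)$, which manifestly inherit the regularity of $U(\cdot)$. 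If you want to salvage a route in your spirit, you should work at the level of the boundary-value problem (or the explicit Green's function) rather than through an $\Leb^2$-bounded domain transplantation; as written, the key step would fail.
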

\begin{proof}
By Definition~\ref{def-directint}, the direct integral~\eref{eq-directint} is well-defined as long as the function
\begin{equation}
	k\in\mathbb{R}\mapsto(h_{U}(k)+\iu)^{-1}\in\mathcal{B}\left(\Leb^2(I_W)\right)
\end{equation}
is weakly measurable; as we prove in~\ref{sec-resolvent}, a sufficient condition for this is the continuity of the function $k\mapsto U(k)$, which holds by assumption. Moreover, by Proposition~\ref{prop-dirint}, $\hat{H}_{U}$ is also self-adjoint. To prove that the latter is actually an extension of $\hat H$, just notice that its domain
\begin{equation}
 \eqalign{
	\dom( \hat{H}_{U} )=\Biggl\{\psi\in \Leb^2(\hat{\R}; \Leb^2(I_W)):{}&\psi(k)\in \dom(h_{U}(k)) \textnormal{ for a.e. } k\in\R\,,\cr
	&\int_{\R} \|h_{U}(k)\psi(k)\|^2_{L^2(I_W)}\dd{k}<\infty\Biggr\}\,
}
\end{equation}
clearly contains the minimal domain $\Fx\dom_0$ of $\hat H$, so that the fiber-wise action expressed by Eq.~\eref{eq-actiontildeHBE} still holds (for almost every $k\in\R$) by replacing each operator with its corresponding extension.%{eq-directint}
\end{proof}
%Prop.~\ref{prop-dirintbis} finally enables us to convert a two-dimensional problem into a continuous family of one-dimensional problems. 

The continuity request for $k\mapsto U(k)$ means that, loosely speaking, we are requiring the boundary conditions to depend “sufficiently nicely” on the parameter $k$; this is indeed the case in all physically realistic situations.  \emph{A fortiori}, the claim also holds if $U(k)=U_0$ does not depend on $k$, \ie if we impose the same boundary condition, associated with the unitary matrix $U_0$, on all the fibers; such are the classes of boundary conditions that we will ultimately analyze in Section~\ref{sec-saext}, and which are the ones of interest for the description of the QHE.

In any case, once a choice for all the $U(k)$ has been made, 
\begin{comment}
with a slight abuse of notation we henceforth set
\begin{equation}
\dom(\tildeHBE)=\dom\biggl(\int_{\R}^{\oplus}H_{U(k)}(k)\dd{k}\biggr)\,,
\end{equation}
that is we denote the self-adjoint extension of $\tildeHBE$ again with the same symbol, as the dependence of the latter from the family $\{U(k)\}_{k\in\R}$ will be clear from the context. Moreover, 
\end{comment}
we can in principle go back to the position representation by taking
\begin{equation}
\dom(H_{U})=\Fx^{-1}\dom(\hat{H}_{U})
\end{equation}
as the domain of the Hall Hamiltonian. This procedure may be in general non-trivial, see~\ref{sec-position} for a simple example. However, as long as we are interested in the spectral properties of $H_{U}$, we can take advantage of the fact that $H_{U}$ and $\hat{H}_{U}$, being unitarily equivalent, share the same spectrum.

\section{General spectral properties}\label{sec-generalspectrum}
We devote this section to understand some general properties of the spectrum $\sigma(H_{U})$; while the particular features of this spectrum depend on the choice of the boundary conditions, its general structure and some asymptotic properties are independent of this choice. Note that hereafter we will always require the function $k\mapsto U(k)$ to be continuous, so that Proposition~\ref{prop-dirintbis} applies.

\subsection{Band structure of the spectrum}
We now show that, under our assumptions, the spectrum of $H_{U}$ exhibits a band structure. A first important observation is the following: for all $k\in\mathbb{R}$, all self-adjoint extensions of $h(k)$ have a purely discrete spectrum, see \eg Theorem~10.6.1 of~\cite{Ze05}. We can thus write that\footnote{In order to lighten our notations, we drop the dependence of $E_n(k)$ (and of related quantities) on the boundary conditions, as the latter will always be clear from the context.}
\begin{equation}
	\sigma(h_{U}(k))=\{E_n(k)\}_{n\in\N}
\end{equation}
with $E_n(k)$ to be obtained by solving the $k$-dependent eigenvalue problem 
\begin{equation}\label{eq-eigeneq}
\bigl[h_{U}(k) -E_n (k)\bigr]\psi_{n}(k; y)=0\,,
\end{equation}
where $\psi_n(k; y)\in\dom(h_{U}(k))$ denotes the $n$-th eigenfunction of $h_{U}(k)$.
\begin{prop}
	Let the function $k\in\mathbb{R}\mapsto U(k)\in\UU(2)$ be continuous. Then the spectrum of $H_{U}$ is given by
	\begin{equation}
	\sigma(H_{U})=\bigcup_{n\in \N} \Delta_n\,,
	\end{equation}
	each $\Delta_n$ being defined as the union of the $n$-th eigenvalues of each fiber $h_{U}(k)$:
	\begin{equation}
	\Delta_n\equiv \{ E_n(k) : k\in \R \}\,.
	\end{equation}
\end{prop}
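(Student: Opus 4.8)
The plan is to identify $\sigma(H_U)$ with $\sigma(\hat H_U)$ (these operators are unitarily equivalent via the partial Fourier transform, as established when constructing the direct integral) and then to apply the spectral formula for direct integrals, Proposition~\ref{prop-dirint}, directly. Since each fiber $h_U(k)$ has purely discrete spectrum $\sigma(h_U(k))=\{E_n(k)\}_{n\in\N}$, the candidate spectrum should be the set of all eigenvalue branches, namely $\bigcup_{n\in\N}\Delta_n$ with $\Delta_n=\{E_n(k):k\in\R\}$. So the task reduces to verifying that the abstract characterization~\eref{eq-specdirectint} yields exactly this union.

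First I would substitute $\sigma(a(k))=\sigma(h_U(k))=\{E_n(k)\}_{n\in\N}$ into Eq.~\eref{eq-specdirectint}. This rewrites the condition ``$\lambda\in\sigma(\hat H_U)$'' as: for every $\epsilon>0$, the set of $k$ for which some branch $E_n(k)$ lies in $(\lambda-\epsilon,\lambda+\epsilon)$ has positive Lebesgue measure. The inclusion $\bigcup_n\Delta_n\subseteq\sigma(\hat H_U)$ is essentially definitional once we know each branch $E_n(\cdot)$ is continuous: if $\lambda=E_n(k_0)$ for some $n$ and $k_0$, then by continuity $E_n$ maps a whole neighborhood of $k_0$ into $(\lambda-\epsilon,\lambda+\epsilon)$, and an interval has positive measure. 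Thus every point of every branch belongs to the spectrum, giving the forward inclusion.

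The reverse inclusion $\sigma(\hat H_U)\subseteq\bigcup_n\Delta_n$ is where I expect the main subtlety. One must show that a $\lambda$ not lying on any branch cannot satisfy the measure condition. The natural argument is: if $\lambda\notin\bigcup_n\Delta_n$, then for each $k$ the discrete set $\{E_n(k)\}_n$ stays a positive distance from $\lambda$; but to conclude that this distance is bounded below \emph{uniformly} on a set of positive measure — and hence that small enough $\epsilon$ excludes all $k$ — one needs control on how the spectra vary. The clean route is again continuity of the branches $k\mapsto E_n(k)$: continuity makes each $\Delta_n$ a union of intervals (images of connected sets), so $\bigcup_n\Delta_n$ is expressible as a countable union of intervals, and its complement is therefore itself describable in a way compatible with~\eref{eq-specdirectint}. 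The crux is thus establishing (or invoking) that the eigenvalue branches $E_n(k)$ depend continuously on $k$, which follows from continuity of $k\mapsto U(k)$ together with the resolvent continuity already proved in~\ref{sec-resolvent}; this is the step I would treat most carefully, since the whole band picture rests on it.

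Finally, having matched both inclusions, I would transfer the result back to $H_U$ using $\sigma(H_U)=\sigma(\hat H_U)$, obtaining $\sigma(H_U)=\bigcup_{n\in\N}\Delta_n$ as claimed. The main obstacle, to reiterate, is not the abstract direct-integral machinery — that is handed to us by Proposition~\ref{prop-dirint} — but rather ensuring the continuity of the individual eigenvalue branches in $k$, which is what guarantees that the measure-theoretic condition in~\eref{eq-specdirectint} collapses to the transparent geometric statement that $\sigma(H_U)$ is the union of the graphs of the functions $E_n(\cdot)$.
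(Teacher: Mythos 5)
Your proposal follows essentially the same route as the paper's proof: both pass to $\sigma(\hat{H}_{U})$ via unitary equivalence, apply the direct-integral spectral formula of Proposition~\ref{prop-dirint}, and rest the whole argument on the continuity of the eigenvalue branches $k\mapsto E_n(k)$, deduced exactly as you indicate from the resolvent regularity of~\ref{sec-resolvent} together with the continuity of $k\mapsto U(k)$. The only cosmetic difference is that the paper compresses your two inclusions into a single chain of equalities, identifying the measure-theoretic condition in Eq.~\eref{eq-specdirectint} with the essential range $\mathrm{ess.im}\,(E_n(\cdot))$ of each branch, which continuity then identifies with $\Delta_n$.
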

\begin{proof}
	Recall that the eigenvalues of an operator can be characterized as the isolated singularities of its resolvent. By the regularity result presented in~\ref{sec-resolvent} and the continuity of $k\mapsto U(k)$, the resolvent of $h_{U}(k)$ is clearly a continuous function of $k$ and thus so are its poles: this means that, for all $n$, the map $k\mapsto E_n(k)$ is everywhere continuous (except at most on sets of vanishing measure). Now, by Proposition~\ref{prop-dirint}, the spectrum of a direct integral can be characterized via Eq.~\eref{eq-specdirectint}, which in our case reads
	\begin{eqnarray*}
	\hskip-45pt \eqalign{ 
	\sigma(H_{U})&=\Bigl\{\lambda\in\mathbb{R}:\forall\epsilon>0,\,\mu\bigl(\{k\in\R:{\sigma(h_{U}(k))}\cap(\lambda-\epsilon,\lambda+\epsilon)\neq\emptyset\}\bigr)>0\Bigr\}\cr
		&=\bigcup_{n\in\mathbb{N}}\Bigl\{\lambda\in\mathbb{R}:\forall\epsilon>0,\,\mu\bigl(\{k\in \R:{\{E_n(k)\}}\cap(\lambda-\epsilon,\lambda+\epsilon)\neq\emptyset\}\bigr)>0\Bigr\}\cr
		&=\bigcup_{n\in\mathbb{N}}\Bigl\{\lambda\in\mathbb{R}:\forall\epsilon>0,\,\mu\bigl(\{k\in\R:|E_n(k)-\lambda|<\epsilon\}\bigr)>0\Bigr\}\cr
		&=\bigcup_{n\in\mathbb{N}}\mathrm{ess.im}\,\left(E_n(\cdot)\right)
		\cr
		&=\bigcup_{n\in\mathbb{N}}\Delta_n\,,
	}
	\end{eqnarray*}
	where $\mathrm{ess.im}\,\left(E_n(\cdot)\right)$ represents the (Lebesgue-)essential range of the function $k\mapsto E_n(k)$, and in the last step we used the continuity of the latter. 
\end{proof}
\begin{comment}
Besides, by construction, the spectrum is absolutely continuous\footnote{\com{[D: questo andrebbe in realta' dimostrato. Il Teorema XIII.86 di Reed-Simon dimostra che e' così in un caso molto simile al nostro, ma la continuita' non basta.]}}; the \emph{generalized} eigenfunctions of $\HBE$ take the form\footnote{We recall that a generalized eigenfunction of a self-adjoint operator $H$ is a function $\psi$, outside the Hilbert space in which $H$ is defined (and hence not normalizable), such that there exists a sequence $(\psi_n)_{n\in\mathbb{N}}$ of normalized vectors in the domain of $H$ such that $(H-E)\psi_n\to0$ for some real $E$~\cite{Teschl}.% A real $E$ is in the spectrum of $H$ if and only if there exists some generalized eigenfunction as such .
}
\begin{equation}\label{eq-Heigenfucntion}
\Psi_{k, n}(x, y)\equiv \e^{\iu k x} U^{-1}\psi_{n} (k; y)\,.
\end{equation}
The action of the isometry $U$, someway merely repositioning the label of the variable $k$, is in this context immaterial, therefore in the following we will simplify the notation by identifying $U^{-1}\psi_{n} (k; y)$ with $\psi_{n} (k; y)$.
\end{comment}

The above result shows that, even at the spectral level, self-adjoint extensions of the Hall Hamiltonian $H$ via fibered boundary conditions are particularly easy to work with: their spectral properties can be inferred by solving a continuous family of one-dimensional eigenvalue equations.

\subsection{Scaling properties and electric spectrum}\label{sec-scaling}

\begin{figure}[tb]
\centering
%\tikzsetnextfilename{fig2}
\includegraphics{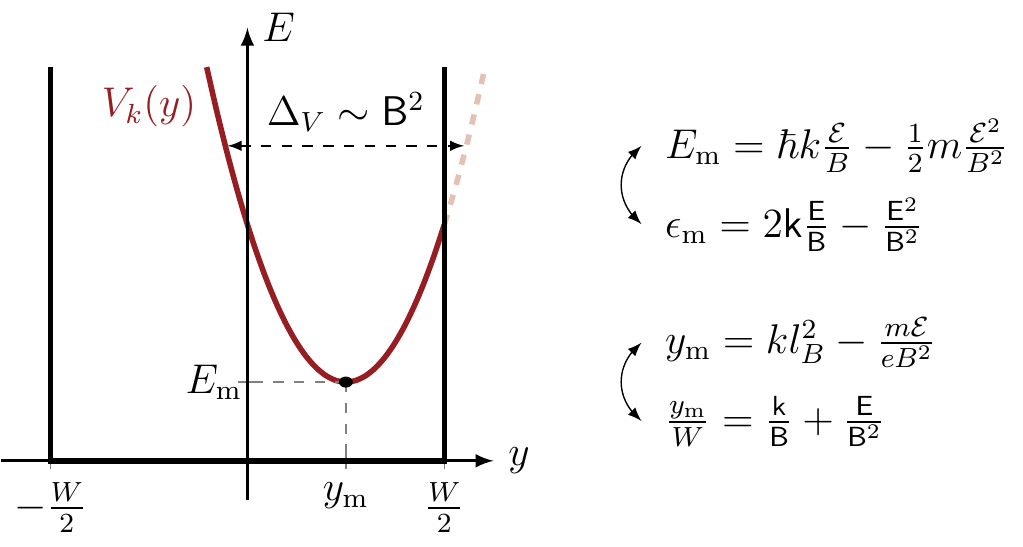}
\caption{Effective harmonic potential $V_k(y)$ inside the one-dimensional cavity $I_W$; its  (indicative) width $\Delta_V$ and the location $(y\sub{m}, E\sub{m})$ of its minimum are determined by $k$, $B$ and $\Ef$ or, equivalently, by the corresponding dimensionless quantities, see Eq.~\eref{eq-kBE}.}
\label{fig-costrainedHO}
\end{figure}

By our previous discussion, each fiber $h_{U}(k)$ takes the form of a one-dimensional Schr\"{o}dinger operator
\begin{eqnarray}
h_{U}(k)&=%-\frac{\hbar^2}{2m}\dv[2]{}{y}+V_{k}(y)\,,
-\frac{\hbar^2}{2m}\frac{\mathrm{d}^2}{\mathrm{d} y^2}
+\frac{1}{2}m\omega^2_B(y- kl_B^2)^2+e\Ef y\\
&=-\frac{\hbar^2}{2m}\frac{\mathrm{d}^2}{\mathrm{d} y^2}+\frac{1}{2}m\omega^2_B\Biggl(y-kl_B^2+\frac{m\Ef}{eB^2}\Biggr)^2+\hbar k\frac{\Ef}{B}-\frac{1}{2}m\frac{\Ef^2}{B^2}\,.\label{eq-Hk2}
\end{eqnarray}
Therefore, $h_{U}(k)$ represents the Hamiltonian of a quantum harmonic oscillator constrained in a one-dimensional cavity with some boundary conditions, the external fields $B$ and $\Ef$ controlling the width of the harmonic potential and the location of its minimum, see Figure~\ref{fig-costrainedHO}. Also notice that the wave-number $k$ influences the location of the potential minimum: accordingly, the spectrum of $h_{U}(k)$ depends on both the boundary conditions and on the three parameters $k$, $B$ and $\Ef$. 

Hereafter, when we need to explicit the dependence on the external fields, rather than on the boundary conditions, we just write $h(k; B,\Ef)$ and $E_n(k; B, \Ef)$ instead of $h_{U}(k)$ and $E_n(k)$. For later convenience, we will also recast all the parameters in terms of the dimensionless quantities
\begin{equation}\label{eq-kBE}
 \kad \equiv kW\,,\qquad \Bad \equiv \frac{eBW^2}{\hbar}\,,\qquad \Ead \equiv \frac{em\Ef W^3}{\hbar^2}\,,\qquad \epsilon\equiv \frac{2mEW^2}{\hbar^2}\,.
\end{equation}

Since $h(k; B,\Ef)$ and $h(k-m\Ef/(eB); B,0)$ only differ by a constant term, every energy band in the presence of a non-vanishing electric field is formally linked to the corresponding band at $\Ef=0$ via the relation
\begin{equation}\label{eq-EnElectic}
E_n(k; B,\Ef)=E_n\Biggl(k -\frac{m\Ef}{\hbar B}; B, 0\Biggr)+\hbar k\frac{\Ef}{B}-\frac{m}{2}\frac{\Ef^2}{B^2}\,,
\end{equation}
which, in the rescaled units \eref{eq-kBE}, reads
\begin{equation}\label{eq-EnElecticad}
\epsilon_n(\kad; \Bad,\Ead)=\epsilon_n\Biggl(\kad -\frac{\Ead}{\Bad}; \Bad, 0\Biggr)+2\kad\frac{\Ead}{\Bad}-\frac{\Ead^2}{\Bad^2}\,.
\end{equation}
This means that, as long as we apply the same boundary conditions to each fiber $h(k)$, the spectrum of the system in the presence of a non-vanishing electric field $\Ef$ can be \emph{exactly} computed in terms of the non-electric spectrum via Eqs.~\eref{eq-EnElectic}--\eref{eq-EnElecticad}.

When the electric field vanishes, another interesting relation can be found: introducing  the unitary operator $\mathcal{P}_y$ which inverts the sign of the coordinate $y$, we formally have
\begin{equation}
\mathcal{P}_y h(k; B, 0)\mathcal{P}_y^{\dagger}=h(-k; B,0)\,.
\end{equation}
Therefore, if $\sigma_x U(k)\sigma_x=U(-k)$, $\sigma_x$ being the first Pauli matrix, then $\mathcal{P}_y\dom(h_{U}(k))=\dom(h_{U}(-k))$, and the non-electric spectrum does not depend on the sign of $k$: 
\begin{equation}
	E_n(k; B, 0)=E_n(-k; B, 0)\,.
\end{equation}

\subsection{Asymptotic behavior of the spectrum at zero electric field}
A little more can be said in general about the non-electric fiber $h(k; B, 0)$, observing that it can be interpreted as a two-mode system~\cite{GRD06}. When $\Ef=0$, the maximal value $E\sub{c}(k)$ of the harmonic potential $V_k(y)$ inside the cavity is given by the expression
\begin{equation}
E\sub{c}\equiv \frac{1}{8}m\omega^2_B W^2 \Biggl(1+2\frac{\left|k\right|l^2_B}{W}\Biggr)^2\,.
\end{equation}
Heuristically, such an energy is expected to play the role of a critical energy in the asymptotic behavior of the spectrum  $\sigma(h_{U}(k))$, so that eigenvalues much greater than $E\sub{c}$ ``decouple'' from those lying at the bottom of the spectrum. In the high-energy regime ($E\gg E\sub{c}$) the eigenvalues, as well as their corresponding eigenstates, are indeed expected to be nearly independent of the harmonic potential, thus being uniquely determined by the boundary conditions on the cavity. As such, they can be perturbatively evaluated from the ones of a free particle in a box with the inherited boundary condition. For a related example, see~\cite{GRD06}.

\begin{figure}[tb]
\centering
%\tikzsetnextfilename{fig3}
\includegraphics[width=.9\textwidth]{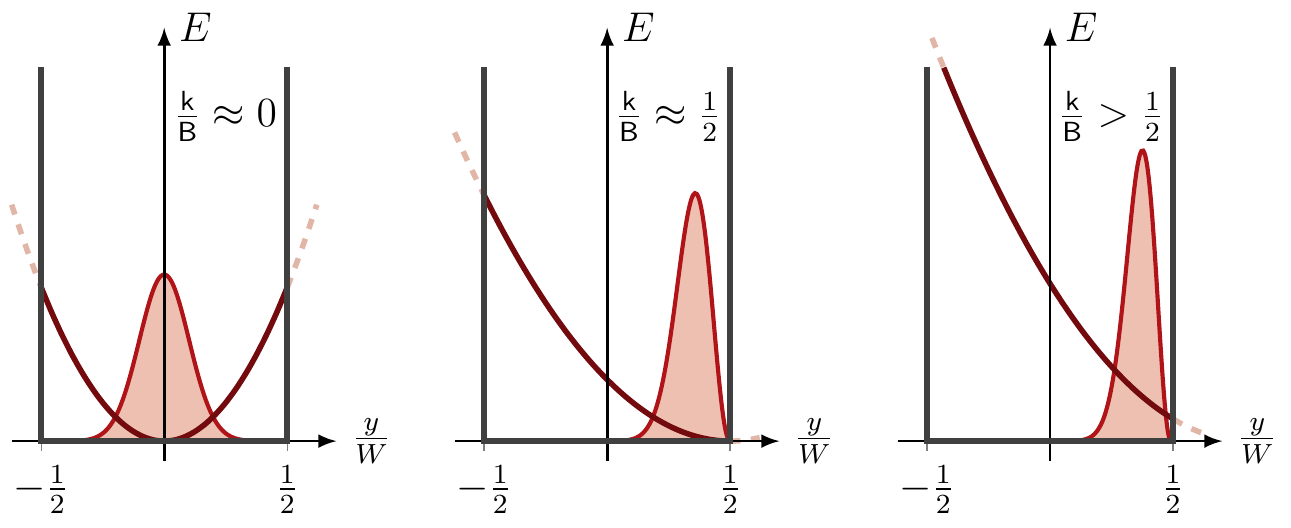}
\caption{Qualitative behavior (in the absence of the electric field $\Ef$) of the effective potential $V_k(y)$, and of a corresponding wavefunction, for different values of the dimensionless quantity $\kad/\Bad$.}
\label{fig-HOwells}
\end{figure}

On the other hand, eigenvalues in the low-energy regime $E\ll E\sub{c}$ may be expected not to “feel” the boundary conditions, since the wavefunctions are confined far from the boundary by the harmonic potential (when the latter is sufficiently narrow). This regime is actually more involved, since, in this case, $\sigma(h_{U}(k))$ heavily depends on both $k$ and $B$, as can be seen by inspecting Eq.~\eref{eq-Hk2}. Some remarkable particular cases, pictorially sketched in Figure~\ref{fig-HOwells}, are discussed in the following.
\begin{itemize}
	\item When $|\kad/\Bad|\ll 1/2$, \ie when the minimum of $V_k(y)$ lies in the middle of the interval, the lowest eigenvalues are close to the ones of an unconstrained harmonic oscillator:
	\begin{equation}\label{eq-spectrum-HO}
	E_n\approx E_n^{\mathrm{HO}}\equiv \hbar\omega_B \Biggl(n+\frac{1}{2}\Biggr)\,,\qquad n\in \mathbb{N}\,.
	\end{equation}
	This holds independently of the boundary condition, since the potential $V_k(y)$ confines the less energetic states far from the boundary. The number $N$ of such “nearly harmonic states” (which may be as well zero!) depends on the the value of $\Bad$: it can be estimated by using $E_n^{\mathrm{HO}}\lesssim E\sub{c}$, from which it follows $N \lesssim \frac{1}{8}(\Bad-4)$.
	\item When $|\kad/\Bad|\approx 1/2$, \ie when the minimum of $V_k(y)$ lies close to an edge of cavity, the lowest eigenvalues are asymptotic to the ones of ``half'' a harmonic oscillator, whose spectrum coincides, for some boundary conditions, with a subset of $\{E_n^{\mathrm{HO}}\}_{n\in \mathbb{N}}$: for instance, for Dirichlet and Neumann boundary conditions (to be introduced later in the text), it is given by Eq.~\eref{eq-spectrum-HO} with $n$ restricted to odd and even integers, respectively. 
	%The number $N$ of these “nearly harmonic” states is again estimated by using $E_n^{\mathrm{HO}}\lesssim E\sub{c}$, from which it now follows $N\lesssim \frac{1}{4}(\Bad-4)$.
	\item Finally, when  $|\kad/\Bad|\gg 1/2$, the part of $V_k(y)$ inside the square well is an arc of parabola that can be approximated by a straight line, at least near the boundary: in this case the asymptotic spectrum approximatively corresponds to the Airy spectrum~\cite{VS04}, \ie the spectrum of a Schr\"{o}dinger operator with a linear potential.
\end{itemize}
Naturally, for arbitrary values of  $k$ and $B$ and for intermediate energies $E\approx E\sub{c}$, we expect the energy levels to interpolate between these asymptotic regimes, with the particular features of such an interpolation depending on the particular choice of boundary conditions.

\section{General transport properties}\label{sec-transport}
Another class of properties of the Hall Hamiltonian that can be discussed in generality are its transport properties, which will be analyzed in this section. In Subsection~\ref{subsec-vel} we introduce the quantum observable describing the \emph{group velocity} (and hence the \emph{electric current}) associated with the eigenstates, while in Subsection~\ref{subsec-hallcon} we describe how to evaluate the \emph{Hall conductivity} of a quantum system.

\subsection{Velocity operator}\label{subsec-vel}
Let us momentarily consider our problem at the classical level. Since the system is confined in the $y$ direction and we are interested in local boundary conditions (as we will discuss in the next section), there can be no net transversal current; accordingly, we only have to examine the \emph{longitudinal} current $I_x(x)$ flowing across a transversal section of the strip. Moreover, since with the chosen boundary conditions the system is invariant under longitudinal translations, such  current does not depend on the $x$ coordinate. For our purposes it is convenient to study the longitudinal velocity $v_x$, which is related to the current $I_x$ by 
\begin{equation}
I_x=-en\sub{sur}W v_x\,,
\label{eq-currentvelocity}
\end{equation}
where $n\sub{sur}$ denotes the surface electron density. Applying the opportune Hamilton equation to (the classical counterpart of)  $H$ reveals the well-known relation, in the Landau gauge \eref{eq-Landau}, between the kinematic momentum $m v_x$ and the canonical momentum $p_x$:
\begin{equation}%\label{eq-vxClassical}
m v_x= p_x-eBy\,.
\end{equation}
This relation, which is independent of the electric field $\Ef$, is the starting point to define the longitudinal velocity at the quantum level, \ie as an operator on the Hilbert space $\Leb^2(\Os)$. Let us introduce the operator
\begin{equation}\label{eq-vx}
	v_x\equiv -\frac{1}{m}\Biggl(\iu\hbar\frac{\partial}{\partial x}\otimes 1 + 1\otimes eBy\Biggr),
\end{equation}
initially acting on the minimal domain $\dom_0=\Schw(\R)\otimes\Cinf(I_W)$ introduced in Eq.~\eref{eq-dom0}. By following an approach analogous to that in Section~\ref{sec-model}, we can decompose (the unique self-adjoint extension of) $v_x$, up to a unitary transformation, as the direct integral
\begin{equation}\label{eq-vxint}
\hat{v}_{x} \equiv \Fx v_x \Fx^{-1}= \int_{\R}^{\oplus} v(k)\dd{k}\,,
\end{equation}
the fiber $v(k)$  simply being the multiplication operator
\begin{equation}
v(k)\equiv \frac{1}{m}(\hbar k -eBy)\,.
\end{equation}
Indeed, since each fiber $v(k)$ is bounded and symmetric, it can be automatically extended to a self-adjoint operator on the whole $\Leb^2(I_W)$. Moreover, the function $k\mapsto (v(k)+\iu)^{-1}$ is continuous and hence measurable. Accordingly, by an analogous argument as in Proposition~\ref{prop-dirintbis}, the direct integral operator $\hat{v}_{x}$ is shown to be well-defined and self-adjoint on the domain
\begin{equation}
\dom(\hat{v}_{x} )=\Biggl\{\psi\in \Leb^2(\hat{\R}; \Leb^2(I_W)): \int_{\R} \|v(k)\psi(k)\|^2_{\Leb^2(I_W)}\dd{k}<\infty\Biggr\}\,.
\end{equation}
In order to investigate the transport properties of the system, we are interested in computing the expectation values of $\hat{v}_x$, and thus of each of its fibers, on each eigenfunction $\psi_{n}(k; y)$ corresponding to the eigenvalue $E_n(k)$ of $h_{U}(k)$:%, as defined in Eq.~\eref{eq-eigeneq}:
\begin{equation}
v_n(k)\equiv \Braket{\psi_{n}(k; \cdot)|v(k)\psi_{n}(k; \cdot)}_{\Leb^2(I_W)}\,.
\end{equation}
For the sake of simplicity, we slightly tighten our assumptions. Let us assume that:
\begin{itemize}
	\item the family of operators $\bigl\{h_{U}(k)\bigr\}_{k\in\R}$ is defined on a common domain, say $\dom_{U_0}$: this condition holds if we choose the same self-adjoint extension (that is the same boundary condition) for each fiber $h(k)$, setting thus $U(k)=U_0$ for all $k\in\mathbb{R}$;
	\item each eigenvalue $E_n(k)$ is simple, \ie non-degenerate. 
\end{itemize}
An immediate calculation shows that the following equation holds in the weak sense:
\begin{equation}\label{eq:dHk/dk}
\frac{1}{\hbar}\frac{\dd h_{U_0}(k)}{\dd k}=v(k)\,,
\end{equation}
that is, for all $\phi,\psi\in\dom_{U_0}$,
\begin{equation}
\frac{1}{\hbar}\frac{\mathrm{d}}{\mathrm{d}k}\Braket{\phi|h_{U_0}(k)\psi }_{\Leb^2(I_W)}=\Braket{\phi\big|v(k)\psi}_{\Leb^2(I_W)}\,.
\end{equation}
Therefore, under the above two conditions, the Hellmann-Feynman equation holds~\cite{EFC,ZG} and we can write\footnote{We mention however that a generalized Hellmann-Feynman formula exists also for degenerate spectra and, in some cases, when each fiber $h(k)$ is defined on a different domain.}
\begin{equation}\label{eq-vnk}
v_n(k)=\frac{1}{\hbar}\frac{\dd E_n}{\dd k} =\frac{1}{\hbar}E_n'(k)\,.
\end{equation}
The relation~\eref{eq-EnElectic} between the electric spectrum $E_n(k; B,\Ef)$ and the non-electric one can be applied also to the group velocity $v_n(k)$, yielding:
\begin{equation}\label{eq-vnkElectric}
v_n(k; B, \Ef)=v_n\Biggl(k-\frac{m\mathcal{E}}{\hbar B}; B, 0\Biggr)+\frac{\Ef}{B}\,.
\end{equation}
Interestingly, the mean velocity in the presence of the electric field only differs by a constant term, which thus coincides with a constant \textit{drift velocity}  in agreement with classical mechanics.

We conclude by observing that one may also define the \emph{local} velocity operator 
\begin{equation}
v_x(y_0)\equiv v_x\,\ddelta(y-y_0)\,,
\end{equation}
where $v_x$ is the operator of Eq.~\eref{eq-vx}. Repeating the same steps as above one finds the corresponding fiber $v(k; y_0)\equiv \frac{1}{m}(\hbar k -eBy)\ddelta(y-y_0)$, so that its expectation values are given by
\begin{eqnarray}
v_n(k; y_0)  &\equiv\braket{\psi_{n}(k; \cdot)|v(k; y_0)\psi_{n}(k; \cdot)}_{\Leb^2(I_W)}\\
&=-\omega_B (y_0-kl^2_B)|\psi_n(k; y_0)|^2\,.\label{eq-localvel}
\end{eqnarray}
Note that this quantity vanishes either when $y_0=kl_B^2$ (which corresponds to the center of the non-electric harmonic potential, see Figure~\ref{fig-costrainedHO}) and in correspondence of the wavefunction nodes.

\subsection{Hall conductivity}\label{subsec-hallcon}
Let us start  by considering again the problem at the classical level. For a Hall system, the conductivity tensor $\sigma$, defined by the expression
\begin{equation}\label{eq-sigma}
	\vb{J}=\sigma\Eb
\end{equation} 
with $\vb{J}=-en\vb{v}$ being the current density, can be classically evaluated using the Drude theory~\cite{Yo02}. In particular, when there is no scattering (adiabatic limit), its diagonal components $\sigma_{xx}$ and $\sigma_{yy}$ vanish while the \emph{Hall conductivity} $\sigma_{xy}$, which relates the longitudinal current $J=J_x$ with the transversal electric field $\Ef=\Ef_y$, is given by
\begin{equation}\label{eq-sigmaxyD}
\sigma_{xy}=\frac{ne}{B}\,.
\end{equation}
This classical expression can be adapted for a quantum system by substituting the macroscopic electron density $n$ with the number of states per unit volume up to a certain Fermi energy $E\sub{F}$, that is, the \emph{cumulative density of states}. Its expression, in the zero-temperature limit which we are interested in, is given by the expression 
\begin{eqnarray}
\hskip-40pt
\frac{N(E\sub{F}; B, \Ef)}{V}&\equiv
\frac{1}{V}\sum_{n=0}^{+\infty}\int_{-\infty}^{E\sub{F}}\dd{E}\int_{-\infty}^{+\infty}\ddelta(E-E_{n}(k; B, \Ef))\frac{L}{2\pg}\dd{k} \\ 
&=\frac{1}{V}\sum_{n=0}^{+\infty}\int_{-\infty}^{+\infty}\htheta(E\sub{F}-E_{n}(k; B, \Ef))\frac{L}{2\pg}\dd{k}\label{eq-N}\\
&=\frac{1}{V} \sum_{n=0}^{+\infty} \int_{-\infty}^{+\infty}  \htheta\Biggl(E\sub{F}-\hbar k\frac{\Ef}{B}-\frac{m}{2}\frac{\Ef^2}{B^2}-E_n(k; B,0)\Biggr)\frac{L}{2\pg}\dd{k}  \,,
\end{eqnarray}
%\begin{eqnarray}
%\frac{N(E\sub{F}; B, \Ef)}{V}&\equiv
%\frac{1}{V} \sum_{n=0}^{+\infty} \int_{-\infty}^{E\sub{F}+\hbar %k\frac{\Ef}{B}-\frac{m}{2}\frac{\Ef^2}{B^2}}\dd{E}\int_{-\infty}^{%+\infty} \ddelta(E-E_n(k))\,\frac{L}{2\pg}\dd{k}\nonumber\\
%&=\frac{1}{V} \sum_{n=0}^{+\infty} \int_{-\infty}^{+\infty}  %\htheta\left(E\sub{F}+\hbar %k\frac{\Ef}{B}-\frac{m}{2}\frac{\Ef^2}{B^2}-E_n(k)\right)\,\frac{L%}{2\pg}\dd{k}\,
%\end{eqnarray}
where $\uptheta(x)$ denotes the Heaviside step function. Note that the length $L$, which is formally infinite for the strip $\Os$, actually simplifies with the volume $V$ leaving the (finite) cross section $A=V/L$. 

For our purposes, it will be convenient to consider the \emph{conductance} 
\begin{equation}
G_{xy}\equiv \frac{A\sigma_{xy}}{W}\,,
\end{equation}
which differs from $\sigma_{xy}$ merely by a geometric factor but does not depend on the cross section $A$. Recall that the unit of measure of $G_{xy}$ is the conductance quantum $e^2/h$, the latter being in turn  the reciprocal of the von Klitzing constant $R\sub{K}\approx \SI{2.58}{\kilo\ohm}$. By plugging the quantum density \eref{eq-N} in the classical expression \eref{eq-sigmaxyD}, we readily obtain
%\begin{eqnarray}=\frac{1}{V} \sum_{n=0}^{+\infty} %\int_{-\infty}^{+\infty}  \htheta\left(E\sub{F}+\hbar %k\frac{\Ef}{B}-\frac{m}{2}\frac{\Ef^2}{B^2}-E_n(k)\right)\,\frac{L%}{2\pg}\dd{k}\,
%\end{eqnarray}
\begin{eqnarray}
\hskip-40pt
G_{xy}(E\sub{F}; B, \Ef)&=\frac{e^2}{h}\sum_{n=0}^{+\infty}\frac{\hbar}{eBW}\int_{-\infty}^{+\infty} \htheta\Biggl(E\sub{F}-\hbar k\frac{\Ef}{B}-\frac{m}{2}\frac{\Ef^2}{B^2}-E_n(k; B,0)\Biggr)\dd{k} %\,%\htheta(E\sub{F}-E_n(k; B, \Ef))\dd{k} 
\\
	&=\frac{e^2}{h}\sum_{n=0}^{+\infty}\int_{-\infty}^{+\infty}\htheta(\epsilon\sub{F}-\epsilon_n(\kad; \Bad, \Ead))\frac{\dd{\kad}}{\Bad}\,,\label{eq-Gxy}
\end{eqnarray}
where the second line is in terms of the dimensionless quantities introduced in Eq.~\eref{eq-kBE}.

This simple result can be obtained also by quantizing directly the definition~\eref{eq-sigma}, thus avoiding to refer to the (classical) Drude theory and in particular to the expression~\eref{eq-sigmaxyD}. 
We give some details of this approach in~\ref{sec-conductivity}. 

Interestingly, the quantum conductance $G_{xy}(E\sub{F}; B, \Ef)$ is actually a non-linear function of the applied electric field $\Ef$, since we are not relying on the linear response theory (\ie the celebrated Kubo formula~\cite{ChaPie95}) as it is usually done in many alternative models of the QHE; this full dependence of $G_{xy}$ on $\Ef$ will allow us to predict the breakdown of the QHE at a sufficiently strong electric field, the latter being a well-know experimental phenomenon,  see e.g.~\cite{KaHiNa93, Nach99}.

We point out that a similar result has been already obtained for the simpler case in which the electron is free to move in the whole plane $\Omega=\R^2$~\cite{Kr04,Kr06}, and thus there are no complications involved in the selection of a self-adjoint extension (indeed in such a case the Hall Hamiltonian has  a unique self-adjoint extension on $\Cinf(\R^2)\subset\Leb^2(\R^2)$, see e.g.\ Theorem~2 of~\cite{Lein83}). In these works the authors explain how to generalize the result in order to account for the electron spin, for the scattering and for finite-temperature effects. In~\cite{Kr06}, in particular, they compare their theoretical results with some experiments, obtaining a surprisingly good agreement which thus validates the potentiality of our model. 

The aim of the remaining part of this paper is therefore to complete the analysis of this model, understanding how boundary conditions and  finite-size effects influence the quantization of the Hall conductivity.

\section{Boundary conditions for the quantum Hall effect}\label{sec-saext}

%\subsection{Deriving a subclass of boundary conditions}
Proposition~\ref{prop-dirintbis} allows us to describe all possible fibered boundary conditions for $H$ on the Hall strip. However, not all such conditions may be relevant for the description of the QHE. As a matter of fact, we will now restrict our attention to boundary conditions satisfying the following requests. 
\begin{enumerate}
	\item We will limit our considerations to fibered boundary conditions which are \emph{independent of $k$}, \ie we will select the same self-adjoint extension for \emph{all} the fiber operators $h(k)$, which will hence be defined, for a given $U_0\in\UU(2)$, on the common domain 
	\begin{equation}\label{eq-bc}
	\dom_{U_0}=\{\psi\in H^2(I_W) :  \iu(I+U_0)\Psi=(I-U_0)\Psi'\}\,,
	\end{equation}
	see Eqs.~\eref{eq-boundarydata}--\eref{eq-Ddagger}. However, we mention that boundary conditions with non-trivial dependence on $k$ may play a role for the description of the QHE~\cite{JJV95, AANS98}, see also~\ref{sec-position} for a simple example.  \label{cond-trivial}
	\item We will exclude all \emph{non-local} boundary conditions, \ie we will only consider boundary conditions not mixing the values of the wavefunction and its derivative at the two edges $y=\pm W/2$, thus keeping the original topology of the strip $\Os$. See~\cite{NiuTho87}, however, for other interesting models of the QHE in which non-local boundary conditions are employed. \label{cond-local} 
	\item We will only consider boundary conditions which preserve the symmetry under the transversal reflection $\mathcal{P}_y$ (mapping $y$ to $-y$); from the physical point of view this should be understood as a property of the Hall device itself, \ie we are assuming that both its edges are made of, say, the same material; by contrast, the whole Hall system \eref{eq-Omegas}--\eref{eq-BEfields} is actually not $\mathcal{P}_y$-invariant, as $\mathcal{P}_y H(B,\Ef) \mathcal{P}_y^{-1}=H(-B,-\Ef)$.\label{cond-y}
\end{enumerate}
These requests greatly reduce the number of free parameters describing the allowed boundary conditions on the strip. The first assumption~\eref{cond-trivial} guarantees that all fibers satisfy the same boundary condition, accordingly associated with a single unitary matrix $U_0$, as in Eq.~\eref{eq-bc}. The locality condition~\eref{cond-local} implies that $U_0$ is a diagonal matrix, so that the components of $\Psi$ and $\Psi'$ in Eq.~\eref{eq-Uboundary} do not mix:
\begin{equation}\label{eq-unit}
%\right\(\begin{array}{*{20}{c}}....\end{array}\left\)
U_0=\left(\begin{array}{@{}cc@{}}
\e^{\iu\theta_1} & 0 \\ 0 & \e^{\iu\theta_2}
\end{array}\right)
\,,\qquad\theta_1,\theta_2\in\left[0,2\pg\right)\,.
\end{equation}
Finally, the last condition~\eref{cond-y} can be readily shown to be satisfied if and only if $\theta_1=\theta_2\equiv \theta$. As a result, the assumptions~\eref{cond-trivial}--\eref{cond-y} are only satisfied by a one-parameter group of fibered boundary conditions, indexed by a parameter $\theta\in\left[0,2\pg\right)$, which for each $k\in\mathbb{R}$  reads
\begin{equation}\label{eq-RobinTheta}
\iu(1+\e^{\iu\theta})\psi(k; \pm)=\pm l_0(1-\e^{\iu\theta})\psi'(k; \pm)\,,
\end{equation}
where for compactness we have set 
\begin{equation}
\psi^{(\prime)}(k; \pm)\equiv\psi^{(\prime)}(k; \pm W/2)
\end{equation}
and where $\psi(k; y)\in H^2(I_W)\subset \Leb^2(I_W)$ as in Eq.~\eref{eq-eigeneq}. Such conditions are well-known in the literature, as we now explain with some details. 

\subsection{Fibered Robin boundary conditions}
It is customary to exploit the Cayley transform, a well-known bijection between the unit circle $\mathbb{S}_1$ %=\{\e^{\iu\theta}\}_{\theta\in[0,2\pi[}
 and the extended real line $\overline{\R}=\R\cup\{\infty\}$, namely
\begin{equation}\label{eq-alpha}
\theta\in\left[0,2\pi\right)\mapsto \alpha\equiv \iu\frac{1}{l_0}\frac{1+\e^{\iu\theta}}{1-\e^{\iu\theta}}=-\frac{1}{l_0}\cot\left(\case{\theta}{2}\right)\in\overline{\mathbb{R}}\,,
\end{equation}
in order to rewrite Eq.~\eref{eq-RobinTheta} more compactly as 
\begin{equation}\label{eq-Robin}
\psi'(k; -)=-\alpha\, \psi(k; -)\qquad\textnormal{and}\qquad
\psi'(k; +)=+\alpha\, \psi(k; +)\,.
\end{equation}
This condition represent a special case of a \emph{Robin} (or mixed) boundary condition on the segment, the most general case being the one with two different parameters $\alpha_\pm$ on the two edges; $\alpha$ will be called the \emph{Robin parameter}. Before going on, let us mention that these fibered Robin conditions do actually correspond to \emph{global} Robin conditions in the position representation, see~\ref{sec-position}.

Two peculiar cases are obtained for $\alpha=\infty$ or $\alpha=0$: in the former case, Eq.~\eref{eq-Robin} reduces to
\begin{equation}
\psi(k; -)=0\qquad\textnormal{and}\qquad
\psi(k; +)=0\,,
\end{equation}
that is, to a \emph{Dirichlet} boundary condition at both the extremal points of $I_W$. In the latter case, we get
\begin{equation}
\psi'(k; -)=0\qquad\textnormal{and}\qquad
\psi'(k; +)=0\,,
\end{equation}
corresponding to a \emph{Neumann} boundary condition. While both Dirichlet and Neumann boundary conditions are just special cases of Robin conditions, we will often stress their role by writing explicitly all equations in the cases $\alpha=\infty$, and $\alpha=0$ together with the general case. 

The sign of $\alpha$ has an interesting physical interpretation, when linked to the corresponding eigenfunctions: Robin conditions either ``repel" or ``attract" wavefunctions at the boundary respectively in the cases $\alpha < 0$ or $\alpha \ge 0$, as pictorially shown in Figure~\ref{fig-Robin}. In particular, we will henceforth focus on three different values of the dimensionless parameter $\alpha W$, that is:
\begin{itemize}
	\item $\alpha W=\infty$, \ie Dirichlet boundary condition;
	\item $\alpha W=0$, \ie Neumann boundary condition;
	\item $\alpha W=10$, \ie an intermediate case between Dirichlet and Neumann which, for the sake of simplicity, we will simply refer to as \emph{the} Robin boundary condition.
\end{itemize}
Our choice for the third boundary condition reflects the fact that positive Robin conditions are known to enhance the appearance of \emph{edge states}, \ie states that are localized in a neighborhood of the boundary: these states, at least in the absence of the magnetic field $B$, are associated with negative energy levels. Conversely, Robin conditions with a negative parameter merely constitute an interpolation between Dirichlet and Neumann conditions, apparently without any additional feature.

\begin{figure}[tb]
%\tikzsetnextfilename{fig4}
	\centering
\includegraphics{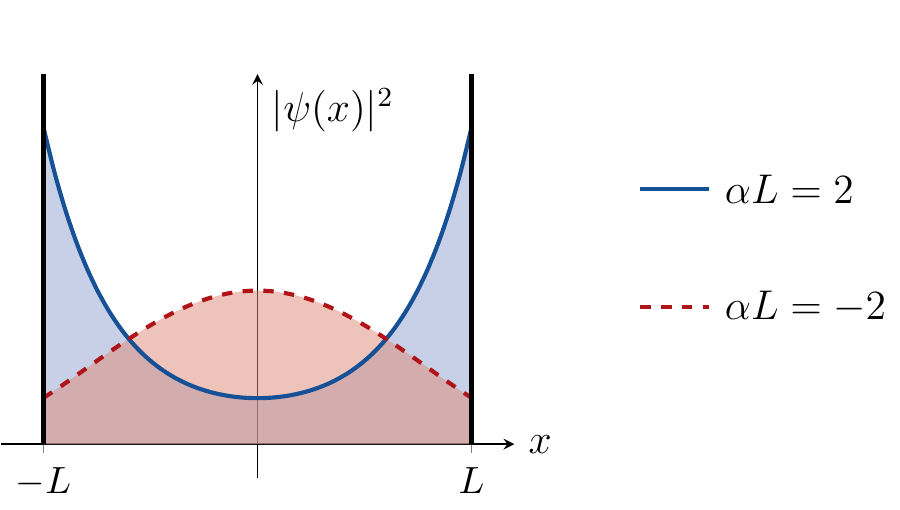}
\caption{Squared modulus of some ground-state eigenfunctions of the free Laplacian on the segment line $\left(-L,L\right)$, subjected to Robin boundary conditions $\psi'(\pm L)=\pm \alpha\, \psi(\pm L)$; it can be seen that the boundary conditions have a repulsive effect when $\alpha<0$ and an attractive one  when $\alpha\ge 0$.}
	\label{fig-Robin}
\end{figure}

Now, in order to concretely obtain the energy levels and the associated eigenfunctions, we have to solve the eigenvalue problem  \eref{eq-eigeneq} for each boundary condition; we will also omit the dependence of the energy $E(k)$ on the wave-number $k$ where no ambiguity rises.  By appropriately rescaling the variables, the eigenvalue equation can be easily recast as the Weber differential equation: see~\ref{sec-weber} for details. For every energy $E$, this equation admits a two-dimensional space of solutions, any pair of independent generators being expressible in terms of the confluent hypergeometric function: by denoting the elements of such a pair as $u^{1}$ and $u^2$, the general solution of Eq.~\eref{eq-eigeneq} for an arbitrary $E$ is therefore given by
\begin{equation}
\psi_{E}(k; y)\equiv c_1 u_E^1( k; y) + c_2 u_E^2(k; y)\,,\qquad c_1,c_2\in \mathbb{C}\,.
\end{equation}
Three constraints have to be imposed on this expression:  the normalization constraint
\begin{equation}
\int_{I_W} |\psi_{E}(k; y)|^2 \dd{y}=1\,
\end{equation}
and the boundary condition at the two edges of the segment line. As a consequence, $c_1$ and $c_2$ are fixed (up to a global phase) and the admissible energy values are quantized, as they correspond to the real roots of a suitable \emph{spectral function} $F_k(E)$~\cite{AIM15}. A long but straightforward calculation yields explicit expressions for $\psi_{E}(k; y)$ (up to a normalization constant) and of $F_k(E)$ for each boundary condition:
\begin{eqnarray*}
\psi_{E}^{\mathrm{D}}(k; y)&=u_E^2(k; -) u_E^1(k; y) - u_E^1(k; -) u_E^2(k; y) 	\qquad&\textnormal{(Dirichlet)}\\[3pt]
\psi_{E}^{\mathrm{N}}(k; y)&={u_E^2}'(k; -) u_E^1(k; y) - {u_E^1}'(k; -) u_E^2(k; y)&\textnormal{(Neumann)}\\[3pt]
\psi_{E}^{\mathrm{R}}(k; y)&= C_E^{2}(k, \alpha) u_E^1(k; y) - C_E^{1}(k, \alpha) u_E^2(k; y)&\textnormal{(Robin)} 
\end{eqnarray*}
where $C_E^{i}(k, \alpha)\equiv  {u_E^i}'(k; -)+\alpha u_E^i(k; -)$. The corresponding spectral functions are
\begin{eqnarray*}
F_k^{\mathrm{D}}(E)&=u_E^2(k; -) u_E^1(k; +) - u_E^1(k; -) u_E^2(k; +) \qquad &\textnormal{(Dirichlet)}\\[3pt]
F_k^{\mathrm{N}}(E)&={u_E^2}'(k; -) {u_E^1}'(k; +) - {u_E^1}'(k; -) {u_E^2}'(k; +)
&\textnormal{(Neumann)}\\[3pt]
F_k^{\mathrm{R}}(E)&=f_k^{+}(E, \alpha)-f_k^{-}(E,\alpha)
&\textnormal{(Robin)} 
\end{eqnarray*}
where
\begin{equation}
f_k^{\pm}(k,\alpha)\equiv 
[{u_E^1}'(k; \mp) \pm \alpha u_E^1 (E, k; \mp)] 
[{u_E^2}'(k; \pm) \mp  \alpha u_E^2 (k; \pm)]\,.
\end{equation}
Since the equation $F_k(E)=0$ is generally transcendental (and involve some special functions), its solutions have to be evaluated numerically, and this will be the aim of the next subsection.

\begin{figure}[t]
%\tikzsetnextfilename{fig5}
\centering
\includegraphics{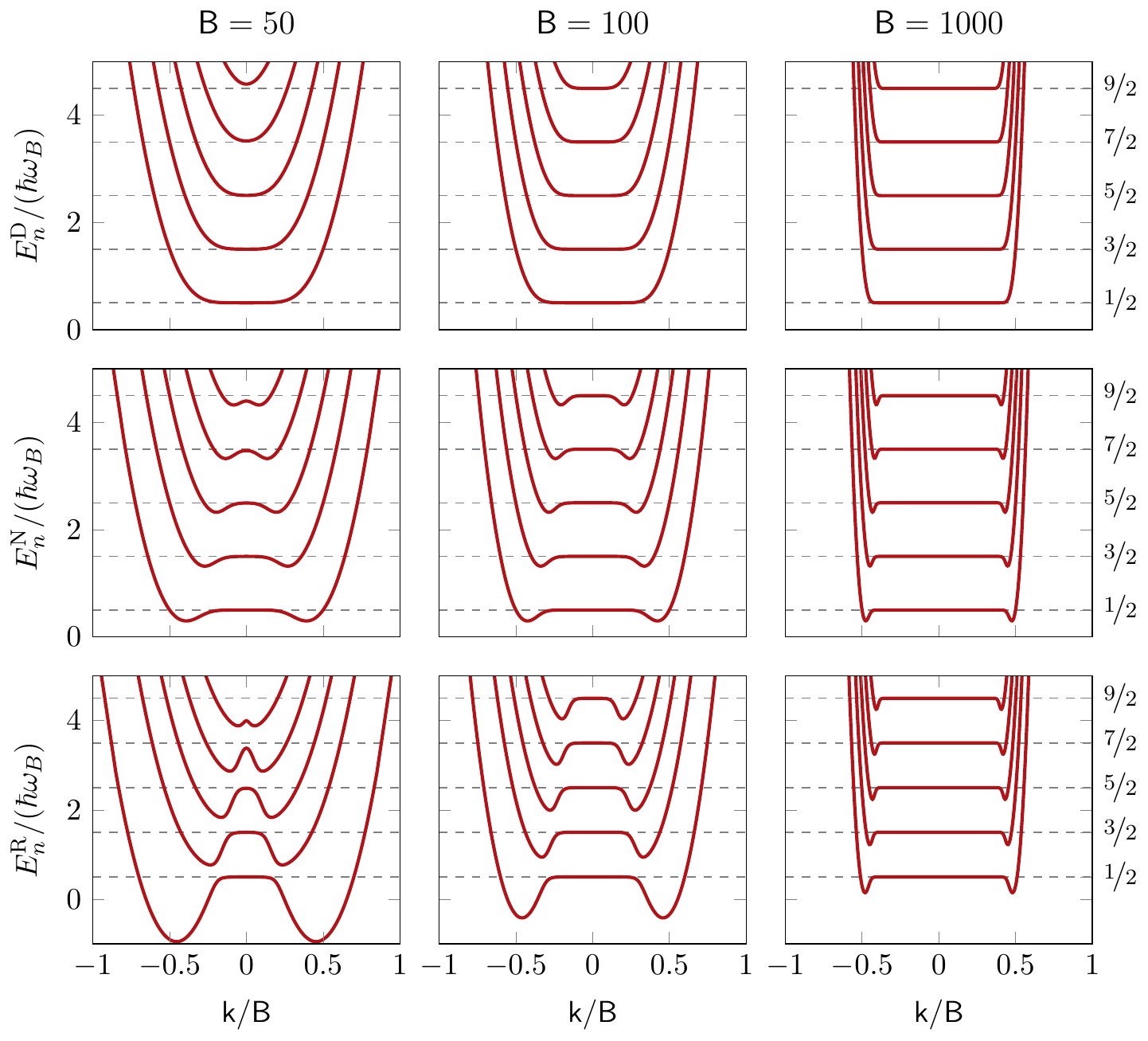}

\caption{Dispersion diagrams in the rescaled variables $E/(\hbar\omega_B)$ and $\kad/\Bad$ for different values of the magnetic field and for different boundary conditions; from left to right: $\Bad=50,100,1000$; from top to bottom: Dirichlet, Neumann and Robin boundary conditions.}
\label{fig-dispersion}
\end{figure}

\subsection{Dispersion diagram}\label{sec-spectrum}
As already discussed, the spectrum of $H_{U}$ generally consists of many energy bands depending on the wave-number $k$; these bands can be obtained by finding the roots of the spectral function $F_k(E)$ for different values of $k$. The result of this operation, in the non-electric case ($\Ef=0$) and for some values of the magnetic field $B$, is shown in Figure~\ref{fig-dispersion} for all the above-mentioned boundary conditions. In this figure we plotted the so-called \emph{dispersion diagrams}, which show the dependence of the energy levels $E_n(k)$ on the wave-number $k$; the dashed lines represent the Landau levels $E_n^{\textup{HO}}=\hbar\omega_B(n+1/2)$, which constitute the spectrum of $H(B,\Ef=0)$ when the electron is free to move in the whole plane. For completeness, in Figure~\ref{fig-eigen} we also plotted some of the eigenfunctions corresponding to the first energy band. 

\begin{figure}[t]
%\tikzsetnextfilename{fig6}
\centering
\includegraphics{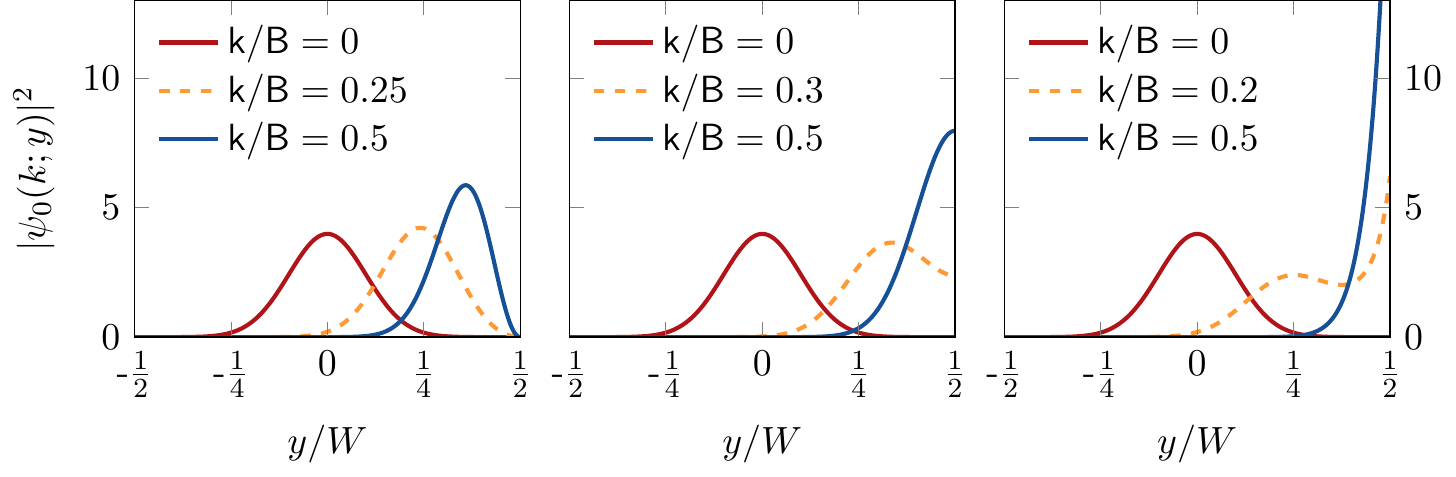}
\caption{Squared modulus of the eigenfunctions $\psi_{0}(k; y)$ belonging to the first energy band, for $\Bad=50$ and for different values of $\kad$; from left to right: Dirichlet, Neumann and Robin boundary conditions.}
\label{fig-eigen}
\end{figure}

Different features of these dispersion diagrams have long been qualitatively known~\cite{Halp82, Butt88}; to the best of our knowledge, however, only the dispersion diagram for Dirichlet conditions was quantitatively computed~\cite{Ve11, MU17}. In particular, Neumann and Robin spectra display interesting new features,  related to the \emph{edge currents}, which, to our knowledge, have never been accounted for in the existing literature. We list here some comments.

\begin{itemize}
	\item If the magnetic field is sufficiently high, the energy bands have a common behavior which is independent of the boundary conditions: they are flat for $|\kad/\Bad|\lesssim 1/2$, mimicking the Landau levels, but rise when $|\kad/\Bad|\gtrsim 1/2$. The role of the boundary condition becomes indeed relevant only at intermediate values of $k$, \ie when $|\kad/\Bad|\approx 1/2$.	Physically, eigenvalues with small values of $k$, \ie small momenta, depend negligibly on the behavior at the boundary and are hence close to the corresponding Landau levels. As shown in Figure~\ref{fig-eigen}, this property has an immediate counterpart at the level of the eigenstates: the ones with small $k$ are largely concentrated in the bulk, whereas states with large momentum concentrate near one edge of the strip. For this reason, states with small and large values of $k$ (or better of $\kad/\Bad$) will be respectively referred to as \emph{bulk} and \emph{edge} states.
	
\item To understand the raising of the energy bands at large values of $k$, we can think of a classical gas in a box: when the volume of the box is reduced, the energy of the gas increases. Note that this argument is not just a classical analogy: as showed explicitly in~\cite{DiAF13}, if the wall of the box are moved the energy of the system changes also at the quantum level.
In our case, an electron in the bulk is contained in an effective box of length $l_B$; when instead $|\kad/\Bad|\approx 1/2$, which means that the center of the quadratic potential lies near an edge (recall Figure~\ref{fig-HOwells}), the box width shrinks to $l_B/2$, and it is further reduced when $|\kad/\Bad|$ goes over $1/2$. This localization effect is again reflected by the shape of the eigenfunctions.
	
\item As for the dependence of the dispersion diagrams on the magnetic field $B$, the width of the flat region (bulk) of the diagram increases as $B$ increases, \ie as the magnetic length $l_B=\sqrt{\hbar/eB}$ decrease. Remarkably, this behavior is again independent of the boundary conditions.

\item In the case of Neumann and (positive) Robin conditions, a novel phenomenon is visible in the dispersion diagrams: before rising in correspondence of $|\kad/\Bad|\approx 1/2$, the energy bands $E_n(k)$ fall developing a characteristic negative bump, its width being again related to the magnetic length $l_B$. Interestingly, because of these bumps, the ground energy of the system is no longer given by $E_0(k=0)\approx \hbar\omega_B/2$ as in the case of the Dirichlet spectrum. This energy lowering may be due to the fact that Neumann and (positive) Robin conditions do not repel the wavefunctions from the boundary, unlike the Dirichlet condition. 
As we will argue in the following section, the states corresponding to (a part of) the bumps have properties which are halfway between those of bulk and edge states, and have thus no classical analogues.

\item The Robin boundary conditions interpolate between Neumann ($\alpha=0$) and  Dirichlet ($\alpha=\pm \infty$) boundary conditions. The bumps persist for any finite value of $\alpha$. However, despite the limit $\alpha\to - \infty$ is very regular and the bumps are continuously decreasing till they disappear, the limit $\alpha\to + \infty$ is quite singular~\cite{AIM05}. As $\alpha$ grows to $+\infty$ the size of the bumps increases, meaning that edge states reach very high negative energies at the same time that they modify the central part of the bulk spectrum (see Figure~\ref{fig-RobToDir}). The states responsible for this anomalous behavior are very localized at the edge and become delta-like in the extreme limit $\alpha\to +\infty$, which means that they disappear from the spectrum as it corresponds to the Dirichlet case.
\end{itemize}

\begin{figure}[tb]
\centering
%\tikzsetnextfilename{fig7}
\includegraphics{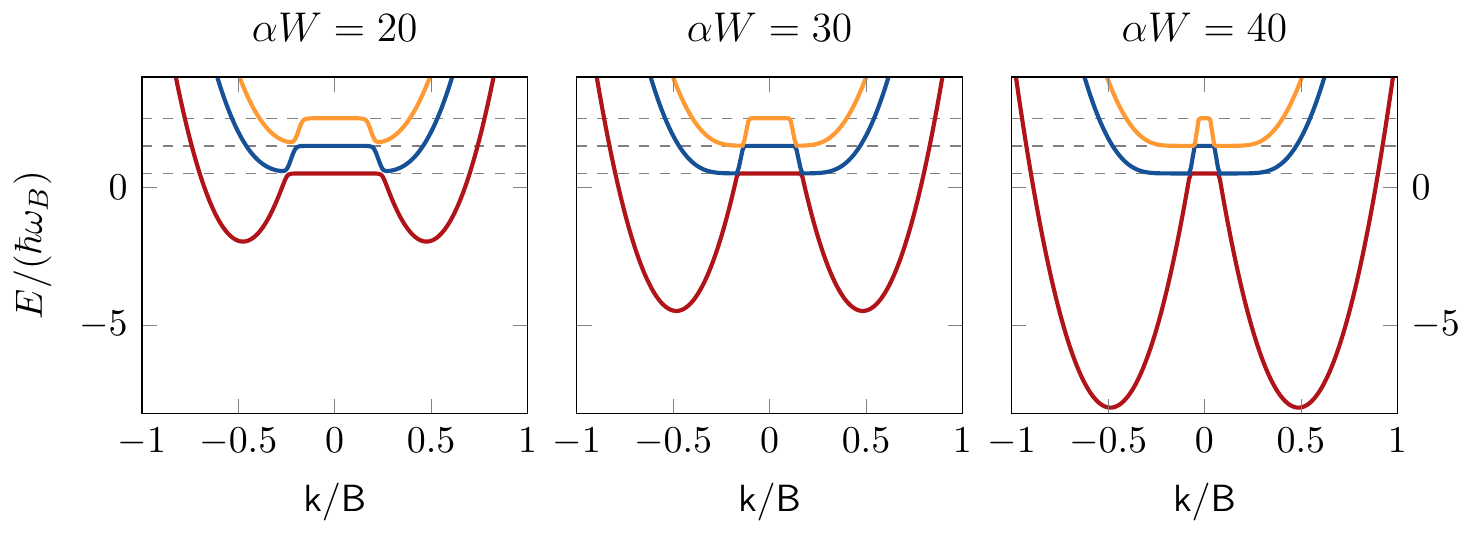}
\caption{Dispersion diagrams in the rescaled variables $E/(\hbar\omega_B)$ and $\kad/\Bad$ for Robin boundary conditions with different values of the parameter $\alpha W$.}
\label{fig-RobToDir}
\end{figure}

Summing up, a Hall strip with local, fibered and symmetry-preserving boundary conditions is characterized by a dichotomy between \emph{bulk states} (with small momentum and mostly localized in the bulk of the strip)  and \emph{edge states} (with large momentum and mostly localized at one edge of the strip). Besides, in the presence of attractive boundary conditions, such as the Neumann and positive Robin ones, a third kind of states appears. To better understand the physical differences between these states, we devolve the next two subsections to the analysis of their transport properties.

\begin{figure}[tb]
\centering
%\tikzsetnextfilename{fig7}
\hspace*{-12pt}
\includegraphics{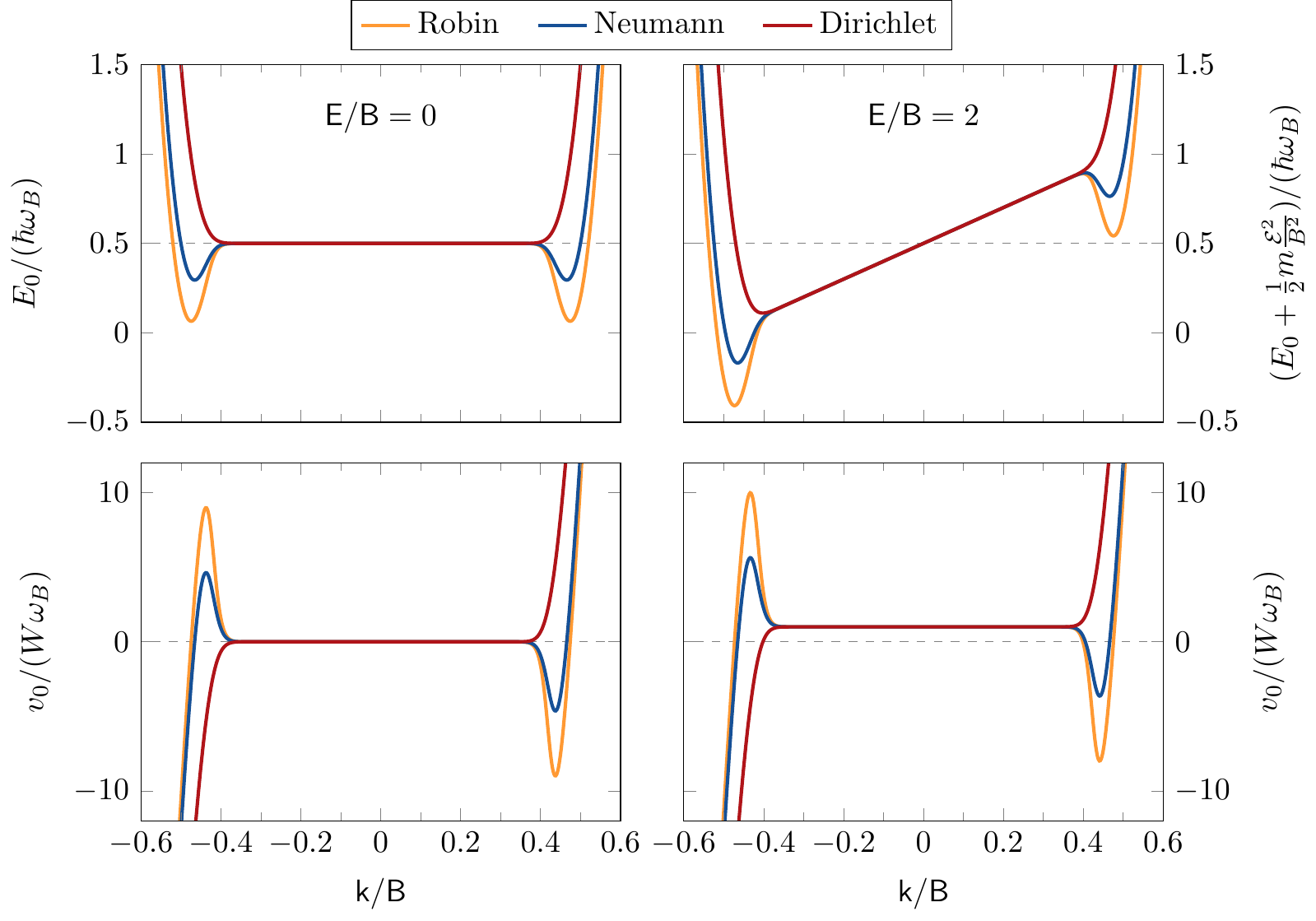}
\caption{Effect of different boundary conditions on the first energy band $E_0(k)$ (top row) and on the associated group velocity $v_0(k)$ (bottom row); left column: $\Bad=500$ and $\Ead=0$; right column: $\Bad=500$ and $\Ead=1000$.}
	\label{fig-velocity}
\end{figure}

\subsection{Bulk-edge dichotomy}\label{sec-be}
In Figure~\ref{fig-velocity} we plotted, both for the cases $\Ef=0$ and $\Ef>0$, the dispersion diagram of the first energy band $E_0(k)$ and its associated group velocity $v_0(k)$, comparing Dirichlet, Neumann and Robin conditions.\footnote{The Robin spectrum is non-degenerate for every choice of the Robin parameter $\alpha$ (see \eg~\cite{LS91}), so the assumptions leading to the Hellmann-Feynman equation~\eref{eq-vnk} actually hold.}

Let us first discuss the velocity diagram when $\Ef=0$, which corresponds to the bottom left panel. Since bulk states are associated with a flat (\ie constant) energy dispersion, these states do not propagate, independently of the boundary conditions; conversely, edge states develop a finite group velocity, either positive or negative, even in the absence of an applied electric field! However, note that, since the non-electric group velocity $v_n(k; B, \Ef=0)$ is an odd function, the \emph{net} electric current vanishes, as expected. 

We note that the edge states propagation is already predicted at the classical level, although only in the case of a non-vanishing electric field, where it is usually explained in terms of skipping orbits at the two edges of the strip~\cite{Yo02}. Interestingly, however, at the quantum level novel phenomena may appear, depending on the boundary conditions. The situation is sketched in Figure~\ref{fig-semiclassical}, where states are represented in a semi-classical manner, and described in detail in the following.
\begin{itemize}
	\item In the Dirichlet case, the direction of propagation depends exclusively on the sign of $\kad/\Bad$, and hence on the sign of $k$; since  for a sufficiently high magnetic field the eigenfunctions with $|\kad/\Bad|\gtrsim 1/2$ are localized on a neighbourhood of $y=W\kad/\Bad$, edge states on the upper part of the strip propagate to the right while those on the bottom one move to the left; as such, we can associate to them a negative (\ie orbital clockwise) \emph{chirality}~\cite{Yo02, AANS98}. Contrarily, bulk states have a positive chirality, as one can deduce by evaluating their local velocity~\eref{eq-localvel}. Therefore, we can conclude that Dirichlet conditions exactly respect the classical bulk-edge dichotomy, \ie bulk and edge states have always opposite chiralities.
	\item The classic dichotomy is \emph{not}, instead, respected for Neumann and Robin conditions, since the corresponding spectra present a new kind of states whose features are intermediate between the bulk and the edge ones. These states belong to the part of the negative bumps that is closest to the bulk: although they 
	have a well-defined group velocity, their chirality is positive.
	Accordingly, carrying on the idea of classifying bulk and edge states solely by means of their chirality, we will denote such states as \emph{improper} bulk states.
\end{itemize}
\begin{figure}[tb]
	\centering
	\includegraphics{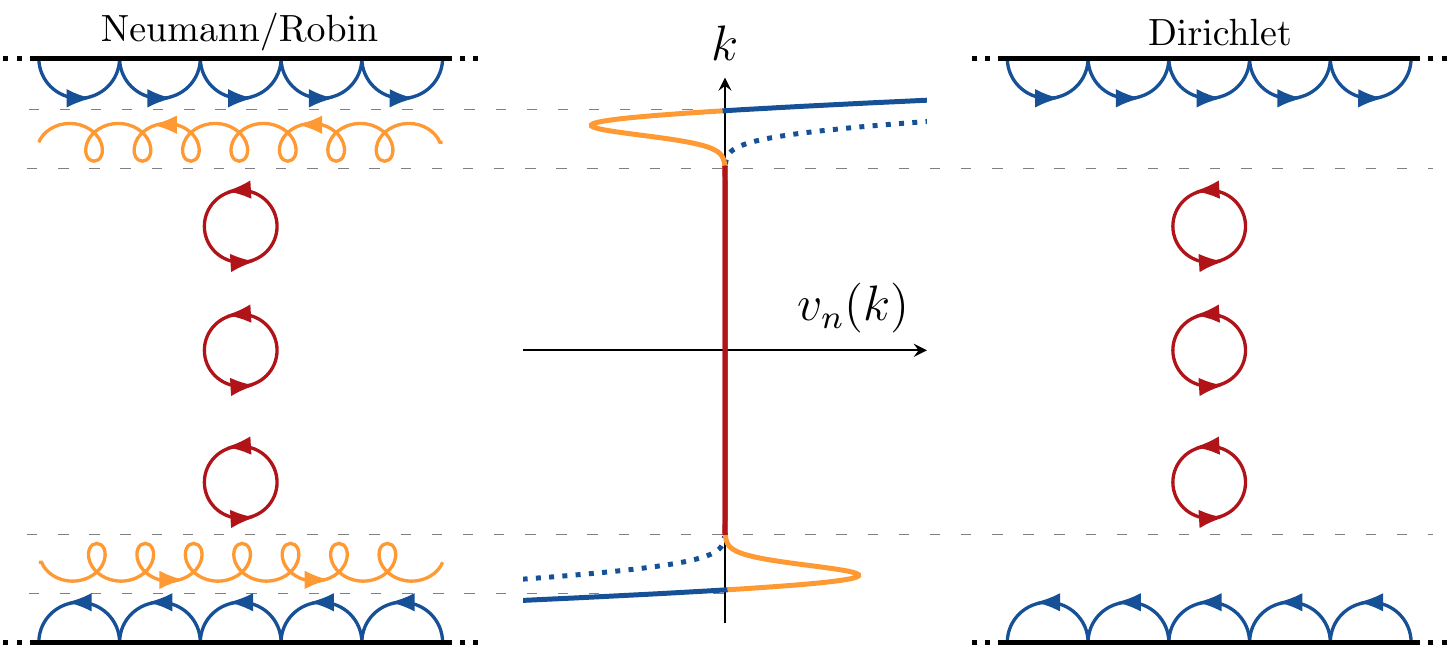}
%\tikzsetnextfilename{fig8}
\caption{Left and right: semi-classical representation of some eigenfunctions; orbit of different colors depict different kind of states: \textcolor{Blu}{\textbf{edge}}, \textcolor{orange!80}{\textbf{improper bulk}} and \textcolor{darkred}{\textbf{bulk}} (see the text). Center: qualitative plot of the group velocity $v_n(k)$ associated with Dirichlet (dotted line) and Neumann/Robin (solid line) boundary conditions.}
\label{fig-semiclassical}
\end{figure}
Neumann and Robin spectra have another interesting feature: they display a clear separation between bulk states (both proper and improper) and edge ones, such a distinction being somewhat arbitrary for the Dirichlet spectrum. Setting $\mathcal{H}=\Leb^2(\Os)$, let us define the $n$-th band space $\mathcal{H}_n$ as the $H$-invariant subspace of $\mathcal{H}$ containing all the (generalized) eigenstates associated with the $n$-th energy band $E_n(k)$. For each energy band we further denote with $k_n$ the (positive) wave-number associated with the minimum of the  negative bump present in the Neumann and Robin spectra. In this way, by defining $\mathcal{H}_{\mathrm{b},n}$ and $\mathcal{H}_{\mathrm{e},n}$ as the subspaces of $\mathcal{H}_{n}$ which are respectively invariant under the Hamiltonians
\begin{equation}
	H_{\mathrm{b},n}\equiv \int_{|k|\le k_n}^{\oplus} h(k) \dd{k} \qquad\text{and}\qquad H_{\mathrm{e},n}\equiv \int_{|k|>k_n}^{\oplus} h(k)\dd{k}\,,
\end{equation}
(with the appropriate boundary conditions) and by setting
\begin{equation}
	\mathcal{H}\sub{b}\equiv \bigoplus_{n=0}^{+\infty}\mathcal{H}_{\mathrm{b},n} \qquad\text{and}\qquad \mathcal{H}\sub{e}\equiv \bigoplus_{n=0}^{+\infty}\mathcal{H}_{\mathrm{e},n}\,,
\end{equation}
we can sharply distinguish the bulk states from the edge ones, that is:
\begin{equation}
	\mathcal{H}=\mathcal{H}\sub{b}\oplus \mathcal{H}\sub{e}\,.
\end{equation}
Consequently, in each subspace the wavefunctions have the same chirality, which is positive for the bulk space $\mathcal{H}\sub{b}$ and negative for the edge space $\mathcal{H}\sub{e}$. Note that a similar decomposition has been proposed also for the Dirichlet extension~\cite{DeBP99}, which however suffers  of the aforementioned problems. See also~\cite{AANS98}, where a clear splitting is induced by introducing  a chiral boundary condition.

Before moving on to the analysis of the Hall conductivity in the next subsection, let us briefly discuss the effects of the transversal electric field $\Ef$ on the spectra and on the eigenstates propagation. As shown in the right panels of Figure~\ref{fig-velocity}, the electric field breaks the $k$-parity of the spectra, so that now the two edges behave differently: in particular, for $\Ef>0$, the energy of the states at the bottom of the strip is lowered, while that of the states at the top of the strip is raised. Furthermore, since the bulk dispersion energy is no longer flat, all bulk states acquire a (positive) drift velocity, as in the classical case. This drift velocity is acquired also by the edge states, and it is added to their intrinsic velocity $v_n(k; B, \mathcal{E}=0)$. As a matter of fact, in this case the group velocity $v_n(k)$ is no longer an odd function, and the net current acquires a non-vanishing value.

\begin{figure}[tb]
%\tikzsetnextfilename{fig9}
\centering
\begin{tikzpicture}
\node at (-7,3.1) {\textbf{(a)}};
\node at (0,0) {\includegraphics{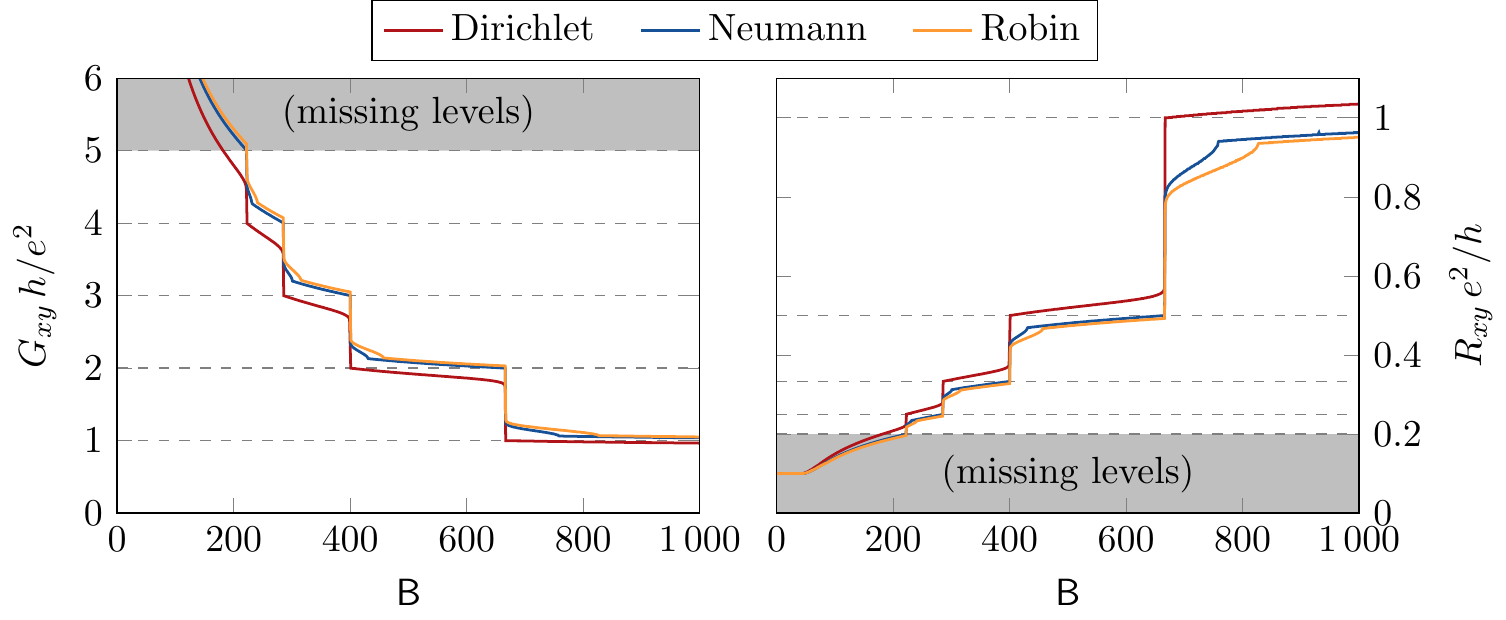}};

\node at (-7,-4) {\textbf{(b)}};
\node at (0,-8.5) {\includegraphics{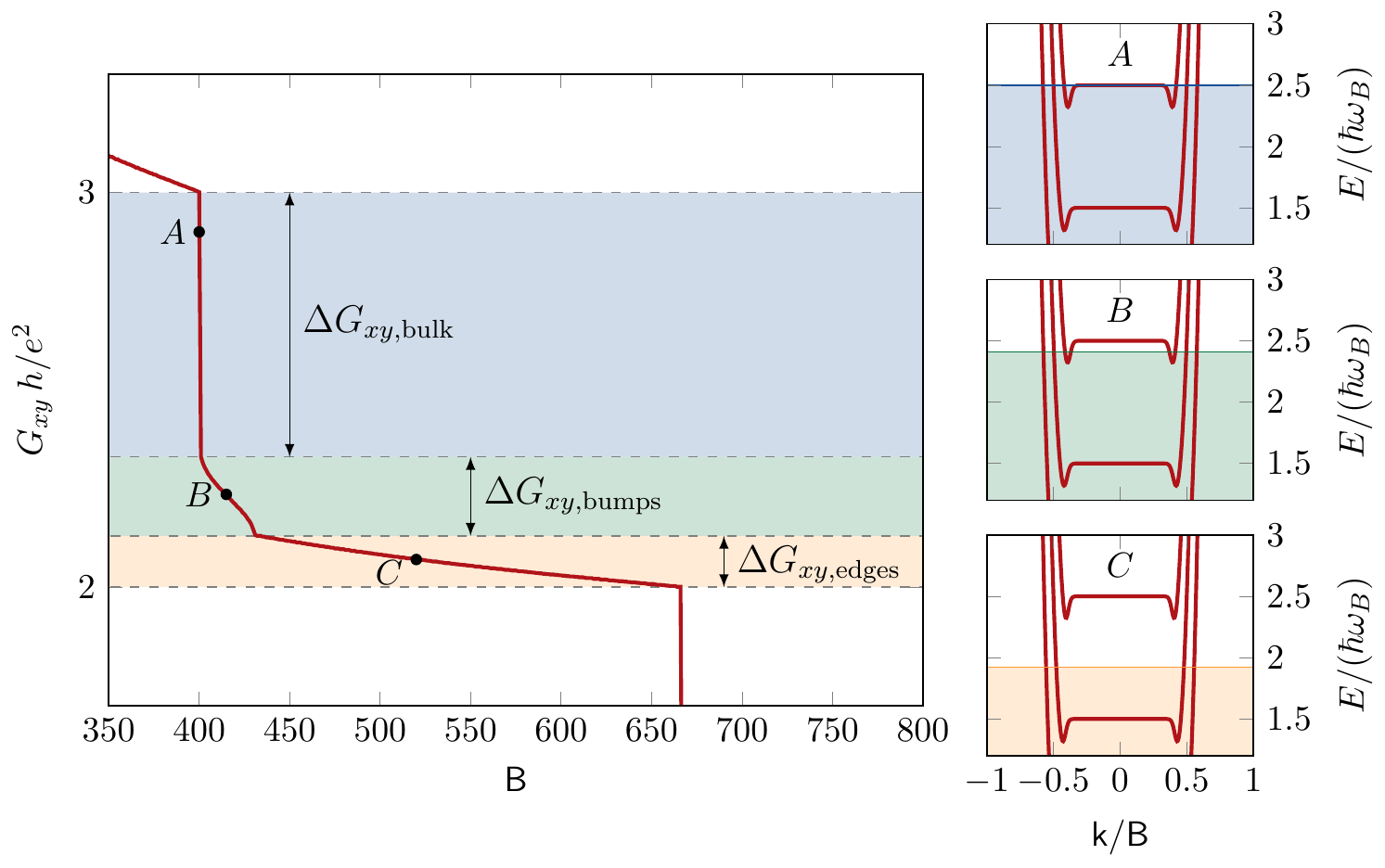}};
\end{tikzpicture}
\caption{(a) Effect of different boundary conditions on the Hall conductance $G_{xy}$ (left) and on the Hall resistance $R_{xy}$ (right), both plotted as a function of the rescaled magnetic field $\Bad$ and setting $\epsilon\sub{F}=2000$ and $\Ef=0$. (b) Left: different contributions to the the Hall conductance $G_{xy}$ for Neumann boundary conditions. Right: dispersion diagrams associated with the points $A,B,C$ of the left panel, where the shaded regions correspond to energies below  $E\sub{F}$.}
	\label{fig-GR}
\end{figure}

%\begin{figure}[tb]
%\tikzsetnextfilename{fig9}
%\centering
%\includegraphics{figures/fig11.pdf}
%\caption{Left: different contributions to the the Hall conductance $G_{xy}$ for Neumann boundary conditions, with $\epsilon\sub{F}=2000$ and $\Ef=0$. Right: dispersion diagrams at the points $A,B,C$ of the left figure, where the shaded region corresponds to $E\le E\sub{F}$.}
%	\label{fig-sigma}
%\end{figure}

\subsection{Boundary effects for the Hall conductivity}\label{sec-boundaryeff}
In the left panel of Figure~\ref{fig-GR}~(a) we plotted the Hall conductance $G_{xy}$ as a function of the magnetic field and for different boundary conditions. For completeness, in the right panel we also show the same plot for the Hall resistance $R_{xy}\equiv 1/G_{xy}$. Note that in each case we set $\Ef=0$. In order to perform the numerical calculation, we truncated the sum appearing in the expression~\eref{eq-Gxy} to the first five energy bands: as highlighted in the figure, this unavoidable approximation only affects the low-field behavior of $G_{xy}$, the latter being in turn associated with the classical regime. 

As expected, for sufficiently high values of the magnetic field the conductance decreases, as $B$ increases, in quantized steps: this behavior appears to be nearly independent of the boundary conditions, since it is related to the bulk spectrum. However, as we switch from Dirichlet to Neumann or Robin conditions, a finer structure arise in each transition region between two plateaux. As a matter of fact, this transition region can indeed be controlled by suitably tuning the boundary conditions. In particular, in the case of Neumann boundary conditions the different spectral contributions to the conductance (namely $\Delta G_{xy,\textup{bulk}}$, $\Delta G_{xy,\textup{bumps}}$ and $\Delta G_{xy,\textup{edges}}$) are emphasized in Figure~\ref{fig-GR}~(b).

The quantization of $G_{xy}$, \ie its bulk structure, is particularly manifest in the phase diagram of Figure~\ref{fig-GBE} (top row), where the conductance is plotted as a function of both the Fermi energy $E\sub{F}$ and the magnetic field $B$, but setting again $\Ef=0$. Conversely, in these plots the edge-structure of $G_{xy}$ is barely noticeable, being a finer effect of lower magnitude. In order to highlight this finer structure, in the bottom row of the same figure we have plotted another phase diagram associated with the  fractional part of $G_{xy}$, that is the quantity $G_{xy}-\lfloor G_{xy} \rfloor$, where $\lfloor x \rfloor$ denotes the floor of $x$, that is the largest integer less or equal to $x$. In this case the difference between Dirichlet and Neumann boundary conditions is much more evident, the latter showing wider transition regions.

\begin{figure}[tbp]
%\tikzsetnextfilename{fig10}
	\centering
%\tikzset{png export}	
	\includegraphics[width=0.9\textwidth]{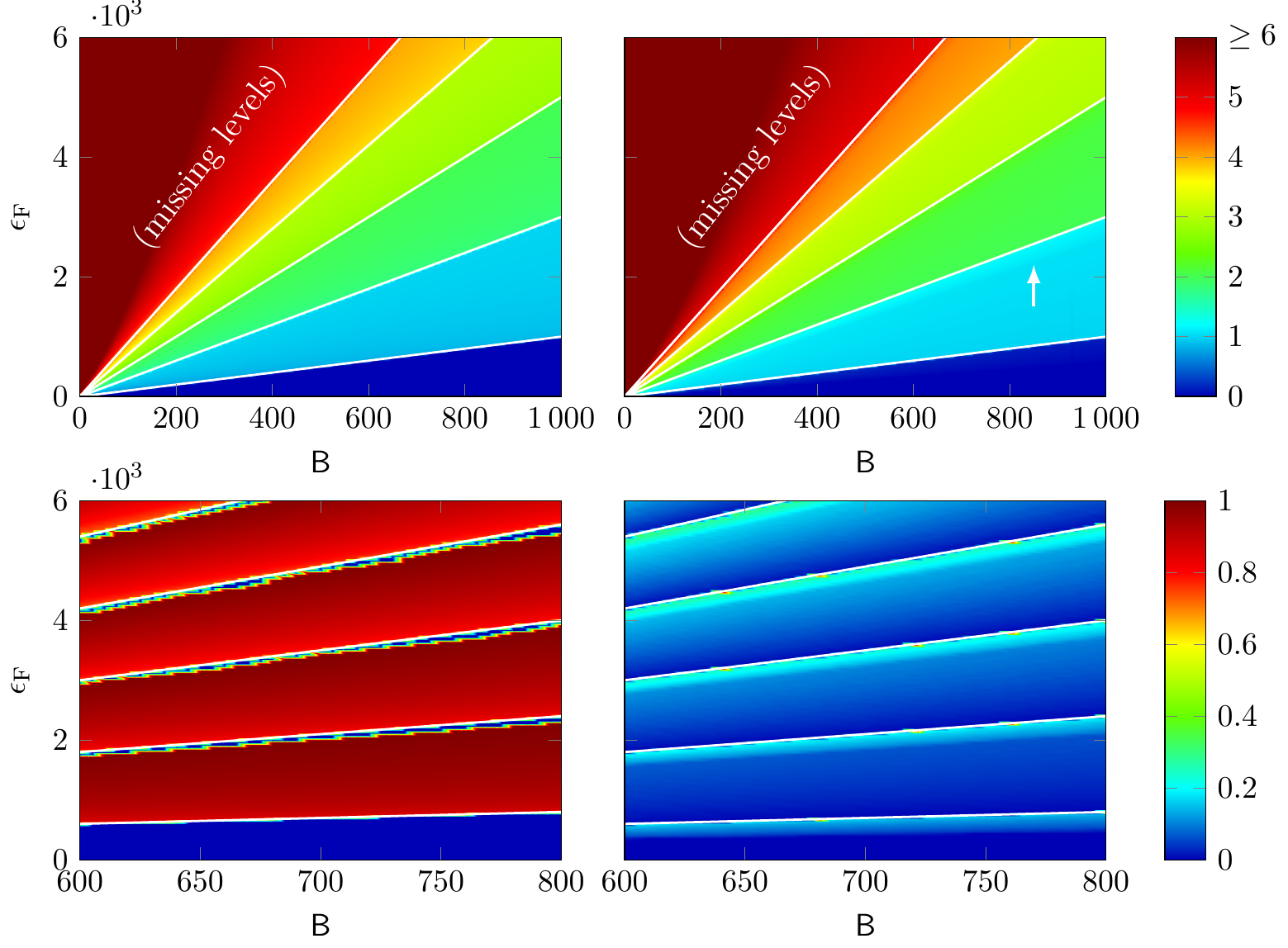}

\caption{Phase diagram of the Hall conductance $G_{xy}$ in units of $e^2/h$ (top row) and of its fractional part $G_{xy}-\lfloor G_{xy} \rfloor$ (bottom row) in the $\epsilon\sub{F}\Bad$-plane and for Dirichlet (left column) and Neumann (right column) boundary conditions; the arrow highlights the fine structure associated with Neumann boundary conditions, whereas the solid white lines represent the Landau levels $\epsilon_n^{\textup{HO}}=(2n+1)\Bad$.}

\label{fig-GBE}
\end{figure}

To conclude the numerical analysis, in Figure~\ref{fig-GWE} we plotted the Hall conductance as a function of the magnetic field, fixing Dirichlet boundary conditions but varying respectively the strip width $W$, with respect to a reference width given by $W_0\equiv [\hbar^2 \epsilon_{\text{F}}/(2mE_{\text{F}})]^{\frac{1}{2}}$, and the electric field $\Ef$. 
The left panel of the figure shows how the finite (transversal) size of the Hall system affects the quantization of the conductance: as the width of the strip increases, the slope of each nearly-flat plateau decreases, ultimately becoming flat in the limit $W\to\infty$ when the strip expands to the whole plane. On the other side, remarkably, the quantization of $G_{xy}$ is almost lost when $W/W_0$ is sufficiently small. The right panel of the figure shows another interesting phenomenon, that is the breakdown of the QHE at large values of the electric field: when $\Ef$ increases, indeed, the width of each plateau decreases, ultimately recovering the classical Hall regime in which $G_{xy}$ is inversely proportional to $B$.

\begin{figure}[tbp]
%\tikzsetnextfilename{fig11}
\centering
\includegraphics{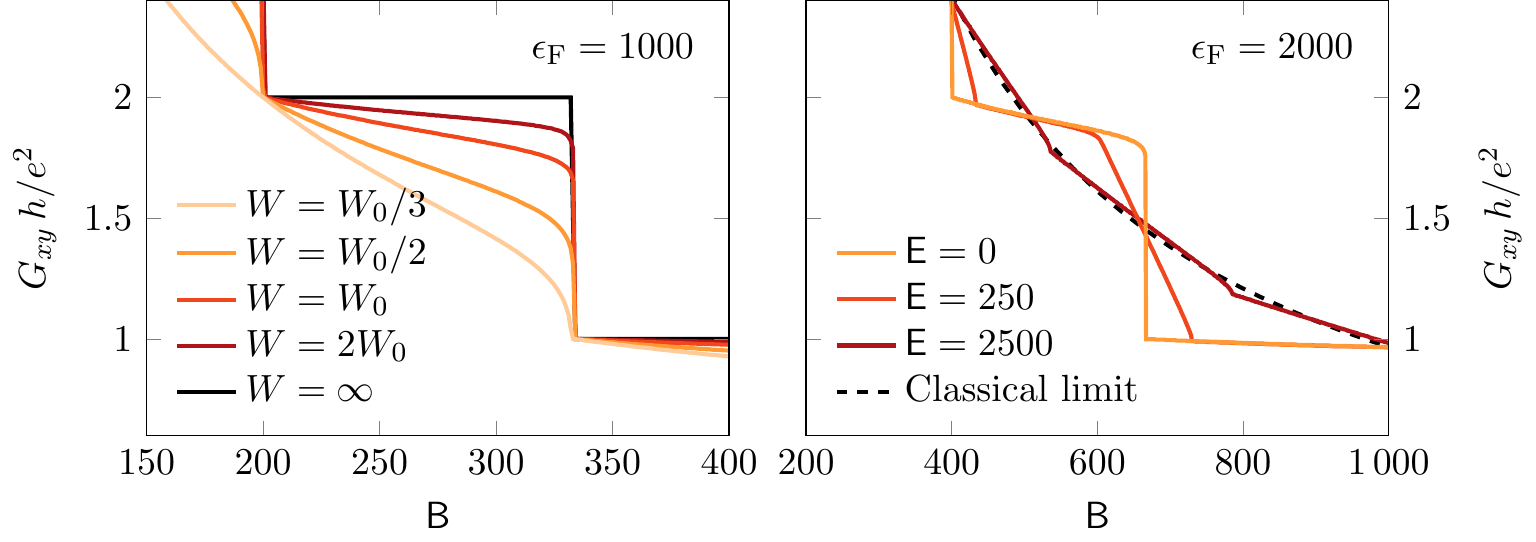}

\caption{Finite-size (left) and electric (right) effects for the Hall conductance $G_{xy}$; the reference width in the left figure is $W_0=[\hbar^2 \epsilon_{\text{F}}/(2mE_{\text{F}})]^{\frac{1}{2}}$, whereas the plane limit ($W=\infty$) has been adapted from~\cite{Kr04}.
}
\label{fig-GWE}
\end{figure}

\section{Conclusions}
In this work we formulated a self-consistent model of the integer QHE, using boundary conditions to investigate finite-size effects associated with the Hall conductivity. By assuming the invariance with respect to longitudinal translations, we were able to characterize the general spectral properties of the system. Then we focused  on the case of (fibered) Robin boundary conditions, which have been identified for physical reasons, \ie locality and translational invariance, and turn out to keep the problem tractable. By determining the spectrum (and the related velocity eigenvalues) corresponding to the selected boundary conditions, we have been able to predict a new kind of states with no classical analogues. The latter have a finite propagation velocity as classical edge states, but their chirality is the same of classical bulk states. Moreover, boundary conditions turned out to add a finer structure to the quantization pattern of the Hall conductivity: the integer plateau are substantially preserved, but a transition region between two consecutive plateaux appears, smoothly controlled by the boundary conditions. Finally, since we derived a formula for the Hall conductivity which depends exactly on the applied electric field, we also predicted the breakdown of the QHE. 

This work could be extended in different directions, which we may consider in the future. The first one regards the weakening of our assumptions leading to the fibered Robin conditions, and in particular the assumption of $k$-independent boundary conditions. In this more general setting, one should \eg consider a modified Hellmann-Feynman formula which keeps trace of the $k$-depending domain~\cite{EFC}. Another direction involves instead a more detailed physical characterization of the Robin boundary conditions~\cite{BeWa10}, which in turn may lead to their engineering in a real Hall device and to practical applications of this work. Besides, our analysis can potentially be extended to relativistic models related to the Dirac equation, which are usually employed in the description of graphene and related materials~\cite{Aso19}.

\section*{Acknowledgments}
We thank Giovanni Gramegna for carefully reading and revising an early version of the manuscript. 
G.A. thanks the Department of Theoretical Physics of the University of Zaragoza for its hospitality during the preparation of this work. 
This work was partially supported by Istituto Nazionale di Fisica Nucleare (INFN) through the project “QUANTUM” and the Italian National Group of Mathematical Physics (GNFM-INdAM). P.F. and D.L. acknowledge support by MIUR via PRIN 2017 (Progetto di Ricerca di Interesse Nazionale), project QUSHIP (2017SRNBRK). The work of M.A. and Y.M. is partially supported by Spanish MINECO/FEDER grant PGC2018-095328-B-I00 and DGA-FSE
grant 2020-E21-17R.

\appendix

\section{The Weber differential equation}\label{sec-weber}
In this appendix we report some standard choices of pair of independent solutions for the Weber differential equation and we explicitly show its relation with the fiber operator $h(k)$.

\subsection{Independent pairs of solutions}
The Weber equation is a second order differential equation given as follows:
\begin{equation}\label{eq-wde}
y''(x)-\Biggl(a+\frac{x^2}{4}\Biggl)y(x)=0\,,
\end{equation}
$x$ being a real variable; its solutions are known as \emph{parabolic cylinder} functions or \emph{Weber-Hermite} functions~\cite{Ba53}. An alternative expression which often appears in the literature is immediately recovered by setting $\nu=-a-\case{1}{2}$:
\begin{equation}\label{eq-wde2}
y''(x)+\Biggl(\nu+\frac{1}{2}-\frac{x^2}{4}\Biggl)y(x)=0\,.
\end{equation}
A simple pair of linearly independent solutions of Eq.~\eref{eq-wde} having definite parity (respectively even and odd) is the following:
%\begin{subequations}\label{eq-webersol}
	\begin{eqnarray}
	y_1(a,x)&=\e^{-\frac{1}{4}x^2}\,{}_1 F_1\Biggl(\frac{a}{2}+\frac{1}{4}, \frac{1}{2}; \frac{x^2}{2}\Biggl)\qquad&\textnormal{(even)} \label{eq-webersol1}\\
	y_2(a,x)&=x\e^{-\frac{1}{4}x^2}\,{}_1 F_1\Biggl(\frac{a}{2}+\frac{3}{4}, \frac{3}{2}; \frac{x^2}{2}\Biggl)&\textnormal{(odd)}\label{eq-webersol2}
	\end{eqnarray}
%\end{subequations}
where ${}_1F_1(a,c; x)$ is the \emph{confluent hypergeometric} function, defined as
\begin{equation}
{}_1F_1(a,c; x)\equiv\sum_{n=0}^{+\infty} \frac{\Upgamma(a+n)}{\Upgamma(a)}\frac{\Upgamma(c)}{\Upgamma(c+n)}\frac{x^n}{n!}\,,
\end{equation}
$\Upgamma(x)$ being Euler's Gamma function. However, the solutions~\eref{eq-webersol1}--\eref{eq-webersol2} display similar asymptotic behaviors at infinity; this phenomenon may complicate numerical computations for large arguments. Another independent pair of solutions is
%\begin{subequations}
	\begin{eqnarray}
	U(a, x)&=\frac{1}{ \sqrt{2^{b} \pi} }\biggl[\cos(b\pi)\Upgamma(\case{1}{2}-b) y_1 -\sqrt{2}\sin(b\pi)\Upgamma(1-b)y_2 
	\biggr]\\
	V(a, x)&=\frac{1}{\sqrt{2^{b} \pi}}
	\Biggl[\sin(b\pi)\frac{\Upgamma(\case{1}{2}-b)}{\Upgamma(\case{1}{2}-a)} y_1 +\sqrt{2}\cos(b\pi)\frac{\Upgamma(1-b)}{\Upgamma(\case{1}{2}-a)}y_2 
	\Biggl]
	\end{eqnarray}
%\end{subequations}
where for compactness we have set $b=\case{a}{2}+\case{1}{4}$, see~\cite{AS64}; their asymptotic behavior for $x\gg|a|$ is given by
\begin{equation}
U(a,x)\sim\e^{-\frac{1}{4}x^2}x^{-a-\frac{1}{2}}\qquad\textnormal{and}\qquad 
V(a,x)\sim \bigl(\case{2}{\pi}\bigr)^{\frac{1}{2}}\e^{\frac{1}{4}x^2}x^{a-\frac{1}{2}}\,,
\end{equation}
thus avoiding the aforementioned numerical problem. Note that, as long as $a$ is real, all the above solutions are real; in particular if $\nu=-a-\case{1}{2}$ is a positive integer $n$ then $U(a, x)$ and $V(a, x)$ can be expressed in terms of the \emph{Hermite polynomials} $\Herm_n(x)$:
\begin{eqnarray}
U(-n-\case{1}{2}, x)&=\frac{\e^{-\frac{1}{4}}x^2}{\sqrt{2^n}} \Herm_n(x/\!\sqrt{2})\,,\\
V(n+\case{1}{2}, x)&=\frac{\e^{\frac{1}{4}}x^2}{\sqrt{2^n}} (-\iu)^n \Herm_n(\iu x/\!\sqrt{2})\,.
\end{eqnarray}
The Weber equation in the alternative form~\eref{eq-wde2} admits 
\begin{equation}
D_\nu(x)=U\bigl(-\nu-\case{1}{2}, x\bigr)\,,
\end{equation}
the so-called \emph{Whittaker} function, as a solution. There are two main possibilities to construct a pair of independent solutions starting from $D_\nu(x)$~\cite{Ba53}. The first pair is $D_\nu(x)$ and $D_{-\nu-1}(\iu x)$: they are always independent, but since $D_\nu(x)$ is real, $D_{-\nu-1}(\iu x)$ in general is complex. The other pair is given by $D_\nu(x)$ and $D_\nu(-x)$: these solutions are always real, but are linearly independent only if $\nu$ is not an integer.

\subsection{Weber equation from the eigenvalue equation}
As we have shown in Section~\ref{sec-model}, in the Landau gauge the Hall Hamiltonian $H$ defined on the strip $\Os$ can be decomposed as a direct integral, each fiber operator $h(k)$ representing an effective one-dimensional Hamiltonian. Since under our assumptions the spectrum of $h(k; B, \Ef)$ can be readily obtained from that of $h(k; B, 0)$ by using Eq.~\eref{eq-EnElectic}, we only need to solve the eigenvalue equation~\eref{eq-eigeneq} in the non-electric case, namely:
\begin{equation}\label{eq-eigprob1}
\bigl[h(k; B, 0)-E(k; B, 0)\bigr]\psi(k; y)=0\,.
\end{equation}
Multiplying by $-2mW^2/\hbar^2$ and scaling $y$ to $\yy\equiv y/W$, we obtain:
\begin{equation}\label{eq-eigprob2}
\Biggl[\frac{\dd^2}{\dd \yy^2}-\Bad^2\Biggl(\yy-\frac{\kad}{\Bad}\Biggr)^2+\epsilon\Biggr]\varphi(\yy)=0\,,
\end{equation}
where we have introduced the dimensionless parameters
\begin{equation}
\Bad\equiv\frac{eBW^2}{\hbar}\,,\qquad k\equiv kW\qquad\textnormal{and}\qquad \epsilon\equiv\frac{2mEW^2}{\hbar^2}\,, 
\end{equation}
and we have set $\varphi(\yy)\equiv\psi(W\yy)$ (omitting the dependence on $k$). Notice that the rescaled wavefunction $\varphi(\xi)$ is now confined in the unitary interval $(-1/2, 1/2)$. As a last step, we further divide Eq.~\eref{eq-eigprob2} by $2\Bad$ and rescale $\yy$ to $\tilde\yy\equiv \sqrt{2\Bad}(\yy-\kad/\Bad)$, thus obtaining
\begin{equation}%\label{eq-eigprob3}
\Biggl[\frac{\dd^2}{\dd \tilde\yy^2}-\frac{{\tilde\yy}^2}{4}+\frac{\epsilon}{2\Bad}\Biggr]\tilde\varphi(\tilde\yy)=0\,,\qquad\tilde\varphi(\tilde\yy)\equiv\varphi\Biggl(\frac{\tilde\yy}{\sqrt{2\Bad}}+\frac{\kad}{\Bad}\Biggr)\,,
\end{equation}
which is the Weber equation~\eref{eq-wde} with parameter $a=-\epsilon/(2\Bad)$. At this point, denoting with $u^1(a, x)$ and $u^2(a, x)$ a suitable pair of independent solutions of the Weber equation, the general solution is given by
\begin{equation}
\tilde\varphi(\tilde\yy)=c_1 u^1(a, \tilde\yy)+c_2 u^2(a, \tilde\yy)\,,
\end{equation}
$c_1$ and $c_2$ being two complex constants. The original non-rescaled wavefunction $\psi(k; y)$ which solves Eq.~\eref{eq-eigprob1} accordingly reads
\begin{eqnarray}
\psi(k; y)&=\sum_{i=1}^{2} c_i u^i\Biggl(-\frac{\epsilon}{2\Bad}, \sqrt{2\Bad}\Biggl(\frac{y}{W}-\frac{\kad}{\Bad}\Biggr)\Biggr)\nonumber\\
&=\sum_{i=1}^{2} c_i u^i\Biggl(-\frac{E}{\hbar\omega_B}, \sqrt{2} \frac{y-kl_B^2}{l_B}\Biggr)\,.
\end{eqnarray}
Note that the above solution is actually independent of $W$: such dependence is indeed introduced only when one imposes the boundary conditions.

\section{Regularity of the fiber resolvent}\label{sec-resolvent}
In this appendix we prove a regularity result involving the \emph{resolvent} of $h_{U}(k)$. In particular, given a family $\{U(k)\}_{k\in\mathbb{R}}$ of unitary matrices such that the function
\begin{equation}
		k\in\mathbb{R}\mapsto U(k)\in\UU(2)
\end{equation}
is measurable, we will prove that the function
\begin{equation}
k\in\R \mapsto (h_{U}(k)-z)^{-1}\in\mathcal{B}\left(\Leb^2(I_W)\right)
\end{equation}
is strongly measurable (and thus also weakly measurable) for all $z\in\mathbb{C}\setminus\mathbb{R}$.

We will achieve this result by using the following \emph{Krein formula}~\cite{AlbPan05}
\begin{equation}\label{eq-krein}
(h_{U}(k)-z)^{-1}=(h_{I}(k)-z\bigr)^{-1}-K(k; z)\,,
\end{equation}
where $K(k; z)$ is a bounded operator to be defined in the following; this equation, which holds for each $k\in\R$ and $z\in\C\setminus\R$, relates the resolvent of the (arbitrary) self-adjoint extension $h_{U}(k)$ with the resolvent of the Dirichlet extension $h_{I}(k)$.
\subsection{Krein formula}
For each $\psi\in H^2(I_W)$, let us define the trace operators
\begin{equation}
\Gamma_1\psi\equiv
\left(\begin{array}{@{}c@{}}
\psi(-W/2)\\ \psi(W/2)
\end{array}\right)
\qquad\textnormal{and}\qquad \Gamma_2\psi \equiv l_0
\left(\begin{array}{@{}c@{}}
-\psi'(-W/2)\\ \psi'(W/2)
\end{array}\right)\,,
\end{equation}
respectively corresponding to the quantities $\Psi$ and $\Psi'$ of Eq.~\eref{eq-boundarydata}, and the related operators
\begin{equation}
\gamma(k;z)\colon \C^2\to  H^2(I_W)\,,\qquad \gamma(k;z)\equiv \left(\Gamma_1|_{\ker(h^{\dagger}(k)-z)}\right)^{-1}
\end{equation}
and
\begin{equation}
Q(k; z)\colon \C^2\to \C^2\,,\qquad Q(k; z)\equiv \Gamma_2\gamma(k;z)\,.
\end{equation}
Then, given a matrix $U(k)\in\UU(2)$ for each $k\in\R$, we also define the related matrices 
\begin{equation}
A(k)\equiv \iu (I+U(k))\qquad\textnormal{and}\qquad B(k)\equiv I-U(k)\,.
\end{equation}
At this point, the operator $K(k;z)$ of Eq.~\eref{eq-krein} can be expressed as
\begin{equation}
K(k; z)\equiv \gamma(k;z)\bigl(B(k)Q(k; z)-A(k)\bigr)^{-1}B(k)\gamma^{\dagger}(z^*)\,,
\end{equation}
see Eq.~(8) of~\cite{AlbPan05}.

\subsection{Resolvent regularity}
As follows from the Krein formula, the regularity of $k\mapsto (h_{U}(k)-z)^{-1}$ depends on the regularity of both 
\begin{equation}
k\mapsto (h_{I}(k)-z)^{-1}\qquad\textnormal{and}\qquad k\mapsto K(k; z)\,.
\end{equation}
For what concerns the Dirichlet resolvent, accordingly to Eq.~\eref{eq-bc-kdep} the operator $h_{I}(k)$ is defined, for each $k\in \R$, on the common domain (\emph{core})
\begin{equation}
\mathfrak{D}_I \equiv \{\psi\in  H^2(I_W) : \Psi= 0\}\,.
\end{equation}
Therefore, since the potential $V_k(y)$ appearing in the expression of $h(k)$ is a continuous function of $k$, the strong continuity (and hence measurability) of the resolvent $(h_{I}(k)-z)^{-1}$ is a straightforward corollary of Lemma~6.36 of~\cite{Teschl}. Remarkably, this result does not depend on the regularity of the function $k \mapsto U(k) $.

Moving to the operator $K(k;z)$, we preliminary observe that
\begin{equation}
\psi\in \ker(h^{\dagger}(k)-z)\Longleftrightarrow \psi(y)=c_1 u^1_{z}(k; y) + c_2 u^2_{z}(k; y)\,,
\end{equation}
where  $c_1,c_2\in\C$ and where $u^1_{z}(k;y)$ and $u^2_{z}(k;y)$, being two suitable independent solutions of the Weber equation, are regular functions of $k$ (see the previous appendix). Therefore, by computing the action of the operators $\gamma(k;z)$ and $Q(k;z)$, we find that they are regular functions of $k$, again independently of the function $k\mapsto U(k)$.

The only terms remaining are thus the matrices $A(k)$ and $B(k)$, which however trivially inherit the regularity of $k\mapsto U(k)$; this completes the claim.

%%%%%%%%%%%%%%%%%%%%%%%%%%%%%%%%%%%%%%%%%%%%%%%

\section{Translational symmetry and fibered boundary conditions}\label{sec-symmetry}
In this appendix we show that fibered boundary conditions for the Hall Hamiltonian $H$ emerge as a natural consequence of a symmetry request: indeed, a realization of $H$ turns out to be decomposable as a direct integral of self-adjoint fibers if and only if it is invariant under longitudinal translations, in a sense that will be shortly clarified.

\subsection{Commuting self-adjoint operators}

Let us consider two (possibly) unbounded operators $A,B$ on a Hilbert space $\mathcal{H}$. Since we are dealing with unbounded operators, the equation
\begin{equation}
	[A,B]=0
\end{equation}
is generally \textit{ill-defined}. Luckily, for self-adjoint operators, a proper notion of commutativity with all desired implications does exist (see Chapter~VIII.5 of~\cite{RS1}): two self-adjoint unbounded operators $A$, $B$ are said to commute whenever their \emph{PVMs} commute, and thus, for all bounded \emph{Borel} functions $f,g\colon \mathbb{R}\rightarrow\mathbb{C}$, we have
\begin{equation}
\bigl[f(A),g(B)\bigr]=0\,.
\end{equation}
Let us recall here a useful property:
\begin{prop}[\cite{RS1}, Theorem VIII.13]\label{prop:comm0}
	Let $A, B$ be two self-adjoint unbounded operators. The following conditions are equivalent:
	\begin{enumerate}
		\item $A$ and $B$ commute;
		\item for all $s,t\in\mathbb{R}$,
		\begin{equation}\label{eq:comm}
		\bigl[\e^{-\iu sA},\e^{-\iu tB}\bigr]=0\,;
		\end{equation}
		\item for all $z,w\in\mathbb{C}$ with $\mathrm{Im}\, z,\mathrm{Im}\, w\neq0$,
		\begin{equation}\label{eq:comm2}
		\bigl[(A-z)^{-1},(B-w)^{-1}\bigr]=0\,.
		\end{equation}
	\end{enumerate}
\end{prop}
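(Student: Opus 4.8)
The plan is to prove the cycle of implications (i) $\Rightarrow$ (ii) $\Rightarrow$ (iii) $\Rightarrow$ (i), recalling that by definition $A$ and $B$ commute precisely when their bounded Borel functional calculi commute, i.e.\ when $[f(A),g(B)]=0$ for all bounded Borel $f,g\colon\R\to\C$. With this reading, the first implication (i) $\Rightarrow$ (ii) is immediate: since $\ee^{-\iu sA}=f_s(A)$ with $f_s(x)=\ee^{-\iu sx}$ bounded and Borel, and likewise $\ee^{-\iu tB}=g_t(B)$, the relation \eref{eq:comm} is just a special instance of the hypothesis.

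For (ii) $\Rightarrow$ (iii) I would invoke Stone's theorem to read $\{\ee^{-\iu sA}\}_{s\in\R}$ as the strongly continuous unitary group generated by $A$, and then represent the resolvent in each open half-plane as a strongly convergent integral of this group; for $\mathrm{Im}\, z>0$ one has
\begin{equation}
(A-z)^{-1}=\iu\int_0^{\infty}\ee^{\iu sz}\,\ee^{-\iu sA}\dd s\,,
\end{equation}
with the analogous representation over $(-\infty,0]$ when $\mathrm{Im}\, z<0$. Because commuting with a fixed bounded operator is preserved under strong limits, and hence under such integrals, each $(A-z)^{-1}$ commutes with every $\ee^{-\iu tB}$; applying the same representation to $B$ then places $(A-z)^{-1}$ in the commutant of every $(B-w)^{-1}$, which is exactly \eref{eq:comm2}.

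The hard part will be the closing implication (iii) $\Rightarrow$ (i), where the full functional calculi must be reconstructed from the resolvents alone. Here I would use Stone's formula, which recovers each spectral projection $E_A(\Omega)$ of $A$ as a strong limit of line integrals of the difference $(A-\lambda-\iu\varepsilon)^{-1}-(A-\lambda+\iu\varepsilon)^{-1}$ as $\varepsilon\to 0^+$. Since every such resolvent of $A$ commutes with every $(B-w)^{-1}$ by hypothesis, and strong limits again preserve commutation with a fixed bounded operator, each $E_A(\Omega)$ commutes with all resolvents of $B$; applying Stone's formula a second time, now to $B$, shows that $E_A(\Omega)$ commutes with every $E_B(\Omega')$, which is precisely condition (i). Equivalently, one may argue by bicommutants: the resolvents of $A$ generate the same von Neumann algebra as the projection-valued measure $\{E_A(\Omega)\}$, so (iii) places the resolvents of $B$ in the commutant of that algebra and forces $[f(A),g(B)]=0$ for all bounded Borel $f,g$. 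The only delicate point is the mode of convergence in Stone's formula: one must verify that the limit holds in the strong operator topology on the whole Hilbert space, so that commutation with the fixed bounded resolvents of $B$ survives the passage to the limit.
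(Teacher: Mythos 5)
Your proof is correct; note, though, that the paper offers no proof of this statement to compare against — it is quoted directly from Reed and Simon (Theorem~VIII.13 of [RS1]), with the definition of commutativity for unbounded self-adjoint operators (commuting PVMs, equivalently $[f(A),g(B)]=0$ for all bounded Borel $f,g$) exactly as you use it. Your cycle (i)$\Rightarrow$(ii)$\Rightarrow$(iii)$\Rightarrow$(i) is sound: (i)$\Rightarrow$(ii) is indeed a special case of the functional calculus; the Laplace representation $(A-z)^{-1}=\iu\int_0^{\infty}\mathrm{e}^{\iu sz}\,\mathrm{e}^{-\iu sA}\dd{s}$ for $\mathrm{Im}\,z>0$ is valid as a strong improper integral (check: in the spectral representation it yields $1/(\lambda-z)$), and commutation with a fixed bounded operator does survive strong integrals and strong limits, which settles (ii)$\Rightarrow$(iii). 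For the closing implication, the one imprecision is the point you flag yourself: Stone's formula does not directly yield $E_A(\Omega)$ for arbitrary Borel $\Omega$, but only the symmetrized interval projections $\frac{1}{2}\bigl(E_A([a,b])+E_A((a,b))\bigr)$, so an additional monotone-class (strong-limit) step is needed to reach the full PVM; your bicommutant reformulation absorbs this cleanly, since the von Neumann algebra generated by the resolvents of $A$ is strongly closed and, by Stone's formula together with such limits, contains the whole PVM, while hypothesis (iii) places the resolvents (hence the von Neumann algebra, hence the PVM) of $B$ in its commutant — so no essential repair is needed. For comparison, the standard textbook route closes the loop differently, proving (iii)$\Rightarrow$(ii) by reconstructing the unitary group from resolvent powers, $\mathrm{e}^{-\iu tA}\psi=\lim_{n\to\infty}(I+\iu tA/n)^{-n}\psi$, and then recovering (i) from the commuting groups by Fourier-type arguments; your direct (iii)$\Rightarrow$(i) is equally legitimate and arguably better matched to how the result is used here, since Proposition~\ref{prop:comm} and the decomposability argument of the appendix operate precisely at the level of resolvents and bounded Borel functions.
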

For our purposes, it will be useful to state a slightly different equivalence:
\begin{prop}\label{prop:comm}
	$A$ and $B$ commute if and only if, for all bounded Borel functions $f\colon \mathbb{R}\rightarrow\mathbb{C}$,
	\begin{equation}\label{eq:comm3}
	\bigl[f(A),(B+\iu)^{-1}\bigr]=0\,.
	\end{equation}
\end{prop}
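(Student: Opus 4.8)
The plan is to prove the two implications separately, leveraging the characterisation of commutativity already recorded in Proposition~\ref{prop:comm0}. The forward implication is essentially a restatement of the definition: if $A$ and $B$ commute then, by construction, their PVMs commute and hence $[f(A),g(B)]=0$ for all bounded Borel functions $f,g$; it then suffices to choose $g(\lambda)=(\lambda+\iu)^{-1}$, which is bounded (its modulus is at most $1$) and Borel, so that $g(B)=(B+\iu)^{-1}$ and the claimed identity follows at once.

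For the converse I would reduce the statement to the resolvent characterisation of Proposition~\ref{prop:comm0}, \ie to proving that $[(A-z)^{-1},(B-w)^{-1}]=0$ for every $z,w\in\C$ with nonzero imaginary part. Since $\lambda\mapsto(\lambda-z)^{-1}$ is a bounded Borel function whenever $\mathrm{Im}\,z\neq0$, the hypothesis applied with this choice of $f$ yields $[(A-z)^{-1},(B+\iu)^{-1}]=0$ for all such $z$, that is, the desired identity for the single value $w=-\iu$; taking Hilbert-space adjoints and using $B=B^*$, so that $\bigl((B+\iu)^{-1}\bigr)^{*}=(B-\iu)^{-1}$, upgrades this to $w=+\iu$ as well. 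To reach every admissible $w$ I would propagate along the resolvent: the map $w\mapsto(B-w)^{-1}$ is analytic on $\C\setminus\sigma(B)\supseteq\C\setminus\R$, and around any base point $w_0$ it admits the norm-convergent expansion
\begin{equation}
(B-w)^{-1}=\sum_{j=0}^{\infty}(w-w_0)^{j}\,(B-w_0)^{-(j+1)}\,,
\end{equation}
whose radius of convergence is $\|(B-w_0)^{-1}\|^{-1}=\mathrm{dist}\bigl(w_0,\sigma(B)\bigr)\ge|\mathrm{Im}\,w_0|$. Because $(A-z)^{-1}$ commutes with $(B+\iu)^{-1}$, it commutes with every power $(B+\iu)^{-(j+1)}$ and hence with the whole series, so it commutes with $(B-w)^{-1}$ throughout the disc of convergence centred at $w_0=-\iu$; iterating this continuation over a chain of overlapping discs, each contained in the resolvent set $\C\setminus\sigma(B)$, sweeps out the entire open lower half-plane, while the adjoint identity handles the upper one. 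This establishes the resolvent criterion and therefore, by Proposition~\ref{prop:comm0}, the commutativity of $A$ and $B$.

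The delicate point is precisely this propagation step: one must verify that the convergence discs, whose radii coincide with the distance to the spectrum, can be chained so as to sweep out all of $\{\,w:\mathrm{Im}\,w\neq0\,\}$ while preserving the commutation property. An alternative that sidesteps the analytic continuation is to argue at the level of functional calculus: the bounded normal operator $R=(B+\iu)^{-1}$ generates the same von Neumann algebra as $B$ — indeed $g(B)=h(R)$ with $h(\mu)=g(\mu^{-1}-\iu)$ — so that an operator commuting with both $R$ and $R^{*}=(B-\iu)^{-1}$ automatically commutes with every $g(B)$, yielding $[f(A),g(B)]=0$ directly. Either way, the whole content lies in the reverse implication, the only genuine subtlety being the passage from commutation with the \emph{single} resolvent $(B+\iu)^{-1}$ to commutation with the entire functional calculus of $B$.
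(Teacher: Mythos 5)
Your proof is correct and follows essentially the same route as the paper: the forward direction is the trivial specialisation $g(\lambda)=(\lambda+\iu)^{-1}$, and the converse reduces to condition (iii) of Proposition~\ref{prop:comm0} by expanding $(B-w)^{-1}$ as a Neumann series about a base resolvent and propagating in $w$. If anything you are more careful than the paper, whose proof compresses the propagation into the single phrase ``power series in $w$ around $\iu$'': your chaining of convergence discs of radius $\mathrm{dist}\bigl(w_0,\sigma(B)\bigr)$ through the lower half-plane, and the adjoint step $\bigl((B+\iu)^{-1}\bigr)^{*}=(B-\iu)^{-1}$ to reach the upper half-plane (which a single series centred at $-\iu$ cannot do, since its disc of convergence stays below the real axis), supply exactly the details the paper leaves implicit.
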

\begin{proof}
	If $A$ and $B$ commute, all bounded functions of them commute and thus Eq.~\eref{eq:comm3} holds \textit{a fortiori}. Conversely, if Eq.~\eref{eq:comm3} holds, then we also have
	\begin{equation}\label{eq:comm4}
	\bigl[f(A),(B-w)^{-1}\bigr]=0
	\end{equation}
	for all $z$ with $\mathrm{Im}\, z\neq0$, as can be shown by expressing $(B-w)^{-1}$ as a power series in $w$ around $\iu$; since Eq.~\eref{eq:comm4} holds for every bounded Borel function $f$,  Eq.~\eref{eq:comm2} holds as a particular case, and so Proposition~\ref{prop:comm0} implies the claim.
\end{proof}
Obviously, the roles of $A$ and $B$ in the above Proposition are completely interchangeable.

\subsection{Commutation and longitudinal invariance}
Coming back to our original problem, let $H_{\mathfrak{D}}$ be a (not necessarily fibered) self-adjoint extension of the Hall Hamiltonian on some domain $\mathfrak{D}\subset L^2(\mathbb{R})\otimes L^2(I_W)$. Besides, let $p_x$ be the momentum operator on the real line,
\begin{equation}
	p_x=-\iu\hbar\frac{\mathrm{d}}{\mathrm{d}x}\,,
\end{equation}
which is itself a self-adjoint operator if we choose the first Sobolev space $H^1(\mathbb{R})$ as its domain. Recall that $p_x$ is the generator of translations on the real line, that is, for every $a\in\mathbb{R}$ and $\psi\in L^2(\mathbb{R})$, we have that
\begin{equation}
	\left(\e^{-\iu ap_x}\psi\right)(x)=\psi(x-a)\,.
\end{equation}
Consequently, $P_x\equiv p_x\otimes I$ represents the generator of longitudinal translations on the strip $\Os$: for every $a\in\mathbb{R}$ and $\psi\in L^2(\mathbb{R})\otimes L^2(I_W)$,
\begin{equation}
\left(\e^{-\iu aP_x}\psi\right)(x,y)=\psi(x-a,y)\,.
\end{equation}
We shall prove that $H_{\mathfrak{D}}$ and $P_x$ commute, in the sense previously discussed, if and only if $H_{\mathfrak{D}}$ can be written, up to a unitary transformation, as a \textit{decomposable} operator, \ie if it admits a direct integral representation:
\begin{equation}
	H_{\mathfrak{D}}\cong \int_{\mathbb{R}}^{\oplus}h(k)\dd{k}\,,
\end{equation}
with $\{h(k)\}_{k\in\R}$ a certain family of self-adjoint operators on $L^2(I_W)$.
First of all, as discussed in the main text, it will be convenient to switch from the $(x,y)$-representation to the $(k,y)$-representation; by performing a partial Fourier transform, the operator $P_x$ is mapped in the operator $\hat{P}_x$, which simply acts as the multiplication by $k$, namely
\begin{equation}\label{eq:hatp0}
\bigl(\hat{P}_x\psi\bigr)(k,y)=k\,\psi(k,y)\,,
\end{equation}
while $H_{\mathfrak{D}}$ is mapped in the operator $\hat{H}_{\mathfrak{D}}$ with domain $\mathcal{F}_x\mathfrak{D}$, on which it acts as
\begin{equation}
	(\hat{H}_{\mathfrak{D}}\psi)(k,y)=-\frac{\hbar^2}{2m}\frac{\partial^2}{\partial y^2}\psi(k,y)+V_k(y)\psi(k,y)\,,
\end{equation}
where $	V_k(y)\equiv \frac{1}{2}m\omega_B^2(y-kl^2_B)+e\mathcal{E}y$, see Eq.~\eref{eq-actiontildeHBE}. Both operators act on the Hilbert space $L^2(\hat{\mathbb{R}})\otimes L^2(I_W)$, which is naturally isomorphic to the Bochner space $L^2(\hat{\mathbb{R}}; L^2(I_W))$: the function
\begin{equation}
	(k,y)\in\mathbb{R}\times I_W\mapsto \psi(k,y)\in\mathbb{C}
\end{equation}
is indeed naturally associated with the $L^2(I_W)$-valued function
\begin{equation}
	k\in\mathbb{R}\mapsto\psi_k\in L^2(I_W)
\end{equation}
by defining $\psi_k(y)=\psi(k,y)$ for all $k\in\mathbb{R}$ and $y\in I_W$. With this identification, the operator $\hat{P}_x$ is thus decomposable as a direct integral:
\begin{equation}\label{eq:hatp}
	\hat{P}_x=\int_{\mathbb{R}}^\oplus k\dd{k}\,.
\end{equation}

The operator $\hat{H}_{\mathfrak{D}}$ can be interpreted, as well, as an operator on $L^2(\hat{\mathbb{R}}; L^2(I_W))$; in general, however, it is not guaranteed to be decomposable since its domain $\mathcal{F}_x\mathfrak{D}$ (\ie its boundary conditions) is not guaranteed to be compatible with the direct integral structure. We now show that, indeed, this holds if and only if the invariance under longitudinal translations holds.

\begin{prop}
	$H_{\mathfrak{D}}$ commutes with $P_x$ if and only if $\hat{H}_{\mathfrak{D}}$ is decomposable.
\end{prop}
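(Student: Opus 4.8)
The plan is to pass to the $(k,y)$-representation, where $\hat{P}_x$ is multiplication by $k$, and to recognize the decomposability of $\hat{H}_{\mathfrak{D}}$ as the statement that its resolvent lies in the commutant of the diagonalizable operators. First I would observe that, since $\Fx$ is unitary and carries $P_x$ and $H_{\mathfrak{D}}$ into $\hat{P}_x$ and $\hat{H}_{\mathfrak{D}}$ respectively, and since commutativity of self-adjoint operators is defined through their PVMs and is therefore preserved under unitary conjugation, the claim reduces to proving that $\hat{H}_{\mathfrak{D}}$ commutes with $\hat{P}_x$ if and only if $\hat{H}_{\mathfrak{D}}$ is decomposable. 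Applying Proposition~\ref{prop:comm} with the roles of $A$ and $B$ played by $\hat{P}_x$ and $\hat{H}_{\mathfrak{D}}$, this commutation is in turn equivalent to
\begin{equation}
\bigl[f(\hat{P}_x),(\hat{H}_{\mathfrak{D}}+\iu)^{-1}\bigr]=0\qquad\textnormal{for all bounded Borel }f\colon\R\to\C\,.
\end{equation}
Since $\hat{P}_x$ acts as multiplication by $k$ (see Eq.~\eref{eq:hatp}), the operators $f(\hat{P}_x)$ are precisely the multiplications by bounded functions of $k$, that is, the \emph{diagonalizable} operators on $\Leb^2(\hat{\R};\Leb^2(I_W))$.

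The direction in which $\hat{H}_{\mathfrak{D}}$ is assumed decomposable is then immediate: if $\hat{H}_{\mathfrak{D}}=\int_{\R}^{\oplus}h(k)\dd{k}$, the structure theorem for direct integrals (Theorem~XIII.85 of~\cite{RS4}, already used in Proposition~\ref{prop-dirint}) gives $(\hat{H}_{\mathfrak{D}}+\iu)^{-1}=\int_{\R}^{\oplus}(h(k)+\iu)^{-1}\dd{k}$, which is a bounded decomposable operator and hence commutes with every diagonalizable $f(\hat{P}_x)$. The displayed commutation then holds, and by the reduction above so does the commutation of $H_{\mathfrak{D}}$ with $P_x$.

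For the converse I would invoke the commutant characterization of decomposable operators (Theorem~XIII.84 of~\cite{RS4}): a bounded operator on $\Leb^2(\hat{\R};\Leb^2(I_W))$ is decomposable exactly when it commutes with all diagonalizable operators. Applied to $(\hat{H}_{\mathfrak{D}}+\iu)^{-1}$, the displayed commutation forces the resolvent to be decomposable, say $(\hat{H}_{\mathfrak{D}}+\iu)^{-1}=\int_{\R}^{\oplus}R(k)\dd{k}$ for some weakly measurable family of bounded operators $R(k)$. The main obstacle, and the only genuinely technical point, is to climb back from the decomposability of the resolvent to that of $\hat{H}_{\mathfrak{D}}$ itself: one must check that for almost every $k$ the fiber $R(k)$ is injective with dense range, so that $h(k)\equiv R(k)^{-1}-\iu$ is a well-defined self-adjoint operator on $\dom_k\equiv\mathrm{ran}\,R(k)$, verify the weak measurability of $k\mapsto(h(k)+\iu)^{-1}=R(k)$ required by Definition~\ref{def-directint}, and finally confirm that $\int_{\R}^{\oplus}h(k)\dd{k}$ reproduces $\hat{H}_{\mathfrak{D}}$ rather than a proper extension. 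All of these follow from the self-adjointness of $\hat{H}_{\mathfrak{D}}$ together with the uniqueness part of the direct-integral correspondence in Theorem~XIII.85 of~\cite{RS4}, and they complete the proof.
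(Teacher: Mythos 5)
Your proof is correct and follows essentially the same route as the paper: the reduction to the $(k,y)$-representation by unitary equivalence, the application of Proposition~\ref{prop:comm}, and the combination of Theorems~XIII.84 and XIII.85 of~\cite{RS4} after identifying the operators $f(\hat{P}_x)$ with the diagonalizable operators on $\Leb^2(\hat{\R};\Leb^2(I_W))$. The only difference is cosmetic: the step you flag as the main technical obstacle---climbing back from decomposability of $(\hat{H}_{\mathfrak{D}}+\iu)^{-1}$ to decomposability of $\hat{H}_{\mathfrak{D}}$ itself---is precisely the content of Theorem~XIII.85(b), which the paper invokes directly as a biconditional, so none of the fiber-wise verifications you sketch need to be carried out by hand.
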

\begin{proof}
	By Theorem XIII.85(b) of~\cite{RS4}, $\hat{H}_{\mathfrak{D}}$ is decomposable if and only if $(\hat{H}_{\mathfrak{D}}+\iu)^{-1}$ is decomposable; in turn, by Theorem XIII.84 of~\cite{RS4}, the latter is decomposable if and only if it commutes with all bounded decomposable operators on $L^2(\hat{\mathbb{R}};L^2(I_W))$ whose fibers are multiple of the identity, i.e. with all operators that can be written as
	\begin{equation}\label{eq:tf}
	\int_{\mathbb{R}}^\oplus f(k)\dd{k}
	\end{equation}
	for some bounded Borel function $f$; by construction, however, such an operator is nothing but the multiplication operator on $L^2(\hat{\mathbb{R}};L^2(I_W))$ associated with $f$, which means that the operator in \eref{eq:tf} is in fact $f(\hat{P}_x)$. Consequently, Theorem XIII.84 of~\cite{RS4} can be restated as follows: $\hat{H}_{\mathfrak{D}}$ is decomposable if and only if, for all bounded Borel functions $f$,
	\begin{equation}
		\bigl[f(\hat{P}_x),(\hat{H}_{\mathfrak{D}}+\iu)^{-1}\bigr]=0\,;
	\end{equation}
	this, by Proposition~\ref{prop:comm}, holds if and only if $\hat{H}_{\mathfrak{D}}$ and $\hat{P}_x$ commute, and thus if and only if $H_{\mathfrak{D}}$ and $P_x$ commute. 
\end{proof}
\subsection{Floquet theorem and Bloch sectors}\label{sec:floquet}
In real systems the effects of the metal structure can be encoded by a perturbation of the Hamiltonian \eref{eq-HAE} which in the case of perfect crystalline structure is given by a periodic potential $V$. If we neglect the effect of this perturbation in the transverse direction the new Hamiltonian reads
\begin{equation}\label{eq-HAEV}
H_{\A, V}\equiv -\frac{\hbar^2}{2m}\gradA^{2}+e\Ef y\,+ V(x),
\end{equation}
where $V$ is a one-dimensional periodic function
\begin{equation}\label{eq-potential}
 V(x-a)=V(x),
\end{equation}
$a$ being the distance between crystal nodes.
The residual translation invariance of the Hamiltonian \eref{eq-HAEV}
implies that 
\begin{equation}\label{eq-dtrans}
[H_{\A, V}, T_a]=0,
\end{equation}
where $T_a$ is the operator 
\begin{equation}\label{eq-trans}
 T_a\psi(x)= \psi(x-a).
\end{equation}
that generates discrete translations. The set of these translations ${\cal{T}}_a=\{T_a^n, n\in \Z\}$ is thus an abelian symmetry group of the Hamiltonian \eref{eq-potential}.

Since ${\cal{T}}_a$ is abelian its irreducible unitary representations are one-dimensional and can be parametrized by a phase factor
\begin{equation}\label{eq-fase}
 T_a\psi(x)= e^{i\alpha}\psi(x)=e^{i\kappa a} \psi(x)\,,
\end{equation}
where $\kappa\in{[-\pi/ a, \pi/ a]}=B_a$,  $B_a$ being the Brillouin domain
of the Bloch phase $\kappa$.

Thus,  all energy levels of the Hamiltonian \eref{eq-HAEV} can be decomposed into irreducible representations of ${\cal{T}}_a$ satisfying the pseudoperiodic boundary conditions \eref{eq-fase}.

\begin{prop}
The Floquet theorem establishes that any self-adjoint extension of the Hall Hamiltonian \eref{eq-HAEV} defined on some domain $\mathfrak{D}\subset L^2(\mathbb{R})\otimes L^2(I_W)$ can be decomposed as a fibered sum of Hamiltonians $h_{\mathfrak{D},a}$
\begin{equation}
	H^{\mathfrak{D}}_{\A, V}\cong\int_{B_a}^{\oplus}h_{\mathfrak{D},a}(\kappa)\dd{\kappa}\,,
\end{equation}
defined on $L^2({\mathbb{S}}^1)\otimes L^2(I_W)$ by the boundary conditions of $H^{\mathfrak{D}}_{\A, V}$ and \eref{eq-fase}.
\end{prop}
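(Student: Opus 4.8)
The plan is to repeat, at the level of the \emph{discrete} translation group $\mathcal{T}_a$, the direct-integral argument already carried out for the continuous group generated by $P_x$, replacing the partial Fourier transform $\Fx$ by the \emph{Bloch--Floquet} (Zak) transform adapted to the period $a$. First I would introduce the unitary
\begin{equation}
\mathcal{Z}\colon \Leb^2(\R)\otimes\Leb^2(I_W)\longrightarrow \int_{B_a}^{\oplus}\bigl(\Leb^2(\mathbb{S}^1)\otimes\Leb^2(I_W)\bigr)\dd{\kappa}\,,
\end{equation}
acting on a dense set as
\begin{equation}
(\mathcal{Z}\psi)(\kappa,x,y)=\sqrt{\frac{a}{2\pg}}\sum_{n\in\Z}\ee^{-\iu\kappa n a}\,\psi(x-na,y)\,,
\end{equation}
and verify that it is an isometric isomorphism onto the stated direct integral, whose fibers consist precisely of the functions obeying the pseudoperiodic condition \eref{eq-fase} (so that the fundamental cell becomes the circle $\mathbb{S}^1$). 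The decisive structural fact, the exact analogue of $\hat P_x=\int^{\oplus}_{\R}k\dd{k}$ in \eref{eq:hatp}, is that $\mathcal{Z}$ diagonalizes the discrete translation,
\begin{equation}
\mathcal{Z}\,T_a\,\mathcal{Z}^{-1}=\int_{B_a}^{\oplus}\ee^{\iu\kappa a}\dd{\kappa}\,,
\end{equation}
i.e.\ $T_a$ becomes a decomposable operator whose fibers are multiples of the identity.

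Next I would make precise the formal identity \eref{eq-dtrans}. For the unbounded self-adjoint extension $H^{\mathfrak{D}}_{\A,V}$ and the \emph{unitary} $T_a$, the correct statement is that $T_a$ leaves $\mathfrak{D}$ invariant and commutes with $H^{\mathfrak{D}}_{\A,V}$ on it, equivalently $\bigl[(H^{\mathfrak{D}}_{\A,V}+\iu)^{-1},T_a\bigr]=0$. Since $T_a$ is unitary, conjugating this relation by $T_a^{-1}$ shows that $(H^{\mathfrak{D}}_{\A,V}+\iu)^{-1}$ also commutes with $T_a^{-1}$, hence, by the bounded Borel functional calculus of the normal operator $T_a$, with every $f(T_a)$; this is the analogue of Proposition~\ref{prop:comm}. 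Under $\mathcal{Z}$ the operators $f(T_a)$ are, by the diagonalization above, exactly the bounded decomposable operators with scalar fibers $\int_{B_a}^{\oplus}f(\ee^{\iu\kappa a})\dd{\kappa}$. Invoking Theorems~XIII.84 and~XIII.85 of~\cite{RS4} just as in the preceding proposition, commutation of $(\mathcal{Z}H^{\mathfrak{D}}_{\A,V}\mathcal{Z}^{-1}+\iu)^{-1}$ with all such operators is equivalent to decomposability, whence
\begin{equation}
\mathcal{Z}\,H^{\mathfrak{D}}_{\A,V}\,\mathcal{Z}^{-1}=\int_{B_a}^{\oplus}h_{\mathfrak{D},a}(\kappa)\dd{\kappa}
\end{equation}
for some measurable family of self-adjoint fibers $h_{\mathfrak{D},a}(\kappa)$ on $\Leb^2(\mathbb{S}^1)\otimes\Leb^2(I_W)$.

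Finally I would identify the fibers: on each of them $h_{\mathfrak{D},a}(\kappa)$ acts as the same differential expression \eref{eq-HAEV}, now on the cell with the Bloch condition \eref{eq-fase} in the variable $x$ and with the transversal boundary condition inherited from $\mathfrak{D}$ in the variable $y$, the $y$-sector riding along passively through the tensor structure.

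The main obstacle I expect is not the abstract decomposition, which is essentially a transcription of the $P_x$ argument, but making the two domain issues watertight: first, showing that the pseudoperiodic boundary conditions \eref{eq-fase} are correctly encoded by the fibers of $\mathcal{Z}$ (so that the fundamental cell genuinely becomes $\mathbb{S}^1$), and second, establishing the measurability of $\kappa\mapsto(h_{\mathfrak{D},a}(\kappa)+\iu)^{-1}$, which requires a resolvent-regularity estimate analogous to the one of~\ref{sec-resolvent}. A further subtlety, as in the continuous case, is that \eref{eq-dtrans} need not hold for an arbitrary self-adjoint extension: only those extensions whose domain $\mathfrak{D}$ is $T_a$-invariant are decomposable, and the Floquet theorem should be read as characterizing precisely this class.
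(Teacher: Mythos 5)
Your proposal is correct, but it is genuinely a different route from the paper's, for the simple reason that the paper offers no proof at all: Proposition in~\ref{sec:floquet} is stated as an invocation of the classical Floquet--Bloch theorem, justified only by the preceding representation-theoretic sketch (the symmetry group $\mathcal{T}_a$ is abelian, its irreducible unitary representations are one-dimensional and labelled by the Bloch phase $\kappa\in B_a$ via \eref{eq-fase}, hence the energy levels decompose over these sectors). You instead construct an actual proof by transplanting the direct-integral argument of~\ref{sec-symmetry} to the discrete group, with the Zak transform $\mathcal{Z}$ playing the role of $\Fx$ and the scalar decomposition $\mathcal{Z}T_a\mathcal{Z}^{-1}=\int_{B_a}^{\oplus}\ee^{\iu\kappa a}\dd{\kappa}$ playing the role of \eref{eq:hatp}; your computation of this diagonalization is correct, and since $\kappa\mapsto\ee^{\iu\kappa a}$ is (up to a null set) a bijection of $B_a$ onto the circle, the operators $f(T_a)$ do exhaust the bounded decomposable operators with scalar fibers, so the appeal to Theorems~XIII.84 and XIII.85 of~\cite{RS4} goes through exactly as in the continuous case. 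Two of your refinements are worth highlighting as improvements on the paper's statement. First, you correctly bypass Proposition~\ref{prop:comm}, which is formulated for \emph{pairs of self-adjoint} operators, by using the functional calculus of the normal (unitary) operator $T_a$ directly: commutation of $(H^{\mathfrak{D}}_{\A,V}+\iu)^{-1}$ with $T_a$ gives commutation with $T_a^{-1}=T_a^{*}$ and hence with all $f(T_a)$ — this is the honest analogue of that proposition and the paper never supplies it. Second, your closing caveat is a genuine correction: as the paper's own continuous-translation analysis in~\ref{sec-symmetry} demonstrates, boundary conditions can break translational symmetry, so ``any self-adjoint extension'' in the statement is literally too strong; the decomposition holds precisely for those $\mathfrak{D}$ that are $T_a$-invariant, which the paper tacitly assumes by taking $y$-boundary conditions independent of $x$. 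One simplification you can allow yourself: the measurability of $\kappa\mapsto(h_{\mathfrak{D},a}(\kappa)+\iu)^{-1}$ does not need a separate estimate in the style of~\ref{sec-resolvent} — it comes for free from Theorem~XIII.85, which produces a measurable family of self-adjoint fibers once decomposability of the resolvent is established; a resolvent-regularity argument would only be required if you wanted \emph{continuity} of the Bloch bands in $\kappa$, as in the band-structure proposition of Section~\ref{sec-generalspectrum}.
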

The fibered sum of Hamiltonians is associated with  energy bands. Usually, the spectral structure is given by these continuous bands separated by spectral gaps which give rise to the insulator regime. However, in the special case $V=0$ considered in the previous appendix the period $a$ is arbitrary and there is no gaps between the Bloch bands associated with the Floquet decomposition, i.e. the spectrum of  $	H^{\mathfrak{D}}_{\A, V}$ contains the half-line $[E_0,\infty)$, with the value $E_0$ depending on the boundary conditions.

\section{Quantum Hall conductivity}\label{sec-conductivity}

In this appendix we give  an alternative derivation of the formula for the conductance~\eref{eq-Gxy}
which does not make use of Drude theory.

When the electric field is given by $\Eb=(0,\Ef,0)$, the Hall conductivity is defined by the ratio $\sigma_{xy}=\partial J/\partial\Ef$, $J=-env_{x}$ being the longitudinal current density. The current density of a quantum system can be computed by summing the velocity of each state below the Fermi energy $E\sub{F}$, that is,
\begin{eqnarray}\nonumber
\hskip-30pt
J(E\sub{F})&\equiv -\frac{e}{V} \sum_{n=0}^{+\infty} \int_{-\infty}^{E\sub{F}-\hbar k\frac{\Ef}{B}-\frac{m}{2}\frac{\Ef^2}{B^2}}\dd{E}\int_{-\infty}^{+\infty} \ddelta(E-E_n(k))\,v_n(k)\frac{L}{2\pg}\dd{k}\\
&=-\frac{e}{V} \sum_{n=0}^{+\infty} \int_{-\infty}^{+\infty}  \htheta\Biggl(E\sub{F}-\hbar k\frac{\Ef}{B}-\frac{m}{2}\frac{\Ef^2}{B^2}-E_n(k)\Biggr)\,v_n(k)\frac{L}{2\pg}\dd{k}\nonumber \,,
\end{eqnarray}
where $v_n(k)=\dd{E_n}/(\hbar \dd{k})$ is the $n$-th band velocity defined in Eq.~\eref{eq-vnk}. Note that this current density depends on $\Ef$  through both $E_n(k)$ and $v_n(k)$. Thus,
%By exploiting the relation between the electric spectrum $E_n(k; B, \Ef)$ and the non-electric one $E_n^0(k)\equiv E_n(k; B, 0)$, see Eqs.~\eref{eq-EnElectic} and \eref{eq-vnkElectric}, we can write the current $J(E\sub{F})$ as the sum of two terms
\begin{eqnarray}\nonumber
\hskip-30pt
J(E\sub{F})&=-\frac{e}{V} \sum_{n=0}^{+\infty} \int_{-\infty}^{+\infty}  \htheta\Biggl(E\sub{F}-\hbar k\frac{\Ef}{B}-\frac{m}{2}\frac{\Ef^2}{B^2}-E_n(k)\Biggr)\,\frac{\partial{E_n}}{\partial{k}}\frac{L}{2 \pg\hbar}\dd{k}\,,\\
&=-\frac{e L}{h V } \sum_{n=0}^{+\infty} (E_n(k_{n, +})- E_n(k_{n, -})) =\frac{e L}{2\pi V B } \sum_{n=0}^{+\infty}  {\cal{E}}\Delta k_n
\end{eqnarray}
where $k_{n,\pm}$ are the solutions of the equation
\begin{equation}
E_n(k_{n, \pm})= E\sub{F}-\hbar\frac{\Ef}{B}  k_{n, \pm}-\frac{m}{2}\frac{\Ef^2}{B^2}
\end{equation}
and
\begin{equation}
\Delta k_n=k_{n, +}-k_{n, -}.\end{equation}

The Hall conductivity is then given by
\begin{equation}
\sigma_{xy}=\frac{\partial J} {\partial\Ef}=\frac{e L}{2\pi V B} \sum_{n=0}^{+\infty}  {}\Delta k_n + {\cal O}({\cal E})
\end{equation}
and the Hall conductance by 
\begin{equation}
G_{xy}=\frac {V}{W L} \sigma_{xy}=\frac{e}{2\pi W B} \sum_{n=0}^{+\infty}  {}\Delta k_n  + {\cal O}({\cal E}).
\end{equation}
%\begin{equation}
% J_0(E\sub{F})=-\frac{e}{V} \sum_{n=0}^{+\infty} \int_{-\infty}^{+\infty} \htheta\biggl(E\sub{F}-E_n^{0}(k)\biggr)v_n^{0}(k)\frac{L}{2\pg}\dd{k}\,,
%\end{equation}
%where $v_n^0(k)\equiv v_n(k; B, 0)$, and
%\begin{equation}
%J_1(E\sub{F})=-\frac{e}{V} \sum_{n=0}^{+\infty} \int_{-\infty}^{+\infty} \htheta(E\sub{F}-E_n(k))\frac{\Ef}{B}\frac{L}{2\pg}\dd{k}\,.
%\end{equation}
%Observing that $v_n^0(k)$ is an odd function of $k$, the relevant term for the conduction is just $J_1(E\sub{F})$, which is indeed proportional to the drift velocity $\Ef/B$. We thus set for the conductance
%\begin{eqnarray}
%G_{xy}(E\sub{F})&\equiv -\frac{A}{W}\frac{J(E\sub{F})-J_0(E\sub{F})}{\Ef}\nonumber\\
%&=\frac{e^2}{h}\sum_{n=0}^{+\infty}\frac{\hbar}{eBW}\int_{-\infty}^{+\infty} \htheta(E\sub{F}-E_n(k))\dd{k} \,,
%\end{eqnarray}
%which is exactly the expression~\eref{eq-Gxy} obtained by using the Drude formula.
\section{Fibered boundary conditions in the position representation}\label{sec-position}
In this appendix we sketch how fibered Robin boundary conditions with a $k$-independent Robin parameter $\alpha$ actually correspond to \emph{global} Robin boundary conditions (in the position representation), also discussing the less trivial case in which $\alpha=\alpha(k)$ depends linearly on $k$.

For a suitably regular function $\psi(x,y)$ living on the strip $\Os=\R\times I_W$, let us introduce the \emph{chiral boundary condition}
\begin{equation}\label{eq-chiral}
\pm \frac{\partial\psi}{\partial y}(x,y)\biggl|_{y=\pm W/2}=\alpha_0 \psi(x,\pm W/2)-\iu\alpha_1\frac{\partial\psi}{\partial x}(x,\pm W/2)\,,
\end{equation}
where $\alpha_0,\alpha_1\in\R$ (see \eg~\cite{quanBill} for a discussion of its physical significance); note in particular that chiral conditions reduce to Robin conditions if $\alpha_1=0$. The above expression is manifestly invariant under longitudinal translations, which are indeed generated by the longitudinal momentum operator $p_x=-\iu\hbar \partial/\partial_x$. Therefore, we expect chiral boundary conditions to be associated with some fibered boundary conditions, in the mixed $(k,y)$-representation. To show this, let us take
\begin{equation}
\psi(x,y)=(\mathcal{F}_x^{-1}\phi)(x,y)=\frac{1}{\sqrt{2\pg}}\int_{\R} \e^{\iu kx}\phi(k,y)\dd{k}\,.
\end{equation}
By plugging this in Eq.~\eref{eq-chiral}, we get the following boundary condition for $\phi(k,y)$:
\begin{equation}
\pm \frac{\partial \phi}{\partial y}(k,y)\biggl|_{y=\pm W/2}=(\alpha_0+\alpha_1 k)\phi(k,\pm W/2)\,.
\end{equation}
This boundary condition can thus be understood as a fibered Robin condition with an effective Robin parameter $\alpha(k)=\alpha_0+\alpha_1k$ depending linearly on $k$; in particular, it reduces to the $k$-independent fibered Robin condition \eref{eq-Robin} when $\alpha_1=0$, as expected.

\section*{References}


\begin{thebibliography}{99}

\bibitem{KliDoPe80} von Klitzing K., Dorda D. and Pepper M. 1980 New Method for High-Accuracy Determination of the Fine-Structure Constant Based on Quantized Hall Resistance {\it Phys. Rev. Lett.} \textbf{45} 494
%, doi: 10.1103/PhysRevLett.45.494

\bibitem{Halp82} Halperin B. I. 1982 Quantized Hall conductance, current-carrying edge states, and the existence of extended states in a two-dimensional disordered potential {\it Phys. Rev. B} \textbf{25} 2185
%, doi: 10.1103/PhysRevB.25.2185

\bibitem{Butt88} B\"{u}ttiker M. 1988 Absence of backscattering in the quantum Hall effect in multiprobe conductors {\it Phys. Rev. B} \textbf{38} 9375
%, doi: 10.1103/PhysRevB.38.9375

\bibitem{Kohmo85} Kohmoto M. 1985 Topological Invariant and the Quantization of the Hall Conductance {\it Ann. Phys.} \textbf{160} 343
%, doi: 10.1016/0003-4916(85)90148-4

\bibitem{Hatsu93} Hatsugai Y. 1993 Edge states in the integer quantum Hall effect and the Riemann surface of the Bloch function {\it Phys. Rev. B} \textbf{48} 11851
%, doi: 10.1103/Phys-RevB.48.11851

\bibitem{Hatsu93b} Hatsugai Y. 1993 Chern Number and Edge States in the Integer Quantum Hall Effect {\it Phys. Rev. Lett.} \textbf{71} 3697
%, doi: 10.1103/PhysRevLett.71.3697

\bibitem{Hatsu97} Hatsugai Y. 1997 Topological aspects of the quantum Hall effect {\it J. Phys.: Condens. Matter} \textbf{9} 2507
%, doi: 10.1088/0953-8984/9/12/003

\bibitem{Wang97} Wang W. M. 1997 Microlocalization, Percolation, and Anderson Localization for the Magnetic Schr\"{o}dinger Operator with a Random Potential {\it J. Funct. Anal.} \textbf{146} 1 %--26

\bibitem{ZhaHan89} Zhang S. C., Hansson T. H. and Kivelson S. 1989 Effective-Field-Theory Model for the Fractional Quantum Hall Effect {\it Phys. Rev. Lett.} \textbf{62} 82

\bibitem{CaCha91} Cabo A. and Chaichian M. 1991 Field-theory approach to the quantum Hall effect {\it Phys. Rev. B} \textbf{44} 10768 

\bibitem{HaKa10} Hasan M. Z. and Kane C. L. 2010 Colloquium: Topological insulators {\it Rev. Mod. Phys.} \textbf{82} 3045
%, doi: 10.1103/RevModPhys.82.3045

\bibitem{QiZha11} Qi X. L. and Zhang S. C. 2011 Topological insulators and superconductors {\it Rev. Mod. Phys.} \textbf{83} 1057
%, doi: 10.1103/RevModPhys.83.1057

\bibitem{AvrOsa03} Avron J. E., Osadchy D. and Seiler R. 2003 A Topological Look at the Quantum Hall Effect {\it Physics Today} \textbf{56} 38 %--42
%,  doi: 10.1063/1.1611351

\bibitem{KliCha20} von Klitzing K., Chakraborty T., Kim, P. {\it et al.} 2020 40 years of the quantum Hall effect {\it Nat Rev Phys} \textbf{2} 397 %--401
%, doi: 10.1038/s42254-020-0209-1

\bibitem{ChaPie95} Chakraborty T. and Pietil\"ainen P. 1995 {\it The Quantum Hall Effects} vol 85 (Berlin, Heidelberg: Springer Berlin Heidelberg)

\bibitem{Yo02} Yoshioka D. 2002, {\it The Quantum Hall Effect} (Springer-Verlag Berlin Heidelberg)

\bibitem{Tong16} Tong D. 2016 Lectures on the Quantum Hall Effect arXiv:1606.06687 [hep-th]

\bibitem{NiuTho87} Niu Q. and Thouless D. J. 1987 Quantum Hall effect with realistic boundary conditions {\it Phys. Rev. B} \textbf{35} 5
%, doi:   10.1103/PhysRevB.35.2188

\bibitem{JJV95}%
John V., Jungman G. and Vaidya S. 1995 The renormalization group and quantum edge states {\it Nuclear Physics B} \textbf{455} 3 505 %--521
%, doi: 10.1016/0550-3213(95)00449-3

\bibitem{AANS98} Akkermans E.,  Avron J. E.,  Narevich R. and Seiler R. 1998 Boundary conditions for bulk and edge states in Quantum Hall systems {\it Eur. Phys. J. B} \textbf{1} 117 %--121
%, doi: 10.1007/s100510050160

\bibitem{DeBP99} De Bi\`{e}vre S. and Pul\'{e} J. V. 2002 Propagating Edge States for a Magnetic Hamiltonian {\it Electron. J. Math. Phys} 39 %--55
%, doi: 10.1142/9789812777874\_0003

\bibitem{CoHiSo02} Combes J. M., Hislop P. D. and Soccorsi E. 2002 Edge states for quantum Hall Hamiltonians, in: {\it Mathematical Results
in Quantum Mechanics} (American Mathematical Soc.)

\bibitem{MU17} Malki M. and Uhrig G. S. 2017 Tunable dispersion of the edge states in the integer quantum Hall effect {\it SciPost Phys.} \textbf{3} 032
%, doi: 10.21468/SciPostPhys.3.4.032

\bibitem{AsoAlvMun06} Asorey M., Garc\'{i}a \'{A}lvarez D. and Mu\~{n}oz-Casta\~{n}eda J. M. 2006 Casimir Effect and Global Theory of Boundary Conditions {\it J. Phys. A: Math. Gen.} \textbf{39} 6127

\bibitem{AsoAlvMun07} Asorey M., Garc\'{i}a \'{A}lvarez D. and Mu\~{n}oz-Casta\~{n}eda J. M. 2007 Vacuum energy and renormalization on the edge {\it J. Phys. A: Math. Theor.} \textbf{40} 6767

\bibitem{AsMu13} Asorey M.,  Mu\~{n}oz-Casta\~{n}eda J. M. 2013 Attractive and repulsive Casimir vacuum energy with general boundary conditions {\it Nucl. Phys. B.} \textbf{874} 3 852--876

\bibitem{AsBaPe16} Asorey M., Balachandran A. P. and P\'{e}rez-Pardo J. M. 2016 Edge states at phase boundaries and their stability {\it Rev. Math. Phys.} \textbf{28} 1650020

\bibitem{BaBiMa95} Balachandran A. P., Bimonte G.,  Marmo G. and Simoni A. 1995 Topology change and quantum physics {\it Nucl. Phys. B} \textbf{446} 229 %--314

\bibitem{IboPere15} P\'{e}rez-Pardo J. M., Barbero-Li\~{n}\'{a}n M. and Ibort A. 2015 Boundary dynamics and topology change in quantum mechanics {\it Int. J. Geom. Methods Mod. Phys.} \textbf{12}  1560011 



\bibitem{FaGaMa16}
Facchi P., Garnero G., Marmo G. and Samuel J. 2016 Moving walls and geometric phases {\it Ann. Phys.}  \textbf{372} 201-–214

\bibitem{AsBaPe13}
Asorey M., Balachandran A. P. and  P\'{e}rez-Pardo J. M. 2013 Edge states: topological insulators, superconductors and QCD chiral bags {\it J. High Energ. Phys.} \textbf{2013} 73

\bibitem{isob} Angelone G.,  Facchi P. and Marmo G. 2022 Hearing the shape of a quantum boundary condition \emph{Mod. Phys. Lett. A} \textbf{37}  2250114

\bibitem{LaKu21}
Ławniczak M., Kurasov P., Bauch S. {\it et al.} 2021 A new spectral invariant for quantum graphs  {\it Sci. Rep.} \textbf{11} 15342 

\bibitem{AsoMun12} Asorey M. and Mu\~{n}oz-Casta\~{n}eda J. M. 2012 Boundary effects in quantum physics {\it Int. J. Geom. Methods Mod. Phys.} \textbf{09} 1260017 

\bibitem{IbMaPe14}
Ibort A., Marmo G.  and  P\'{e}rez-Pardo J. M. 2014 Boundary dynamics driven entanglement {\it J. Phys. A: Math. Theor.} \textbf{47} 38

\bibitem{JZ07} Jiang Z., Zhang T., Tan Y. W., Stormer H. L. and Kim P. 2007 Quantum Hall effect in graphene {\it Solid State Commun.} \textbf{143} 14 %--19
%, doi: 10.1016/j.ssc.2007.02.046

\bibitem{No07}
Novoselov K. S. {\it et al.} 2007 Room-Temperature Quantum Hall Effect in Graphene {\it Science} \textbf{315} 5817
%, doi: 10.1126/science.1137201

\bibitem{Lein83} Leinfelder H. 1983 Gauge invariance of Schr\"{o}dinger operators and related spectral properties {\it J. Operator Theory} \textbf{9} 1 163 %--179

\bibitem{quanBill} Angelone G., Facchi P. and Lonigro D. 2022 Quantum magnetic billiards: boundary conditions and gauge transformations, \emph{Ann. Phys.} \textbf{442} 168914 

\bibitem{AIM15}
Asorey M., Ibort A. and Marmo G. 2015 The Topology and Geometry of self-adjoint and elliptic boundary conditions for Dirac and	Laplace operators {\it Int. J. Geom. Methods Mod. Phys.} \textbf{12} 1561007
%, doi: 10.1142/S0219887815610071

\bibitem{HNVW} Hyt\"{o}nen T., van Neerven J., Veraar M. and Weis L. 2016 {\it Analysis in Banach Spaces} Vol. 1 (Springer International Publishing)

\bibitem{Gru68}%
Grubb G. 1968 A characterization of the non local boundary value problems associated with an elliptic operator {\it Ann. Sc. Norm. Super. Pisa Cl. Sci.} \textbf{3} %22 3 
425 %--513


\bibitem{FaGaLi18}%
Facchi P., Garnero  G. and Ligabò  M. 2018 Self-adjoint extensions and unitary operators on the boundary {\it Lett. Math. Phys.} \textbf{108} 195 %--212
%, doi:10.1007/s11005-017-1001-8

\bibitem{Gru12}%
Grubb G. 2012 {\it Extension theory for elliptic partial differential operators with pseudodifferential methods} London Mathematical Society Lecture Note Series (Cambridge University Press)
%, doi:10.1017/CBO9781139135061.009

\bibitem{deO08}
 de Oliveira C. R. 2008 {\it Intermediate Spectral Theory and Quantum Dynamics} Progress in Mathematical Physics  (Birk\"auser Boston)
 %, doi: 10.1007/978-3-7643-8795-2

\bibitem{AIM05} Asorey M., Ibort A. and Marmo G. 2005 Global Theory of Quantum Boundary Conditions and Topology Change  {\it Int. J. Mod. Phys. A} \textbf{20} 1001
%, doi: 10.1142/S0217751X05019798

\bibitem{BFV01} Bonneau G., Faraut J. and Valent G. 2001 Self-adjoint extensions of operators and the teaching of quantum mechanics {\it Am. J. Phys.} \textbf{69}  321 %--331
%, doi: 10.1119/1.1328351

\bibitem{RS4}
Reed R. and Simon B. 1978 {\it Methods of Modern Mathematical Physics} vol. IV (Academic Press, New York)

\bibitem{Ze05} Zettl A. 2005 {\it Sturm-Liouville theory} (American Mathematical Soc.)
















\bibitem{GRD06} Gueorguiev V. G., Rau A R P and Draayer J P 2006 Confined one-dimensional harmonic oscillator as a two-mode system {\it Am. J. Phys.} \textbf{74} 394
%, doi: 10.1119/1.2173270

\bibitem{VS04}%
Vallee O. and Soares M. 2004 \emph{Airy Functions and Applications to Physics} (Imperial College Press, London)
%, doi: 10.1142/9781860945359








\bibitem{EFC} Esteve J. G., Falceto F. and Garcia Canal C. 2010 Generalization of the Hellmann-Feynman theorem {\it Phys. Lett. A} \textbf{374} 6

\bibitem{ZG} Zhang G. P. and George T. F. 2004 Extended Hellmann-Feynman theorem for degenerate eigenstates {\it Phys. Rev. B} \textbf{69} 16



\bibitem{KaHiNa93} Kawaji S., Hirakawa K. and Nagata M. 1993 Device-width dependence of plateau width in quantum Hall states {Physica B: Condensed Matter} \textbf{184} 17 %--20

\bibitem{Nach99} Nachtwei G. 1999 Breakdown of the quantum Hall effect {\it Phys. E: Low-Dimens. Syst. Nanostructures} \textbf{4} 2

\bibitem{Kr04}
Kramer T., Bracher C. and Kleber M. 2004 Electron propagation in crossed magnetic and electric fields  {\it J. Opt. B: Quantum Semiclass. Opt.} \textbf{6} 21
%, doi:  10.1088/1464-4266/6/1/004

\bibitem{Kr06} Kramer T. 2006 A heuristic quantum theory of the quantum Hall effect {\it Int. J. Mod. Phys. B} \textbf{20} 1243 %--1260
%, doi: 10.1142/S0217979206033784

\bibitem{Ve11} Venturelli D. 2011, {\it Channel Mixing and Spin Transport in the Integer Quantum Hall Effect} (Ph.D. thesis, SISSA)


\bibitem{DiAF13} Di Martino S., Anz\`a F.,  Facchi P., Kossakowski A.,  Marmo G.,  Messina A.,  Militello B. and Pascazio S. 2013 A quantum particle in a box with moving walls {\it J. Phys. A: Math. Theor.} \textbf{46} 102001
%, doi: 10.1088/1751-8113/46/36/365301

\bibitem{LS91} Levitan B. M. and Sargsjan I. S. 1991 {\it Sturm-Liouville and Dirac operators} (Springer, Dordrecht)

\bibitem{BeWa10} Belchev B. and  Walton M. A. 2010 On Robin boundary conditions and the Morse potential in quantum mechanics {\it J. Phys. A: Math. Theor.} 43 085301

\bibitem{Aso19} 
Asorey M. 2019 Bulk-Edge Dualities in Topological Matter, in {\it Springer Proceedings in Physics} (Springer International Publishing)


\bibitem{Ba53} Bateman H. 1953 {\it Higher Transcendental Functions} vol. 2 (McGraw-Hill Book Company, New York) 

\bibitem{AS64} Abramowitz M. and Stegun I. A. 1964 {\it Handbook of Mathematical Functions} (U.S. Government Printing Office, Washington D.C.) 

\bibitem{AlbPan05} Albeverio S. and Pankrashkin K. 2005 A remark on Krein's resolvent formula and boundary conditions {\it J. Phys. A: Math. Gen.} \textbf{38} 4859
%, doi: 10.1088/0305-4470/38/22/010

\bibitem{Teschl} Teschl G. 2014 {\it Mathematical methods in quantum mechanics} (Providence, Rhode Island: American Mathematical Society) 

\bibitem{RS1}
Reed R. and Simon B. 1972 {\it Methods of Modern Mathematical Physics} vol. I (Academic Press, New York)



\end{thebibliography}
\end{document}